\newtheorem{theorem}{Theorem}[section]
\newtheorem{lemma}[theorem]{Lemma}
\newtheorem{meta-theorem}[theorem]{Meta-Theorem}
\newtheorem{definition}[theorem]{Definition}
\newcommand{\FullOrShort}{full}
  \newcommand{\fullOnly}[1]{#1}
  \newcommand{\shortOnly}[1]{}
    \newcommand{\fullOnly}[1]{}
    \newcommand{\shortOnly}[1]{#1}
\definecolor{darkgreen}{rgb}{0,0.5,0}
\definecolor{dkblue}{rgb}{0,0,0.6}
\crefname{theorem}{Theorem}{Theorems}
\Crefname{lemma}{Lemma}{Lemmas}
\Crefname{observation}{Observation}{Observations}
\Crefname{equation}{}{}
\algnewcommand\algorithmicswitch{\textbf{switch}}
\algnewcommand\algorithmiccase{\textbf{case}}
\newcommand{\eps}{\varepsilon}
\newcommand{\F}{{\mathcal{F}}}
\newcommand{\R}{\mathbb{R}}
\newcommand{\C}{\mathcal{C}}
\renewcommand{\S}{\mathcal{S}}
\newcommand{\M}{\mathcal{M}}
\newcommand{\U}{\mathcal{U}}
\renewcommand{\H}{\mathcal{H}}
\newcommand{\poly}{{\rm poly}}
\newcommand{\dist}{\mathop{\mbox{\rm dist}}\nolimits}
\newcommand{\cost}{\mathop{\mbox{\rm cost}}\nolimits}
\providecommand{\minl}{\ensuremath{\text{\textrm{min-$\ell$}}}}
\providecommand{\poly}{{\rm poly}}
\providecommand{\polyloglog}{\poly\log\log}
\newcommand{\set}[1]{\left\{#1\right\}}
\newcommand{\balpha}{\bar{\alpha}}
\newcommand{\bbeta}{\bar{\beta}}
\renewcommand{\paragraph}[1]{\medskip\noindent {\bf #1}}
\DeclarePairedDelimiter{\abs}{\lvert}{\rvert}
\let\oldabs\abs
\def\abs{\@ifstar{\oldabs}{\oldabs*}}
\date{}
\title{\huge An Efficient Massively Parallel Constant-Factor Approximation Algorithm for the $k$-Means Problem}
\author{%
  \thanks{, \texttt{}}%
  \and
  Fabian Kuhn\thanks{University of Freiburg, \texttt{bob@insty.edu}}%
  \and
  Carol C. Creator\thanks{Dept.\ of Mathematics, University Z, \texttt{carol@univz.edu}}%
}
\author{
    \begin{minipage}{0.25\textwidth}
        \centering
        Vincent Cohen-Addad\\
        \small Google Research\\
        \small \href{mailto:cohenaddad@google.com}{\color{black}cohenaddad@google.com}
    \end{minipage}
    \and
    \begin{minipage}{0.23\textwidth}
        \centering
        Fabian Kuhn\\
        \small University of Freiburg\\
        \small \href{mailto:kuhn@cs.uni-freiburg.de}{\color{black}kuhn@cs.uni-freiburg.de}
    \end{minipage}
    \and
    \begin{minipage}{0.33\textwidth}
        \centering
        Zahra Parsaeian\\
        \small University of Freiburg\\
        \small \href{mailto:zahra.parsaeian@cs.uni-freiburg.de}{\color{black}zahra.parsaeian@cs.uni-freiburg.de}
    \end{minipage}
}
\date{}
\begin{document}
\maketitle

\setcounter{page}{0}
\thispagestyle{empty}
\begin{abstract}
    In this paper, we present an efficient massively parallel approximation algorithm for the $k$-means problem. Specifically, we provide an MPC algorithm that computes a constant-factor approximation to an arbitrary $k$-means instance in $O(\log\log n \cdot \log\log\log n)$ rounds. The algorithm uses $O(n^\sigma)$ bits of memory per machine, where $\sigma > 0$ is a constant that can be made arbitrarily small. The global memory usage is $O(n^{1+\eps})$ bits for an arbitrarily small constant $\eps > 0$, and is thus only slightly superlinear. Recently, Czumaj, Gao, Jiang, Krauthgamer, and Vesel\'{y} showed that a constant-factor bicriteria approximation can be computed in $O(1)$ rounds in the MPC model. However, our algorithm is the first constant-factor approximation for the general $k$-means problem that runs in $o(\log n)$ rounds in the MPC model.

    Our approach builds upon the foundational framework of Jain and Vazirani. The core component of our algorithm is a constant-factor approximation for the related facility location problem. While such an approximation was already achieved in constant time in the work of Czumaj et al.\ mentioned above, our version additionally satisfies the so-called Lagrangian Multiplier Preserving (LMP) property. This property enables the transformation of a facility location approximation into a comparably good $k$-means approximation.
\end{abstract}

\newpage

\section{Introduction}\label{sec:intro}
The classic $k$-means objective is widely used to model clustering problems in data mining and machine learning applications. Given a set of points in a metric space 
the $k$-means problem asks to identify $k$ points from the metric, called \emph{centers}, 
so as to minimize the sum of the squared distances from the input points to their closest
center. Since its introduction in the late 50s by Lloyd~\cite{lloyd1982least} and 
Max~\cite{max1960quantizing} the $k$-means problem has received a tremendous amount of 
attention in a variety of communities to, e.g., model compression problems or uncover underlying structure in datasets. 

In the overwhelming majority of these applications, the data lies in Euclidean space. In this context, the $k$-means problem has a very natural clustering formulation: The optimum
centers are the means of the clusters they represent and the goal is to minimize some 
notion of \emph{dispersion} within clusters. 
For this reason, a long line of work on approximating $k$-means on massive datasets has emerged. Since the early 2000s, people have studied the $k$-means problem in streaming~\cite{SYZ18,CharikarOP03,AilonJM09}, dynamic~\cite{Meyerson01,Cohen-AddadGKR21}, and distributed/parallel settings \cite{Cohen-AddadMZ22,CzumajGJK024,Cohen-AddadEMNZ22,Cohen-AddadLNSS21,BlellochT10,BlellochGT12,EIM11,BLK18,BEL13}. 

Arguably, the parallel and distributed setting remains the model for which upper and lower bounds 
for the $k$-means problem remain frustratingly loose. Hence, the last few years have witnessed several works aiming
at filling this gap in our understanding of the complexity of the $k$-means problem in this context. 
Focusing on the Massively Parallel Computation (MPC) model~\cite{KarloffSV10,MPC_algandapplications}, people have studied the 
following question.
\begin{mdframed}[hidealllines=true, backgroundcolor=gray!15]
\begin{center}
\textbf{How well can the $k$-means problem be approximated in the MPC model?}\ \\
\end{center}
\end{mdframed}

The question is particularly interesting if we ask for algorithms with sublogarithmic time complexities and if we ask for fully scalable algorithms for which the memory per parallel machine can be made $O(n^\sigma)$ for an arbitrarily small constant $\sigma>0$.
If $k$ is sufficiently small, then some progress was made in \cite{EIM11}. The paper shows that if the memory per machine is at least $O(k^2\cdot n^\eps)$ for some constant $\eps>0$, the related $k$-median and $k$-center problems can be approximated within a constant factor in $O(1)$ MPC rounds.\footnote{The $k$-median problems asks to minimize the sum of the distances as opposed to 
the sum of the squared distances and the $k$-center problem asks to minimize the maximum distance. The $k$-median and $k$-center problems tend to be easier to tackle as not all techniques that apply to $k$-median or $k$-center also work for the $k$-means problem.} The authors of \cite{EIM11} conjecture that using similar techniques, a corresponding result can also be proved for the $k$-means problem. Even if we assume this to be true, the general problem where $k$ can be large remains open.
Unfortunately, modern data mining and machine 
learning applications focus on massive datasets and high-dimensional inputs, see e.g.~\cite{pmlr-v80-bhaskara18a,CzumajGJK024,Cohen-AddadMZ22,Cohen-AddadLNSS21}.
Additionally, widely-used privacy-preserving techniques such as $s$-anonymity often require to find a clustering of the input users 
into small size clusters (of size $s < 100$), which translates in a desired number of clusters $k$ larger than $n/100$~\cite{byun2007efficient}.

Therefore recent work has considered the scenario where $k$ may much be larger than
the memory of each machine and also where the dimension $d$ of the Euclidean space is $\Omega(\log n)$, where $n$ is the number of input points. First,  Cohen{-}Addad, Lattanzi, Norouzi{-}Fard, Sohler, and 
Svensson~\cite{Cohen-AddadLNSS21} showed how to 
obtain an $O(\log \Delta\cdot \log n)$-approximate solution for the $k$-median problem in $O(1)$ parallel rounds
in the MPC model for any value of $k$ and $d$, where $\Delta$ is the 
ratio of the maximum distance to the minimum distance in the 
input. However, the method
proposed in their paper heavily relies on quad-tree embeddings and seems tailored to the $k$-median objective. This was later 
followed by the work of Cohen-Addad, Mirrokni, and 
Zhong~\cite{Cohen-AddadMZ22} who showed how to 
get a $(1+\eps)$-approximation for so-called \emph{perturbation resilient} $k$-means inputs, a notion of beyond worst-case inputs, in $o(\log n)$ parallel rounds. Other works that studied massively parallel algorithms for the $k$-means problem include \cite{pmlr-v80-bhaskara18a,Cohen-AddadEMNZ22}.

A stronger result was later obtained by Czumaj, Gao, Jiang, Krauthgamer, and   Vesel{\'{y}}~\cite{CzumajGJK024} who obtained
an $O(1)$-approximation in $O(1)$ parallel rounds for the related facility location
problem that can then be used to obtain a bicriteria 
$O_{\eps}(1)$-approximate solution for $k$-means using $(1+\eps)k$ centers also in a constant number of MPC rounds. Recently, there has also been some progress for the $k$-center problem. In \cite{CzumajG0J25}, the authors give fully scalable $O(1)$-round MPC algorithms to obtain a $(2+\eps)$-approximation and a bicriteria $(1+\eps)$-approximation.

In the present paper, we make significant progress toward answering the question about the complexity of the general $k$-means problem in the MPC model. We show the following result. 

\begin{mdframed}[hidealllines=true, backgroundcolor=gray!15]
\textbf{\boldmath Main Theorem. For any constants $\sigma,\eps>0$ and any $k,d\geq 1$, an $O(1)$-approximate solution to the $k$-means problem in $\mathbb{R}^d$ can be computed in the MPC model in $O(\log\log n\cdot\log\log\log n)$ rounds with $O(n^\sigma)$ bits of memory per machine and with $O(n^{1+\eps})$ bits of global memory.}\
\end{mdframed}

Note that in all of the above constant factor approximations and also in the present work, the total global memory is $n^{1+\eps}$ for some fixed constant $\eps>0$, i.e., the total global memory is required to be superlinear by a small polynomial factor. 

Our main technical contribution is a fully scalable algorithm for the facility location problem, which is closely related to the k‑means problem. Our facility location algorithm is based on the classic primal-dual framework of Jain and Vazirani~\cite{JV01}. Since our facility location satisfies a property known as \emph{Lagrangian Multiplier Preserving} (LMP), we can (almost) directly use the framework from \cite{JV01} to apply our facility location algorithm to solve the $k$-means problem and prove our main theorem.

\subsection{Further Related Work}
We briefly review additional related literature on two techniques that turn out to be important for our work: locality-sensitive hashing (LSH) and ruling set algorithms in parallel and distributed models.

\paragraph{Locality-Sensitive Hashing.}
Locality-Sensitive Hashing (LSH) is a key technique for efficiently handling approximate nearest neighbor queries in high-dimensional spaces, which is central to many clustering algorithms in parallel and distributed settings. The foundational work by Andoni and Indyk~\cite{andoni2006near} introduced near-optimal LSH-based algorithms for high-dimensional nearest neighbor search, which have since been widely adopted in scalable clustering approaches. Additionally, Indyk~\cite{Indyk04} provided a broader survey of data structures for nearest neighbor search in general metric spaces. The construction of~\cite{andoni2006near} was extended to the MPC model in~\cite{Cohen-AddadEMNZ22}. In general, LSH and closely related techniques have been used in several recent works on clustering algorithms in space-restricted environments (e.g.,~\cite{pmlr-v80-bhaskara18a,Cohen-AddadEMNZ22,CzumajGJK024,Czumaj2022}).

\paragraph{Distributed and Parallel Ruling Set Algorithms.}  
An $(a,b)$-ruling set of a graph (as introduced in~\cite{awerbuch89}) is a set of nodes such that any two nodes in the set are at distance at least $a$, and every node not in the set is within distance $b$ of some node in the set. Such sets are useful primitives in distributed computing, particularly in the context of symmetry breaking and locality-based computations. Ruling sets have therefore been studied extensively in the distributed setting (e.g.,~\cite{ghaffari2016improved,pai2017symmetry,kothapalli2012super,KuhnMW2018,BalliuGKO23}).

To date, it remains unknown whether there exists a constant- or $\poly\log\log n$-round algorithm for computing an $(O(1), O(1))$-ruling set in the distributed or sublinear-memory MPC setting. The fastest known algorithm in the low-memory MPC model is the $O(\log\log n \cdot \log\log\log n)$-round algorithm of~\cite{KothapalliPP20}, which computes a $(2, O(\log\log\log n))$-ruling set. A sublogarithmic-round algorithm for computing a maximal independent set (which is a $(2,1)$-ruling set) was given by Ghaffari and Uitto~\cite{GhaffariU19}. Faster algorithms are known in the setting where each machine has memory linear in the number of graph nodes. In this case, it has been shown that a $(2,2)$-ruling set can be computed in $O(1)$ MPC rounds. A randomized algorithm for this was presented in~\cite{CambusKPU23}, followed by a deterministic algorithm by Giliberti and Parsaeian~\cite{GP24}.

\subsection{Organization of the Paper}
The remainder of the paper is organized as follows. In \Cref{sec:prelim}, we introduce the necessary mathematical notation and we formally define the $k$-means problem. We also introduce the closely related facility location problem. Subsequently, in \Cref{sec:techoverview}, we provide an informal overview over the most important challenges that we face and the main ideas that we use to overcome those challenges. In \Cref{sec:highlevelalg}, we provide and analyze a high-level version of our $k$-means algorithm, which is tailored towards parallel implementation, but still independent of the concrete computational model we use. The main part of \Cref{sec:highlevelalg} is a constant-factor LMP-approximation of the facility location problem. In \Cref{sec:mpc}, we then show how the algorithm of \Cref{sec:highlevelalg} can be implemented efficiently in the sublinear memory MPC model. Finally, in \Cref{sec:mainproof}, we combine everything to prove our main theorem.

\section{Problem Definition and Preliminaries}\label{sec:prelim}

\subsection{Mathematical Definitions and Notation}
When arguing about asymptotic complexities, we sometimes make use of the $\tilde{O}(\cdot)$-notation. We use $\tilde{O}(x)$ to denote terms that are upper bounded by $x\cdot\poly\log x$, i.e., $\tilde{O}(\cdot)$ hides factors that are polylogarithmic in the argument.

Our problems are defined for point sets in Euclidean space. For $x,y\in \R^d$, we use $\dist(x,y):=\|x-y\|_2$ to denote the Euclidean distance between $x$ and $y$. For $x\in \R$ and $Y\subset \R$, we define the distance between $x$ and $Y$ as $\dist(x,Y):=\inf_{y\in Y}\dist(x,y)$. Further, for $x\in \R^d$, $Y\subset \R^d$, and a radius $r\geq 0$, we use $B_Y(x,r):=\set{y\in Y\,:\,\dist(x,y)\leq r}$ to denote the ball of radius $r$ around $x$ restricted to the points in $Y$. 

For graphs, we need the concept of ruling sets~\cite{awerbuch89}. Given an undirected graph $G=(V,E)$ and two integers $\alpha \geq 1$ and $\beta \geq \alpha - 1$, an $(\alpha,\beta)$-ruling set $S \subseteq V$ of $G$ is a set of nodes such that for any two nodes $u, v \in S$, $d_G(u,v) \geq \alpha$, and for any $u \notin S$, there exists $v \in S$ such that $d_G(u,v) \leq \beta$. The special case where $\alpha = 2$ asks for an independent set $S$, where every node of the graph has an independent set node within hop distance at most $\beta$. Furthermore, a set $S$ is an $(\alpha, \alpha \cdot \beta)$-ruling set if and only if $S$ is a $(2,\beta)$-ruling set of the graph $G^{\alpha - 1}$. Here, for any $t \geq 1$, $G^t$ is defined as the graph on node set $V$ with an edge between any two nodes at distance at most $t$ in $G$.

Finally, we use the following notation throughout the paper:
\[
\forall x\in \R\,:\,[x]^+ := \max\set{0, x}.
\]

\subsection{The {\it k}-Means Problem}

Let $d \geq 1$ be an integer. In the $k$-means problem in $\mathbb{R}^d$, the input consists of a set of $n$ points $P \subset \mathbb{R}^d$ and a positive integer $k \leq n$. The goal is to identify a set of $k$ centers $Z \subset \mathbb{R}^d$ that minimizes the total squared Euclidean distance from each point in $P$ to its closest center in $Z$. Formally, the objective is to compute a set $Z \subseteq P$ with $|Z| = k$ such that the following cost function is minimized:
\begin{align*}
    \cost(Z) := \sum_{p \in P} \dist^2(p, Z).
\end{align*}
Note that we do not restrict the dimension $d$ as a function of $n$. However, at the cost of a $(1\pm\eps)$-factor in the approximation ratio (for an arbitrarily small constant $\eps > 0$), we can apply a data-oblivious random projection to reduce the dimension of the input point set $P$ to $\R^{O(\log k)}$ while approximately preserving the $k$-means cost. Therefore, we can assume without loss of generality that $d = O(\log k)=O(\log n)$~\cite{BBCGS19}. Further, by scaling and sacrificing another $(1+\eps)$-factor in the approximation ratio, we can assume that the minimum distance between any two points is at least $1$, and the maximum distance is bounded polynomially in $n$ (Lemma 4.1 in~\cite{ANSW20}). Throughout the paper, we therefore assume that
\begin{equation}\label{eq:pointset}
  P\subset \R^d\text{ for }d=O(\log k)\quad\text{and}\quad
  \forall x,y\in P\,:\,x\neq y \Rightarrow \dist(x,y)\in\big[1,n^{O(1)}\big].
\end{equation}

A standard approach to approximate the $k$-means problem is by reducing it to a variant of the facility location problem. As we also follow this approach, we next formally define the facility location problem and we sketch the classic approximation algorithm by Jain and Vazirani~\cite{JV01}.

\subsection{The Facility Location Problem}

The facility location problem is defined as follows: Given a set of facilities $\F$ and a set of clients $\C$, the objective is to select a subset $\F' \subseteq \F$ of facilities to open and to assign each client $c \in \C$ to an open facility $f \in \F'$ such that the total cost is minimized. In the variant of the problem we study, opening a facility $f \in \F$ incurs a fixed opening cost $\lambda > 0$, and connecting a client $c \in \C$ to a facility $f$ incurs a connection cost $\cost(c,f) \geq 0$. For convenience, we use $\cost(c,f)$ and $\cost(f,c)$ interchangeably (with $\cost(c,f) = \cost(f,c)$). In the context of this paper, we assume that the connection cost satisfies a relaxed version of the triangle inequality. More precisely, for any integer $\ell \geq 2$ and any sequence $x_0,x_1,\dots,x_{\ell}$ of clients and facilities that alternates between clients and facilities (i.e., either $x_0,x_2,\dots$ are clients and $x_1,x_3,\dots$ are facilities or vice versa), we have
\begin{equation}\label{eq:triangleineq}
    \sum_{i=1}^{\ell} \cost(x_{i-1}, x_i) \leq \ell\cdot \cost(x_0,x_s).
\end{equation}
We note that this relaxation of the triangle inequality holds if the cost $\cost(c,f)$ refers to the squared distance between $c$ and $f$ in some underlying metric space.

\paragraph{Linear Programming Formulation.} The facility location problem can be formulated as an integer linear program as follows:

\begin{align} \label{eq:fl-primal}
\begin{split}
\min & \quad \sum_{f \in \F} \lambda \cdot y_f + \sum_{f \in \F} \sum_{c \in \C} \cost(c,f) \cdot x_{cf} \\ 
\text{s.t.} & \quad \sum_{f \in \F} x_{cf} \ge 1, \quad \forall c \in \C, \\
&\quad x_{cf} \leq y_f, \quad \forall f \in \F, \forall c \in \C, \\
&\quad x_{cf} \in \{0,1\}, \quad \forall f \in \F, \forall c \in \C, \\
& \quad y_f \in \{0,1\}, \quad \forall f \in \F.
\end{split}
\end{align}
Here, $y_f$ indicates whether facility $f$ is open, and $x_{cf}$ represents whether client $c$ is connected to facility $f$. The first constraint ensures that each client is connected to at least one facility and the second constraint ensures that this facility has to be open. Relaxing $y_f$ and $x_{cf}$ to be in $[0,1]$ yields the LP relaxation.

The \emph{dual problem} of the LP relaxation of \eqref{eq:fl-primal} can be stated as follows.
\begin{equation} \label{eq:fl-dual}
\begin{split}
\max & \quad \sum_{c \in \C} \alpha_c \\ 
\text{s.t.} &\quad \alpha_c - \beta_{cf} \leq \cost(c,f), \quad \forall f \in \F, \forall c \in \C, \\ 
&  \quad \sum_{c \in \C} \beta_{cf} \leq \lambda, \quad \forall f \in \F, \\
&  \quad \alpha_c \geq 0, \quad \forall c \in \C, \\
& \quad \beta_{cf} \geq 0, \quad \forall f \in \F, \forall c \in \C.
\end{split}
\end{equation}
In this dual LP, $\alpha_c$ represents the total price paid by a client $c$, and $\beta_{cf}$ represents the contribution of client $c$ towards opening facility $f$. If a client $c$ is connected to a facility $f$, then $\alpha_c - \beta_{cf} = \cost(c,f)$; otherwise, it is zero. A facility is opened when $\sum_{c \in \C}\beta_{cf} = \lambda$.

\paragraph{Facility Location Algorithm by Jain and Vazirani}
In \cite{JV01}, Jain and Vazirani presented a centralized primal-dual algorithm to compute a $3$-approximation for facility location problem if the cost function satisfies the (strict) triangle inequality. The algorithm achieves a $9$-approximation in our case with the relaxed triangle inequality \eqref{eq:triangleineq}. In the following, we briefly sketch their algorithm.

The input is modeled as a (complete) bipartite graph, where one set of vertices represents the clients $\C$ and the other set represents the facilities $\F$. The algorithm initializes all dual variables $\alpha_c$ and $\beta_{cf}$ to $0$, and initially classifies all clients as unconnected.

The first phase of the algorithm initially defines all clients as being \emph{active} and it proceeds by uniformly increasing the $\alpha_c$ values of all active clients $c$ at a fixed rate. Whenever $\alpha_c = \cost(c,f)$ for some edge $\{c, f\}$ (i.e., for some client $c$ and some facility $f$), the edge $\{c, f\}$ is declared as \emph{tight}. At this point, client $c$ has fully paid its connection cost to facility $f$ and begins contributing toward the facility's opening cost. As a result, from now on, the dual variable $\beta_{cf}$ is increased at the same rate as $\alpha_c$.

A facility $f$ is considered \emph{temporarily open} once the total contribution from its adjacent clients is equal to the opening cost $\lambda$, i.e., if $\sum_{c \in \C} \beta_{cf} = \lambda$. Moreover, all unconnected clients that have a tight edge to $f$ are then immediately connected to $f$ and become \emph{inactive} (ties about where to connect a client are broken arbitrarily in case several facilities become temporarily open at the same time). Throughout the remainder of the first phase of the algorithm, whenever a client $c$ obtains tight edge with a facility $f$ that is already temporarily open, $c$ is connected to $f$ and $c$ becomes inactive (note that $\beta_{cf}$ in this case remains $0$). The first phase terminates when all clients are connected to a temporarily open facility, i.e., when all clients are inactive. Note that the above construction ensures that the variables $\alpha_c$ and $\beta_{cf}$ form a feasible solution of the dual LP \eqref{eq:fl-dual}.

At the end of the first phase, a client may have contributed to opening multiple facilities. To ensure each client contributes to at most one facility, we define a conflict graph $H=(\F_T, E_H)$, where $\F_T$ is the set of temporarily open facilities. There is an edge $\set{f,f'}\in E_H$ between two temporarily open facilities $f$ and $f'$ if and only if there exists a client $c$ such that $\beta_{cf}>0$ and $\beta_{cf'}>0$, i.e., $c$ contributes to opening both $f$ and $f'$. The final set of open facilities $\F'$ is chosen as a maximal independent set of $H$. In this way, each client contributes to at most one open facility. By using the relaxed triangle inequality \eqref{eq:triangleineq}, one can show that
\begin{equation}\label{eq:JVapprox}
  \sum_{c\in \C} \cost(c, \F') \leq 
  9\cdot \left(\sum_{c\in \C} \alpha_c -|\F'|\cdot\lambda\right) \leq
  9\cdot \left(\mathsf{OPT} -|\F'|\cdot\lambda\right),
\end{equation}
where $\mathsf{OPT}$ denotes the objective value of an optimal solution to the given facility location problem. The above inequality implies that the objective value $\sum_{c\in \C}\cost(c,\F')+|\F'|\cdot\lambda$ achieved by the algorithm is within a factor at most $9$ of $\mathsf{OPT}$. Moreover the solution has the additional property known as \emph{Langrangian multiplier preserving (LMP)}~\cite{JV01,BreachingLMP}. It was already shown by Jain and Vazirani~\cite{JV01} that a constant-factor LMP approximation algorithm for the facility location problem can be used to develop a constant approximation for the $k$-means problem.

\subsection{The Massively Parallel Computation Model}

The \emph{massively parallel computation} (MPC) model was introduced in \cite{KarloffSV10}. It is an abstract model that captures essential aspects of coarse-grained parallelism in large-scale data processing systems. In an MPC algorithm for an input of size $n$, we have a number of machines, each with $n^\sigma$ bits of local memory for some constant $\sigma>0$. An algorithm is called \emph{fully scalable} if the constant $\sigma$ can be made arbitrarily small. The machines can communicate with each other in synchronous rounds. In each round, every machine may send and receive up to $O(S)$ bits of data and perform arbitrary local computation. The total number of machines and thus the global memory can also be bounded. Ideally, one usually aims for a global memory of $\tilde{O}(n)$ bits. Sometimes, this is not possible. Another popular assumption that we also make in our paper is that the global memory is restricted to $O(n^{1+\eps})$ bits, where $\eps>0$ is a constant that can be chosen arbitrarily small, possibly at some cost regarding the guarantees of an algorithm.

\section{Technical Overview}\label{sec:techoverview}

In the following, we discuss the main technical challenges in devising an efficient MPC algorithm for the $k$-means problem and sketch the approach we used to resolve these challenges. As discussed, we assume that the input to the $k$-means problem is given by $n$ points $P \subset \R^d$ for $d = O(\log k)$, and that for any $x, y \in P$ with $x \neq y$, their Euclidean distance satisfies $1 \leq \dist(x,y) \leq \poly(n)$ (cf.\ Eq.~\eqref{eq:pointset}). We solve the given $k$-means instance by reducing it to several instances of the facility location problem with some fixed opening cost $\lambda \geq 1$ per facility. In this facility location instance, the facilities $\F$ and clients $\C$ are also points in $\R^d$ (in fact, in our reduction, we will have $\F = P$ and $\C = P$). For any $x, y \in \F \cup \C$, we therefore also have a well-defined Euclidean distance $\dist(x,y)$. It is convenient to define the cost function not only between a client and a facility, but between any pair of clients and facilities. The cost between any two $x,y \in \C\cup\F$ is defined as
\begin{equation}\label{eq:cost}
  \cost(x,y) := (\dist(x,y))^2.
\end{equation}
Note that by applying the triangle inequality for the Euclidean distances and the Cauchy-Schwarz inequality, this implies that the cost function satisfies the relaxed triangle inequality given by Equation \eqref{eq:triangleineq}.

\paragraph{Challenge 1: Sparse Representation of Pairwise Distances.} Since we assume that the distances are in $[1,n^{O(1)}]$, we can (approximately) represent each point in $P$ by $d=O(\log k)=O(\log n)$ coordinates of $O(\log n)$ bits each. The whole input can therefore be stored using $O(n\log^2 n)=\tilde{O}(n)$ bits. However, we need to be able to make efficient non-trivial queries. In particular, we need to be able to find (approximate) nearest neighbors and (approximately) aggregate over different neighborhoods. Note that we cannot simply store all the distances as a weighted graph, as this would require $\Omega(n^2)$ memory. To approximately store the distances in a structured way, we use \emph{locality-sensitive hashing (LSH)}, a technique that has been used in different contexts, in particular to perform approximate nearest neighbor queries~\cite{CzumajGJK024,andoni2006near,DIIS04}. The technique has also been used in a recent MPC algorithm that achieves a constant bicriteria $k$-means approximation in $O(1)$ time~\cite{CzumajGJK024}. In this paper, we make use of a construction from \cite{andoni2006near}, which allows storing the $n$ points as a weighted graph $G$ with the following properties. The nodes of the graph are the $n$ points. For an arbitrary constant $\eps>0$, the graph has at most $n^{1+\eps}$ edges and 
\[
\forall x,y\in P\,:\ d_G^{(2)}(x,y) \leq \dist(x,y)\leq O\big(\sqrt{1/\eps}\big)\cdot d_G^{(2)}(x,y),
\]
where $d_G^{(2)}(x,y)$ is the length of a shortest path in $G$ with hop-length at most $2$ between $x$ and $y$. The graph can be computed in $O(1)$ time in the MPC model, and it can then be used to perform efficient approximate aggregation over local neighborhoods. The details appear in \Cref{sec:mpc_graphapprox,sec:mpc_graphalg}.

\paragraph{Challenge 2: Parallel LMP Approximation of Facility Location.} As our $k$-means algorithm is based on the Jain/Vazirani framework, the main step of the algorithm is to compute a constant-factor LMP-approximation of the facility location problem. Our main challenge therefore is to design a parallel variant of the primal-dual algorithm that is both efficient and satisfies the LMP property. We start by describing a natural (but too slow) parallel algorithm for an idealized setting, where we can make exact queries about the underlying distances. 

In the primal-dual algorithm, we initially set the dual variables $\alpha_c$ for clients $c$ to $0$ and we gradually increase the $\alpha_c$-values at a fixed rate until some facility becomes paid. We can initially set $\alpha_c=\alpha_0>0$ for some sufficiently small $\alpha_0$. In order to be fast in a parallel setting, one can then try to be more aggressive than in the sequential algorithm and increase $\alpha_c$ in $O(\log n)$ discrete steps, where in each step, the current value is multiplied by some constant (for simplicity, say we double $\alpha_c$ for each active client $c$ in each step). If we freeze each $\alpha_c$ value as soon as $c$ contributes to the opening cost of some facility $f$ that becomes paid or overpaid, we obtain an approximately feasible dual solution. Note that a facility $f$ is called paid w.r.t.\ to dual values $\alpha_c$ if $\sum_{c\in \C} [\alpha_c - \cost(c,f)]^+\geq \lambda$ and it is called overpaid if this sum is strictly larger than $\lambda$.

The approximately feasible dual solution produced in this way can indeed be used to obtain a constant-factor approximation for the facility location problem. Unfortunately, however, it does not yield a solution that satisfies the LMP property. Slightly simplified, we can guarantee the LMP property if we ensure that the final dual solution is feasible and that the facilities we open are exactly paid for—but not overpaid—by the adjacent clients. This, however, suggests that increasing the $\alpha_c$ values in parallel in discrete steps will not work.

We can make the approach work by using the following observation about the sequential primal-dual algorithm. At the cost of losing a constant factor in the approximation quality, one can increase the $\alpha_c$-variables of different active clients at different rates. One merely has to make sure that at all times, the $\alpha_c$-variables of all the active clients are within a constant factor of each other. Consider some intermediate state in the algorithm, where the current solution is still dual feasible and for all active clients $c$, we have $\alpha_c=\alpha$ for some value $\alpha>0$. We would now like to increase the dual variables of the active clients to $2\alpha$. If we did this, some subset $S$ of the facilities would become paid or overpaid. Instead of directly increasing all the dual values of active clients to $2\alpha$, we proceed as follows. We say that two facilities $f, f'\in S$ are in conflict if there exists an active client $c$ for which $2\alpha\geq \cost(c,f)$ and $2\alpha\geq \cost(c,f')$. This relation defines a conflict graph $H$ on the potentially paid facilities $S$. We now select a subset $S_0\subseteq S$ in such a way that any two $f,f'\in S_0$ are at distance at least $4$ in $H$ and for any $f'\in S\setminus S_0$, there (ideally) exists an $f\in S_0$ at distance $O(1)$ from $f'$ in $H$. Note that such a set $S_0$ is known as a $(4,O(1))$-ruling set of the $H$ (cf.~\Cref{sec:prelim}). We then build clusters of facilities and clients around each facility in $\S_0$ and we will eventually open exactly one facility in each cluster. For each $f\in S_0$, let $C_f$ be the set of active clients for which $2\alpha> \cost(c,f)$. By gradually increasing only the $\alpha_c$-values for $c\in C_f$, we make sure that some facility $\bar{f}$ (either $f$ or one of its neighboring facilities in the conflict graph) becomes paid before the $\alpha_c$-values for $c\in C_f$ exceed $2\alpha$. We can open $\bar{f}$ and assign active clients $c$ for which $\cost(c,\bar{f})=O(\alpha)$ to $\bar{f}$. Because any two facilities in $S_0$ have distance at least $4$ in the conflict graph, the client sets $C_f$ and $C_{f'}$ for $f,f'\in S_0$ are disjoint and they also contribute to a disjoint set of facilities in $S$. The increase of the dual values in $C_f$ for different $f\in S_0$ can therefore be done completely independently and therefore in a `local' manner. Moreover, since for each $f\in S_0$, exactly one facility will be opened, one can show that even for proving the LMP property, it is sufficient to just open a facility $\bar{f}$ that approximately minimizes $\sum_{c\in C_f}\cost(c,\bar{f})$.

The sketched parallel facility location algorithm has two main problems. First, because our sparse representation of the points (and thus of facilities and clients) only allows us to make approximate queries about neighborhoods, we cannot exactly implement the algorithm in our setting. By adding sufficient slack in all steps, it is, however, possible to adapt the algorithm to the setting where all steps can only be carried out in an approximate manner. More importantly, even though we increase the dual values quite aggressively, the algorithm is still too slow. In the end, the ratio between the largest and the smallest dual value $\alpha_c$ can be polynomial in $n$. We therefore need $\Omega(\log n)$ doubling steps, which, of course, implies that the round complexity of the algorithm will also be at least $\Omega(\log n)$. However, we aim for an algorithm with a round complexity that is at most $\polyloglog n$ and thus almost exponentially smaller than what we can achieve with the above algorithm. We next discuss how we can use the same basic idea without going through the logarithmically many doubling steps of the dual variables.

\paragraph{Challenge 3: Fast Parallel Primal-Dual Algorithm.}
As even exponentially increasing the dual client values is far too slow, we need to find a way to fix the dual variables much more directly. Also note that a reasonable approximate solution to the dual LP might require different clients $c$ to use up to $\Theta(\log n)$ distinct values for $\alpha_c$. We therefore also cannot increase the $\alpha_c$ variables together in a synchronized way. Instead, we aim to define a value $\alpha_c^*$ for every client, where $\alpha_c^*$ is essentially chosen as the maximum value such that, if all clients set their dual variable to $\alpha_c^*$, no facility $f$ to which $c$ contributes (i.e., for which $\cost(c,f) > \alpha_c^*$) is overpaid. Note that $\alpha_c^*$ is a lower bound on the final value of $\alpha_c$ in the standard sequential primal-dual algorithm. Concretely, the values $\alpha_c^*$ are computed as follows.

First, every facility $f$ computes a radius $r_f$ such that
\begin{equation}
  r_f := \max 
  \set{r \in \R\,:\, \sum_{c \in \C} 
  \big[r^2 - \cost(c,f)\big]^+ \leq \lambda}.
\end{equation}
Note that $r_f$ is chosen such that, if all clients $c$ set $\alpha_c = r_f^2$, then the facility $f$ is exactly paid for. We now set $\alpha_c^*$ such that $\alpha_c^* \leq r_f^2$ for every facility $f$ for which $\alpha_c^* > \cost(c,f)$:
\begin{equation}
  \alpha_c^* := \min_{f \in \F} \max \set{r_f^2, \cost(c,f)}.
\end{equation}

Those definitions directly imply that by setting all the dual variables of clients $c$ to $\alpha_c^*$ (and setting $\beta_{cf} = \alpha_c^* - \cost(c,f)$), we obtain a dual feasible solution. Interestingly, one can also show that by setting $\alpha_c = C_A \cdot \alpha_c^*$ for a sufficiently large constant $C_A \geq 1$, every client $c$ has a facility $f$ within distance $O\big(\sqrt{\alpha_c^*}\big)$ such that $f$ is at least fully paid for by the dual variables $\alpha_c$. One could therefore try to adapt the approach of the above algorithm as follows: consider the set $S$ of facilities that are fully paid for by the dual values $C_A \cdot \alpha_c^*$, define the conflict graph $H$ between facilities in $S$ w.r.t the dual values $C_A \cdot \alpha_c^*$, select a set of centers $S_0 \subseteq S$ as a $(4, O(1))$-ruling set of $H$, and open one facility close to every node in $S_0$. However, this does not work directly, particularly because choosing the wrong centers $S_0$ might result in missing the opening of some facilities $f$ with small radius $r_f$, causing some clients to be connected to a facility that is too far away.

The problem, in particular, occurs if for a client $c$, there exists another client $c'$ within distance $O(\sqrt{\alpha_c^*})$ such that $\alpha_{c'}^* \ll \alpha_c^*$. Instead of increasing the dual variable of $c$ until some facility becomes tight, we should, in this case, simply connect $c$ to the facility that we open for client $c'$. We resolve this in the following way. We define a set $\C' \subset \C$ of \emph{problematic clients}. Essentially, a client is problematic if there exists another client $c'$ within distance $O(\sqrt{\alpha_c^*})$ such that $\alpha_{c'}^* \ll \alpha_c^*$. For every client $c$, we define two dual values $\alpha_{c,0}$ and $\alpha_{c,1}$, where $\alpha_{c,0}$ is smaller than $\alpha_c^*$ by some constant factor. For problematic clients, we set $\alpha_{c,1} = \alpha_{c,0}$, and for all other clients, we set $\alpha_{c,1} = C_A \cdot \alpha_{c,0}$, where $C_A$ is a sufficiently large constant. One can then show that, when choosing the constants appropriately, for all clients $c$, there exists a facility $f'$ within distance $\alpha_{c,0}$ such that $f'$ is paid w.r.t the dual values $\alpha_{c',1}$ (formally proven in \Cref{lem:paidfacility}). It is now also true that whenever two facilities $f, f'$ are in conflict (i.e., if there exists a client $c$ that contributes to both of them), then $r_f = \Theta(r_{f'})$ and $\alpha_{c,1} = O(r_f)$, and thus $\dist(f,f') = O(r_f)$ (cf.\ \Cref{lem:constant-rad,lemma:sameradius}). This is now sufficient for our general approach described above to work.

One of the challenges in implementing the above algorithm in the MPC model is that we only have approximate access to the underlying Euclidean metric through our sparse LSH approximation. Therefore, for example, we cannot compute $r_f$ for facilities $f$ or $\alpha_c^*$ for clients $c$ exactly. However, we can efficiently compute approximations that differ from the actual values by at most a constant factor. We can also only approximately determine whether a facility is paid for by some given dual client values. Consequently, we must compute a conflict graph $H$ that includes all facilities that are approximately paid. Fortunately, the above algorithm is robust enough to work even when all quantities are computed only up to a constant-factor error.


\paragraph{Challenge 4: Computing a Ruling Set in the MPC Model.}
One of the most challenging tasks in implementing the sketched algorithm in the MPC model is the computation of a ruling set to determine cluster centers $\S_0$ among the (approximately) paid facilities. The problem can be broken down to a standard graph algorithms problem in the MPC model. Given an $n$-node and $m$-edge graph $G=(V,E)$, one needs a fully scalable algorithm to compute a $(a,b)$-ruling set for some constants $a>1$ and $b\geq a-1$. In the conflict graph $H$, we needs to compute an $(4,O(1))$-ruling set. However, our algorithm will compute a sprase graph $G$ such that $G^2$ serves as the conflict graph. We therefore have to compute an $(7, O(1))$-ruling set on $G$. Unfortunately, there currently are no fully scalable MPC algorithm to compute such sets in time at most $\poly\log\log n$. The fastest fully scalable MPC algorithm for computing a ruling set is an algorithm that computes a $(2,O(\log\log\log n))$-ruling set in time $O(\log\log n \cdot \log\log\log n)$~\cite{KothapalliPP20}. The algorithm uses $O(m+n^{1+\eps})$ bits of global memory (for $\eps$ an arbitrarily small constant), which is fine for us. However, the algorithm does not directly compute an $(a,b)$-ruling set for any $a>2$ and even more problematically, it only computes an $(a,b)$-ruling set for $b=O(\log\log\log n)$. When using such a ruling set in our algorithm, we would only achieve an approximation ratio that is exponential in $O(\log\log\log n)$ and thus $\poly\log\log n$. We still use the ruling set algorithm of \cite{KothapalliPP20}, but we have to combine it with an additional idea.

Let us first discuss, how one can adapt any $(2,b)$-ruling set algorithm to compute an $(a,(a-1)b)$-ruling set on some given graph $G$. Note that a node set $X$ is $(a,(a-1)b)$-ruling set on $G$ if and only if $X$ is a $(2,b)$-ruling set on $G^{a-1}$. We cannot directly apply an algorithm on $G^{a-1}$ because $G^{a-1}$ might be dense even if $G$ is sparse and we therefore would need too much global memory. We can however again exploit the fact that we have $n^{1+\eps}$ bits of global memory. By sampling each node with a sufficiently small probability, one can make sure that the subgraph of $G^{a-1}$ induced by the sampled nodes has at most $n^{1+\eps}$ edges. When computing a $(2,b)$-ruling set on the sampled subgraph of $G^{a-1}$ and remove all nodes that have a ruling set node within distance $(a-1)b$ in $G$, the maximum degree of the remaining subgraph of $G^{a-1}$ drops by a factor $n^{\Omega(\eps)}$. We can therefore compute an $(a, (a-1)b)$-ruling set of $G$ in $O(1/\eps)=O(1)$ phases, where in each phase, we run a $(2,b)$-ruling set algorithm on a sufficiently sparse graph that we can construct explicitly.

Let us now discuss how we cope with the fact that we do not know how to efficiently compute an $(a,b)$-ruling set for $b=O(1)$. In \cite{BalliuGKO23}, it is shown that if the basic randomized Luby algorithm~\cite{alon86,luby86} for computing a maximal independent set is adapted to compute a $(2,2)$-ruling set, then in each step of the algorithm, a constant fraction of the nodes of a graph becomes covered. In $O(\log\log n)$ iterations of the algorithm, one can therefore guarantee that a $(1-1/\poly\log n)$-fraction of the nodes has a ruling set node within distance $2$. This idea can easily be adapted to the MPC model and also to computing $(a,a)$-ruling sets instead of $(2,2)$-ruling sets. By slightly adjusting the quality of the ruling sets and running the algorithm for several weight classes, one can also get a weighted version of this. Given a node-weighted graph, in $O(\log\log n)$ rounds, one can compute a set $X$ of centers so that any two nodes in $X$ are at distance at least $a$ (for a given $a>1$) and such that the total weight of nodes that are not within distance $O(1)$ of a node in $X$ is at most at $1/\poly\log n$-fraction of the total node weight. If one uses the algorithm of \cite{KothapalliPP20} to cover the remaining nodes within distance $O(\log\log\log n)$, we can use the computed set of centers to find a constant approximate facility location solution.

\paragraph{Challenge 5: From Facility Location to {\it k}-Means.}
Our reduction from the $k$-means problem to facility location follows the framework described by Jain and Vazirani for the $k$-median problem~\cite{JV01}. Given a $k$-median instance with point set $P$, one can define a facility location instance by setting both the facilities $\F$ and the clients $\C$ equal to $P$. If there exists an opening cost $\lambda$ for which our LMP approximation algorithm opens exactly $k$ facilities, we can directly use this solution for the $k$-means instance. Of course, this may not always be possible, so we aim to approximate this scenario. As a first step, we compute facility location solutions for two opening costs, $\lambda_1$ and $\lambda_2$, such that $\lambda_2 \leq \lambda_1 \leq 2\lambda_2$, where the solution corresponding to $\lambda_1$ opens $k_1 \leq k$ facilities and the one corresponding to $\lambda_2$ opens $k_2 \geq k$ facilities.

Jain and Vazirani describe a randomized rounding procedure that interpolates between the two facility location solutions corresponding to $\lambda_1$ and $\lambda_2$, such that the resulting solution opens exactly $k$ facilities. Provided that the facility location algorithm satisfies the LMP property, the resulting $k$-means solution achieves a constant-factor approximation with respect to the original instance.  Using the LSH-based approximation of Euclidean distances, this randomized rounding procedure can be implemented in the MPC model in a relatively straightforward manner.

\section{High-Level {\it k}-Means Algorithm}\label{sec:highlevelalg}

We are now ready to formally define and analyze our algorithm for computing a constant-factor approximation to the $k$-means problem. Instead of directly presenting an MPC algorithm, we first describe a high-level version that is (mostly) independent of implementation details. In \Cref{sec:mpc}, we then show that the high-level algorithm presented in this section can be implemented efficiently and in a fully scalable way within the MPC model. The high-level algorithm consists of two parts. In \Cref{sec:fl-highlevel}, we describe the main component: a constant-factor LMP approximation algorithm for the facility location problem. In \Cref{sec:kmeans-highlevel-alg,sec:kmeans-highlevel-analysis}, we show how to leverage this facility location algorithm to compute an approximate solution for a given $k$-means instance.

\subsection{High-Level Facility Location Algorithm Description}\label{sec:fl-highlevel}

As discussed at the beginning of \Cref{sec:techoverview}, we assume that we are given a set $\F$ of facilities, a set $\C$ of clients, and an opening cost $\lambda \geq 1$. We further assume that $\F$ and $\C$ are represented by points in $\R^d$ for $d = O(\log k)$, and that for any two distinct points $x, y \in \F \cup \C$, we have $\dist(x, y) \geq 1$. The connection cost between a client $c$ and a facility $f$ is defined as $\cost(c, f) = \dist^2(c, f)$. The details of our high-level facility location algorithm are provided in the pseudo-code of \Cref{alg:fl-overview}. In the following, we also discuss each step of the algorithm in further detail.

\begin{algorithm}[p!]
\caption{High-Level Facility Location Algorithm} \label{alg:fl-overview}
\begin{algorithmic}[1]
    \vspace*{1mm}
    \State \textbf{Step I: Assign Facility Radii.} Each facility $f\in \F$ computes a radius $\hat{r}_f$ such that 
    \begin{equation}\label{eq:facilityradius}
    \frac{r_f}{C_R} \leq \hat{r}_f \leq r_f \quad\text{ for } r_f = \max\set{r\in \R\,:\,\sum_{c\in \C} \big[r^2 - \cost(c,f)]^+ \leq \lambda} \text{ and } C_R >1.
    \end{equation}
    \State \textbf{Step II: Assign Client Dual Values.} Each client $c\in \C$ computes a value $\alpha_{c,0}$ such that 
    \begin{equation}\label{eq:initialdualvalues}
    \frac{\alpha_{c}^*}{C_D^+} \leq \alpha_{c,0} \leq \frac{\alpha_{c}^*}{C_D^-} \quad\text{ for } \alpha_{c}^*=\min_{f\in \F}\max\set{r_f^2, \cost(c,f)},\ C_D^+ > C_D^- \ge 2.
    \end{equation}
    \State \textbf{Step III: Determine Problematic Clients.} Compute $\C'\subseteq \C$ such that for all $c \in \C$ and appropriate constants $\gamma_1>1$, $\gamma_2>1$, and $Q>1$,
    \begin{equation}\label{eq:problematicclients1}
    \exists c'\in B_{\C}\left(c,\gamma_1\cdot \sqrt{\alpha_c^*}\right) \text{ such that } \alpha_{c',0} \leq \frac{\alpha_{c,0}}{Q}
    \quad \Longrightarrow \quad 
    c\in \C',
    \end{equation}
    \begin{equation}\label{eq:problematicclients2}
    c\in \C' 
    \quad \Longrightarrow \quad
    \exists c'\in B_{\C}\left(c,\gamma_2\cdot \sqrt{\alpha_c^*}\right) \text{ such that } \alpha_{c',0} \leq \frac{\alpha_{c,0}}{Q}. 
    \end{equation}
    \State Define the upper dual values as:
    \begin{equation}\label{eq:updateddualvalues}
    \forall c\in\C, \quad \alpha_{c,1} := 
    \begin{cases}
        \alpha_{c,0}, & \text{if } c\in\C',\\
        C_A\cdot\alpha_{c,0}, & \text{with $C_A > 1$ otherwise}.
    \end{cases}
    \end{equation}
    
    \State \textbf{Step IV: Approximately Paid Facilities.} Compute a set $\mathcal{S}\subseteq \F$ of facilities such that for the dual values $\alpha_{c,1}$ and $\kappa > 1$, a) $\mathcal{S}$ contains all paid facilities of $\F$ and b) all facilities in $\mathcal{S}$ are $\kappa$-approximately paid.
    \vspace*{3mm}
    \State \textbf{Step V: Dependency Graph.} The dependency graph $H=(\S, E_H)$ on $\S$ is defined as 
    \[
    E_H := \set{\set{f,f'}\in \binom{\S}{2}\,:\, 
    \exists c\in \C\ \text{s.t.}\ \kappa\cdot \alpha_{c,1} > \max\set{\cost(c,f), \cost(c,f')}}.
    \] 
    \State Compute an undirected graph $H'=(\S, E_{H'})$ such that $E_H\subseteq E_{H'}$ and for every edge $\set{f,f'}\in E_{H'}$ we have $r_{f}/C_{H,1}\leq r_{f'}\leq C_{H,1}\cdot r_f$, $\dist(f,f')\leq C_{H,2}\cdot\min\set{r_f, r_{f'}}$, and appropriate constants $C_{H,1}$ and $C_{H,2}$.
    \vspace*{3mm}
    \State \textbf{Step VI: Compute Clustering.} Determine a set of cluster centers $\mathcal{S}_0\subseteq \mathcal{S}$ such that
    \begin{itemize}
        \item Any two facilities in $\mathcal{S}_0$ are at distance at least $4$ in $H$.
        \item For every client $c\in \C$, determine one facility $f_c\in \mathcal{S}$ for which $\max\{r_f^2, \cost(c,f)\}=O(\alpha_{c,0})$.
        \item Let $\C^+\subseteq \C$ be the clients for which $f_c$ is within hop dist.\ $O(1)$ of a node $f_0\in \mathcal{S}_0$ in $H'$.
        \item For all clients $c\in \C\setminus\C^+$, $f_c$ is within distance $o(\log\log n)$ of a node in $\mathcal{S}_0$ in $H'$.
        \item Define $A:=\sum_{c\in \C} \alpha_{c,0}$ and $A^+:=\sum_{c\in \C^+} \alpha_{c,0}$. We have $(A-A^+)/A\leq 1/\log n$.
    \end{itemize}
    \vspace*{1mm}
    \State Assign each facility to a nearest center in $H'$ and each client $c\in \C$ to the cluster of $f_c$.
    \vspace*{3mm}
    \State \textbf{Step VII: Opening Facilities.} In every cluster, open one facility that approximately (within a constant factor) minimizes the distance to all clients of the cluster.
    \vspace*{1mm}
\end{algorithmic}
\end{algorithm}

The algorithm relies on several constants that determine the accuracy with which each individual step can be efficiently executed. We specify all of these constants as a function of a sufficiently large constant $\Gamma \geq 1$, which, in turn, depends on the quality of our sparse LSH graph in approximating the distances in $\R^d$. We use the following definition to characterize when a facility is paid for or approximately paid for.

\begin{definition}[Paid and Approximately Paid Facilities]\label{def:paidfacilities} 
Consider a facility location instance with facilities $\F$ and clients $\C$, and assume that each client $c \in \C$ is assigned a dual value $\alpha_c > 0$. We say that:
\begin{itemize} 
    \item A facility $f$ is \emph{paid} w.r.t the dual values $\alpha_c$ if
    \[ 
        \sum_{c \in \C} \big[\alpha_c - \cost(c,f)\big]^+ \geq \lambda. 
    \] 
    \item A facility $f$ is \emph{$\kappa$-approximately paid} w.r.t the dual values $\alpha_c$ and a constant $\kappa > 1$ if it is paid w.r.t the scaled dual values $\kappa \cdot \alpha_c$.
\end{itemize} 
\end{definition}

We further use the following terminology to describe when a client contributes to paying for a facility. We say that a client $c$ \emph{contributes to a facility} $f$ w.r.t the dual value $\alpha_c$ if $\alpha_c > \cost(c,f)$. For $\kappa \geq 1$, we say that a client $c$ \emph{$\kappa$-approximately contributes to a facility} $f$ w.r.t the dual value $\alpha_c$ if $\kappa \cdot \alpha_c > \cost(c,f)$.


\paragraph{Step I: Facility Radius Assignment.}
As described in \Cref{sec:techoverview}, we define a radius $r_f$ for each facility $f$ such that $f$ is exactly paid for if all clients set $\alpha_c = r_f^2$. Since in our MPC implementation the radii $r_f$ cannot be computed exactly, the algorithm instead computes an approximate value $\hat{r}_f$ for each facility $f$. The approximate radius $\hat{r}_f$ is chosen to satisfy
\begin{align*}
    \frac{r_f}{C_R} \leq \hat{r}_f \leq r_f,
\end{align*}
for some constant $C_R > 1$.


\paragraph{Step II: Assign Client Dual Values.}
For each client $c \in \C$, we initialize an approximate dual value $\alpha_{c,0}$ that approximates the minimal value $\alpha_c^* := \min_{f \in \F} \max\{r_f^2, \cost(c,f)\}$ required to connect $c$ to some facility in $\F$ (cf.\ the description in \Cref{sec:techoverview}). The assigned dual value $\alpha_{c,0}$ is required to satisfy
\begin{align*}
\frac{\alpha_c^*}{C_D^+} \leq \alpha_{c,0} \leq \frac{\alpha_c^*}{C_D^-},
\end{align*}
where $C_D^+$ and $C_D^-$ are constants such that $C_D^+ > C_D^- > 1$. This range ensures that $\alpha_{c,0}$ remains within a constant factor of $\alpha_c^*$. We will later show that initializing all dual values $\alpha_{c,0}$ in this way guarantees that no facility is fully paid, i.e.,
\begin{align*}
\forall f \in \F,\quad \sum_{c \in \C} \big[\alpha_{c,0} - \cost(c,f)\big]^+ < \lambda. 
\end{align*}


\paragraph{Step III: Identification of Problematic Clients} 
In this step, we identify clients in $\C$ that may become \emph{problematic} if their dual values are increased beyond their initial assignment. We define a subset $\C' \subseteq \C$ consisting of clients whose scaled dual values could lead to inconsistencies in the facility-opening process.

Informally, a client $c \in \C$ is classified as problematic if, upon increasing its dual value by a sufficiently large constant factor, it contributes to two facilities $f$ and $f'$ such that the radius of $f'$ is significantly smaller than the radius of $f$. Formally, a client $c$ must be classified as problematic if
\[
\exists c' \in B_{\C}\left(c, \gamma_1 \cdot \sqrt{\alpha_c^*}\right)\ 
\text{such that} \quad \alpha_{c',0} \leq \frac{\alpha_{c,0}}{Q},
\]
where $\gamma_1$ and $Q$ are sufficiently large constants. Since determining this condition exactly is not efficiently possible, the algorithm is allowed to include some additional clients as problematic. In particular, a client $c$ may be marked as problematic whenever
\[
\exists c' \in B_{\C}\left(c, \gamma_2 \cdot \sqrt{\alpha_c^*}\right)\ 
\text{such that} \quad \alpha_{c',0} \leq \frac{\alpha_{c,0}}{Q},
\]
where $\gamma_2 \gg \gamma_1$ is another sufficiently large constant.
For each client $c\in \C$, we then define an upper dual value $\alpha_{c,1}$ that guarantees that for every client $c$, there is a facility $f$ that is paid w.r.t.\ the dual values $\alpha_{c,1}$ and that is within distance $O\big(\sqrt{\alpha_{c,0}}\big)=O\big(\sqrt{\alpha_{c,1}}\big)$ of $c$. To guarantee this, for all clients $c\in \C'$, i.e., for all clients that are classified as problematic, we set $\alpha_{c,1}:=\alpha_{c,0}$ and for all other clients $c\in \C\setminus\C'$, we set $\alpha_{c,1}:=C_A\cdot\alpha_{c,0}$, where $C_A$ is a sufficiently large constant. 


\paragraph{Step IV: Approximately Paid Facilities.} 
In this step, the algorithm selects a set of facilities $\S$ that are candidates to be opened because they are at least approximately paid for by the computed dual client values. Formally, $\S$ must contain all facilities that are paid w.r.t the dual values $\alpha_{c,1}$, and it may include any facility that is $\kappa$-approximately paid w.r.t the dual values $\alpha_{c,1}$, for some sufficiently large constant $\kappa \geq 1$.


\paragraph{Step V: Dependency Graph.} The \emph{dependency graph} $H = (\S, E_H)$ between the paid and approximately paid facilities $\S$ is defined as follows. There is an edge $\{f, f'\} \in E_H$ between two facilities $f, f' \in \S$ if and only if there exists a client $c$ such that $c$ $\kappa$-approximately contributes to both $f$ and $f'$. Here, $\kappa$ is the same constant as in Step IV. In this way, as long as we do not increase the dual values beyond $\kappa \cdot \alpha_{c,1}$, any facilities that receive contributions from the same client are neighbors in $H$.

In the algorithm, we cannot compute $H$ exactly and therefore compute a supergraph $H' = (\S, E_{H'})$. The graph $H'$ contains all edges of $H$, i.e., $E_H \subseteq E_{H'}$, and it may include additional edges. In our analysis, we show that if a client $c$ $\kappa$-approximately contributes to two facilities $f$ and $f'$, then it must hold that $r_f = \Theta(r_{f'})$ and $\alpha_{c,1} = O(r_f^2)$ (cf.\ \Cref{lem:constant-rad,lemma:sameradius}). In the approximation analysis, this is the only property we require for neighboring nodes in the extended dependency graph $H'$. We therefore allow the algorithm to include any edge $\{f, f'\}$ in $E_{H'}$ for which the radii satisfy $r_f/C_H \leq r_{f'} \leq C_H \cdot r_f$, and the distance condition $\dist(f,f') \leq 2C_H^2 \cdot \max\{r_f, r_{f'}\}$ holds, where $C_H$ is a constant defined as a fixed multiple of $\Gamma$.


\paragraph{Step VI: Compute Clustering.} 
In this step, we utilize the extended dependency graph $H'$ from Step V to define clusters of facilities. The clustering process begins with the selection of a subset of facilities, denoted by $\S_0 \subseteq \S$, which serve as cluster centers. The selection criterion ensures that any two facilities in $\S_0$ are at least four hops apart in the graph $H'$ (and thus also in $H$). This separation condition guarantees that the clusters formed around these centers are well-separated. In particular, it implies that if, for each $f \in \S_0$, either $f$ or one of its neighboring facilities $f'$ in $H'$ is opened, then—as long as the dual values are set to at most $\kappa \cdot \alpha_{c,1}$—each client contributes to at most one open facility.

Ideally, we would like to select the centers $\S_0$ as a $(4, O(1))$-ruling set of $H'$, i.e., in such a way that every facility in $\S$ has a facility in $\S_0$ within constant distance in $H'$. Unfortunately, there is no sufficiently efficient MPC algorithm known that computes such a set $\S_0$, and we therefore have to compute a set $\S_0$ that is sufficiently close to being a $(4, O(1))$-ruling set. We assign every client $c$ to some facility $f_c \in \S$ at distance at most $\cost(c, f) = O(\alpha_{c,0})$ (such a facility exists by \Cref{lem:paidfacility}). We then assign a weight to each facility $f \in \S$, which is equal to the sum of the dual values $\alpha_{c,0}$ of all clients for which $f_c = f$. Given this, we ensure that a $(1 - 1/\log n)$-fraction of the total weight of the facilities in $\S$ lies within a constant hop distance in $H'$ of a facility in $\S_0$. We also ensure that all remaining facilities in $\S$ are within distance $o(\log\log n)$ of $\S_0$.

Given the set $\S_0$, we now define clusters as follows: each facility is assigned to its nearest center in $H'$, and each client $c \in \C$ is assigned to the cluster of its selected facility $f_c$.


\paragraph{Step VII: Opening Facilities.}
Within each cluster, we now open one facility that approximately (within a constant factor) minimizes the sum of squared Euclidean distances to the clients in that cluster. This ensures that the selected facility serves as a cost-effective center for the clients in its assigned area, which turns out to be sufficient to guarantee a constant-factor approximation that satisfies the LMP property.

\subsection{Analysis of the High-Level Facility Location Algorithm}\label{sec:fl-analysis}

In the following, we formally establish the correctness of \Cref{alg:fl-overview}. We begin by proving that the upper dual client values computed in Step III of \Cref{alg:fl-overview} ensure that each client contributes toward at least one paid facility (\Cref{lem:paidfacility}). Next, in \Cref{lem:constant-rad}, we show that if a client contributes to a facility, there exists a well-defined relationship between the client's dual value and the facility's radius. \Cref{lemma:sameradius} then establishes that if a client contributes to multiple facilities, those facilities have asymptotically equal radii. Following this, in \Cref{lemma:dualfeasible}, we demonstrate that the facilities that are actually opened are sparse enough that each client contributes toward the opening cost of at most one opened facility. Finally, \Cref{thm:approx_fl_highlevel} analyzes the approximation quality of \Cref{alg:fl-overview}, proving that the algorithm achieves a constant-factor approximation and satisfies the LMP property.

\Cref{alg:fl-overview} uses a number of constants, many of which are interdependent. Specifically, the pseudocode references the constants $C_R$, $C_D^-$, $C_D^+$, $\gamma_1$, $\gamma_2$, $Q$, $C_A$, $\kappa$, and $C_{H,1}$, $C_{H,2}$. In the analysis, we additionally introduce constants $\eta$, $\zeta$, and $\rho$. All of these constants are deterministically defined as functions of a single parameter $\Gamma \geq 5$, which is the approximation factor guaranteed by the graph construction in \Cref{lemma:graphapprox}. The precise definitions of these constants are provided in \Cref{eq:constants}.

\begin{align}\label{eq:constants}
  C_R &= 9\Gamma, &
  C_D^- &= 2\Gamma^2, &
  C_D^+ &= 8\Gamma^4, &
  \gamma_1 &= 4\Gamma^4, \notag \\
  \gamma_2 &= 9\Gamma^4, &
  Q &= 8\Gamma^4, &
  C_A &= 8\Gamma^8, &
  \kappa &= \Gamma^2, \notag \\
  \eta &= 8000\Gamma^{12}, &
  \zeta &= 1 + 16\sqrt{2}\Gamma^7, &
  \rho &= 128\Gamma^{12}, & C_{H,1} &= 4\cdot C_R^2\cdot\zeta^2,\notag \\
  C_{H,2} &= 12\sqrt{\kappa\cdot\rho}\cdot C_R^3\cdot\zeta^2
\end{align}

If the constants are fixed as described above, the following analysis holds for any $\Gamma \geq 5$. The exact value of $\Gamma$ depends on the implementation of the algorithm in the MPC model, which requires $\Gamma$ to be sufficiently large. For the remainder of \Cref{sec:fl-analysis}, we assume that the constants are set according to Equation~\eqref{eq:constants}.

\begin{lemma}\label{lem:paidfacility}
    For every client $c \in \C$, there exists a facility $f \in \F$ that is paid w.r.t the dual values $\alpha_{c,1}$ and that satisfies
    \begin{align*}
        \max\set{r_f^2, \cost(c,f)} \leq \eta \cdot \alpha_{c,0}.
    \end{align*}
\end{lemma}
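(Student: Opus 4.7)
The plan is to reduce the problematic case $c \in \C'$ to a non-problematic witness via iterative descent, and then exhibit a paid facility near this witness using the definition of $\alpha_{c_k}^*$.

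First, I would construct a descent chain $c = c_0, c_1, \ldots, c_k$: starting from $c_0 = c$, while $c_i \in \C'$ apply \eqref{eq:problematicclients2} to pick $c_{i+1} \in B_\C(c_i, \gamma_2 \sqrt{\alpha_{c_i}^*})$ with $\alpha_{c_{i+1},0} \leq \alpha_{c_i,0}/Q$. The chain terminates at some $c_k \notin \C'$ because $\alpha_{c_i,0}$ strictly decreases by at least a factor of $Q$ per step while being bounded below by the smallest dual value; if $c \notin \C'$, take $k = 0$. Using $\alpha_{c_i,0} \leq \alpha_{c,0}/Q^i$ and $\alpha_{c_i}^* \leq C_D^+\alpha_{c_i,0}$, a geometric-series sum yields
\[
\dist(c,c_k) \;\leq\; \sum_{i=0}^{k-1}\gamma_2\sqrt{\alpha_{c_i}^*} \;\leq\; \frac{\gamma_2\sqrt{C_D^+}}{1-1/\sqrt{Q}}\cdot\sqrt{\alpha_{c,0}} \;=\; O(\Gamma^6)\sqrt{\alpha_{c,0}}.
\]

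Next, I would let $f^*$ attain the minimum defining $\alpha_{c_k}^*$, so that $r_{f^*}^2 \leq \alpha_{c_k}^*$ and $\cost(c_k,f^*) \leq \alpha_{c_k}^*$. Then $\dist(c_k,f^*) \leq \sqrt{C_D^+\alpha_{c,0}}$, and by the triangle inequality $\dist(c,f^*) = O(\Gamma^6)\sqrt{\alpha_{c,0}}$, hence $\cost(c,f^*) = O(\Gamma^{12})\alpha_{c,0}$. Together with $r_{f^*}^2 \leq C_D^+\alpha_{c,0}$, the value $\eta = 8000\Gamma^{12}$ from \eqref{eq:constants} is comfortably large enough to give $\max\{r_{f^*}^2,\cost(c,f^*)\} \leq \eta\alpha_{c,0}$, settling the distance condition of the lemma.

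The remaining and main task is to show that $f^*$ is paid with respect to $\alpha_{c,1}$. Since $\sum_{c'}[r_{f^*}^2 - \cost(c',f^*)]^+ = \lambda$ by definition of $r_{f^*}$, the plan is to verify $\alpha_{c',1} \geq r_{f^*}^2$ for every client $c'$ contributing to $f^*$ at radius $r_{f^*}$, so that each term of the dual sum dominates the corresponding term of $\lambda$. Every such $c'$ lies in $B_\C(f^*,r_{f^*}) \subseteq B_\C(c_k, 2\sqrt{\alpha_{c_k}^*}) \subseteq B_\C(c_k,\gamma_1\sqrt{\alpha_{c_k}^*})$ since $\gamma_1 \geq 2$, and the contrapositive of \eqref{eq:problematicclients1} applied to $c_k \notin \C'$ yields $\alpha_{c',0} > \alpha_{c_k,0}/Q$. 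When $c' \notin \C'$, this is further boosted by the factor $C_A$; when $c' \in \C'$, one applies \eqref{eq:problematicclients2} to $c'$ to obtain a further witness whose position relative to $c_k$, combined with non-problematicity of $c_k$, supplies the needed lower bound on $\alpha_{c',0}$.

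The main obstacle is precisely the sub-case $c' \in \C'$ of the previous step: without the $C_A$ boost, $\alpha_{c',1} = \alpha_{c',0}$, and one must push the distance bookkeeping through an additional layer of descent, using the witness produced by \eqref{eq:problematicclients2} applied to $c'$. The choice of constants in \eqref{eq:constants}—in particular the gap between $\gamma_1$ and $\gamma_2$ and the interplay of $C_A$, $Q$, and $C_D^\pm$—is engineered so that this nested geometric-series comparison closes, and verifying this quantitative calibration is the technical heart of the proof.
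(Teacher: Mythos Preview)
Your descent-chain construction and the resulting bounds on $\dist(c,c_k)$, $r_{f^*}$, and $\cost(c,f^*)$ are fine and indeed give the inequality $\max\{r_{f^*}^2,\cost(c,f^*)\}\le\eta\,\alpha_{c,0}$. The gap is in the argument that $f^*$ is paid.

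Your plan is to show $\alpha_{c',1}\ge r_{f^*}^2$ for every $c'\in B_\C(f^*,r_{f^*})$. But any such $c'$ has $\alpha_{c'}^*\le\max\{r_{f^*}^2,\cost(c',f^*)\}=r_{f^*}^2$, hence $\alpha_{c',0}\le \alpha_{c'}^*/C_D^- \le r_{f^*}^2/C_D^- < r_{f^*}^2$. So if $c'\in\C'$ then $\alpha_{c',1}=\alpha_{c',0}<r_{f^*}^2$ unconditionally; no lower bound on $\alpha_{c',0}$ can rescue this. Your suggested fix---taking the witness $c''$ of $c'$ from \eqref{eq:problematicclients2} and invoking the non-problematicity of $c_k$ via \eqref{eq:problematicclients1}---also fails quantitatively: $c''$ lies at distance up to $(2+\gamma_2)\sqrt{\alpha_{c_k}^*}$ from $c_k$, whereas the contrapositive of \eqref{eq:problematicclients1} only controls the ball of radius $\gamma_1\sqrt{\alpha_{c_k}^*}$, and $\gamma_1=4\Gamma^4<2+9\Gamma^4=2+\gamma_2$. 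Thus the existence of a single problematic client in $B_\C(f^*,r_{f^*})$ can leave $f^*$ unpaid, and nothing in your argument rules this out.

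The paper does not attempt to show that $f^*$ is paid in this situation. Instead it argues by minimality: if some problematic $c'$ lies in $B_\C(f^*,r_{f^*})$, its witness $c''$ satisfies $\alpha_{c'',0}<\alpha_{c_k,0}$, so the lemma already holds for $c''$ and the paid facility provided there is shown to be close enough to $c$. In chain language, your descent should not stop at the first non-problematic $c_k$; it must continue whenever the ball $B_\C(f^*,r_{f^*})$ around the candidate facility still contains a problematic client, dropping through that client's witness. Only when you reach a non-problematic terminal client whose associated ball contains no problematic clients can you certify the facility as paid (this is exactly Subcase~1a of the paper's proof). The $\alpha$-value still shrinks geometrically along such an extended chain, so the distance estimate survives.
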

\begin{proof}
    Recall that a facility $f$ is considered paid w.r.t the dual values $\alpha_{c,1}$ if
    \begin{equation}\label{eq:lem:paidfacility}
        \sum_{c \in \C} \left[\alpha_{c,1} - \cost(c,f)\right]^+ \geq \lambda.    
    \end{equation}
    
    Suppose, for the sake of contradiction, that there exists a client $c \in \C$ such that no facility $f \in \F$ is both paid and satisfies \eqref{eq:lem:paidfacility}. Without loss of generality, assume that $c$ is a client with minimum $\alpha_{c,0}$-value for which this is the case.

    We distinguish two cases: 1) when $c$ is a non-problematic client ($c \in \C \setminus \C'$), and 2) when $c$ is a problematic client ($c \in \C'$).

    \paragraph{Case 1 (\boldmath$c \in \C \setminus \C'$):}
    Let $f \in \F$ be a facility that minimizes $\max\{r_f^2, \cost(c,f)\}$. By the definition of $\alpha_c^*$ (cf.\  \eqref{eq:initialdualvalues}), we have
    \begin{align*}
        \alpha_c^* = \max \{r_f^2, \cost(c,f)\}.
    \end{align*}
    This in particular implies that $\max\set{r_f^2, \cost(c,f)} \leq \alpha_c^*$. For each client $c' \in B_{\C}(f, r_f)$, the triangle inequality therefore yields
    \begin{equation}\label{eq:d_c_cp_bound}
      d(c, c') \leq d(c,f) + d(f,c') \leq \sqrt{\cost(c,f)} + r_f \leq 2\sqrt{\alpha_c^*}.
    \end{equation}
    Since $c$ is not classified as problematic, every client $c' \in B_{\C}(c, 2\sqrt{\alpha_c^*})\subseteq B_{\C}(c, \gamma_1\cdot\sqrt{\alpha_c^*})$ satisfies $\alpha_{c', 0} > \frac{\alpha_{c, 0}}{Q}$. For the sake of contradiction, assume that there are no problematic clients within the ball $B_{\C}(f, r_f)$. Under this assumption, we have:
    \begin{eqnarray*}
    \sum_{c' \in \C} \left[ \alpha_{c',1} - \cost(c',f) \right]^+ 
    & \stackrel{(B_{\C}(f, r_f)\subseteq \C,\, [x]^+\geq x)}{\geq} & \sum_{c' \in B_{\C}(f, r_f)} \left( \alpha_{c',1} - \cost(c',f) \right) \\
    & \stackrel{(c'\not\in\C')}{=} & \sum_{c' \in B_{\C}(f, r_f)} \left( C_A \cdot \alpha_{c',0} - \cost(c',f) \right)  \\
    & \stackrel{(c\not\in\C')}{\geq} & \sum_{c' \in B_{\C}(f, r_f)} \left( \frac{C_A}{Q} \cdot\alpha_{c,0} - \cost(c',f) \right)  \\
    & \stackrel{(\alpha_{c,0}\geq\alpha_c^*/C_D^+)}{\geq} & \sum_{c' \in B_{\C}(f, r_f)} \left( \frac{C_A}{Q \cdot C_D^+} \cdot\alpha_c^* - \cost(c',f) \right) \\
    & \stackrel{(r_f^2\leq \alpha_c^*)}{\geq} & \sum_{c' \in B_{\C}(f, r_f)} \left( \frac{C_A}{Q \cdot C_D^+} \cdot r_f^2 - \cost(c',f) \right) \\
    & \stackrel{(C_A\geq Q\cdot C_D^+,\ \text{Eq.}\, \eqref{eq:facilityradius})}{\geq} & \lambda.
    \end{eqnarray*}
    Therefore, $f$ is a paid facility w.r.t the dual values $\alpha_{c,1}$. Because $\alpha_{c,0} \geq \alpha_c^*/C_D^+ = \alpha_c^*/(8\Gamma^4)$ and $\eta = 8000\Gamma^{12}$, we also have $\cost(c,f) \leq \alpha_c^* \leq \eta \cdot \alpha_{c,0}$. Since we assumed that there is no facility $f$ that is paid and satisfies~\eqref{eq:lem:paidfacility}, this contradicts the assumption that there is no problematic client $c' \in B_{\C}(f, r_f)$. Consequently, there must exist some $c' \in \C' \cap B_{\C}(f, r_f)$. Since $c' \in B_{\C}(f, r_f)$, it follows that $\alpha_{c'}^* \leq r_f^2 \leq \alpha_c^*$. This further implies that
    \begin{equation*}
        C_D^-\cdot \alpha_{c',0} \le \alpha_{c'}^* \leq \alpha_c^* \leq C_D^+ \cdot \alpha_{c, 0}.
    \end{equation*}
    Since $c'$ is problematic, there exists a client $c'' \in B_{\C}(c', \gamma_2 \cdot \sqrt{\alpha_{c'}^*})$ such that $\alpha_{c'', 0} \leq \frac{\alpha_{c', 0}}{Q}$. Consequently, we obtain
    \begin{equation}\label{eq:alpha_cpp_bound}
        \alpha_{c'', 0} \leq \frac{C_D^+}{C_D^-\cdot Q} \cdot \alpha_{c, 0}.
    \end{equation}
    Because for $\Gamma\geq 5$, we have $Q = 8\Gamma^4 > C_D^+/C_D^- = 8\Gamma^4/(2\Gamma^2)$, it follows that $\alpha_{c'', 0} < \alpha_{c,0}$. Since we assumed that $c$ has the smallest $\alpha_{c, 0}$ among all clients $c$ for which there is no paid facility $f$ with $\cost(c,f) \leq \eta \cdot \alpha_{c, 0}$, there must exist a paid facility $f$ such that $\cost(c'',f) \leq \eta \cdot \alpha_{c'', 0}$. Consequently, we derive the following bound:
    \begin{eqnarray*}
        \cost(c,f) & \stackrel{(\text{Eq.}\,\eqref{eq:triangleineq})}{\le} & 
        3\cdot \left(\cost(c, c') + \cost(c', c'') + \cost(c'', f)\right)\\
        & \le & 12\alpha_c^* + 12\gamma_2^2\alpha_{c'}^* + 3\eta\cdot \alpha_{c'', 0}\\
        & \stackrel{(\alpha_{c'}^*\leq \alpha_c^*,\ \text{Eq.}\,\eqref{eq:alpha_cpp_bound})}{\le} & 12 C_D^+\cdot \alpha_{c, 0} + 12\gamma^2_2\cdot C_D^+\cdot \alpha_{c, 0} + 3\eta\cdot \frac{C_D^+}{C_D^-\cdot Q}\alpha_{c, 0}\\
        & \leq & \eta\cdot\alpha_{c, 0}
    \end{eqnarray*}
    The second inequality follows from \eqref{eq:d_c_cp_bound}, from $c'' \in B_{\C}(c', \gamma_2 \cdot \sqrt{\alpha_{c'}^*})$, and from $\cost(c'',f) \leq \eta \cdot \alpha_{c'',0}$. The last inequality follows from the definitions of the constants $C_D^+$, $C_D^-$, $\gamma_2$, $Q$, and $\eta$, and from $\Gamma \geq 5$ (cf.\ Eq.~\eqref{eq:constants}). This contradicts the assumption that there is no paid facility $f$ for which \eqref{eq:lem:paidfacility} holds, and it thus concludes Case 1.

    \medskip

    \paragraph{Case 2 (\boldmath$c \in \C'$):} 
    Suppose $c$ is a problematic client. By definition, there then exists a client $c' \in B_{\C}(c, \gamma_2 \cdot \sqrt{\alpha_c^*})$ for which $\alpha_{c',0} \le \frac{\alpha_{c,0}}{Q} < \alpha_{c,0}$. Since we assumed that $c$ has the smallest $\alpha_{c,0}$ among all clients for which there is no paid facility such that \eqref{eq:lem:paidfacility} holds, this implies that there exists a paid facility $f$ such that $\cost(c',f) \le \eta \cdot \alpha_{c',0}$. Using this, we obtain
    \begin{eqnarray*}
        \cost(c,f) & \stackrel{\text{Eq.}\,\eqref{eq:triangleineq}}{\leq} &
        2\cdot\left(\cost(c, c') + \cost(c', f)\right) \\
        & \leq & 8\gamma_2^2 \cdot \alpha_c^* + 2\eta \cdot \alpha_{c',0} \\
        & \stackrel{(\alpha_{c',0} \leq \alpha_{c,0}/Q)}{\leq} & 8\gamma_2^2 \cdot C_D^+ \cdot \alpha_{c,0} + 2\eta \cdot \frac{\alpha_{c,0}}{Q} \\
        & \leq & \eta \cdot \alpha_{c,0}.
    \end{eqnarray*}
    The second inequality follows from $c' \in B_{\C}(c, \gamma_2 \cdot \sqrt{\alpha_c^*})$ and from 
    $\cost(c',f) \le \eta \cdot \alpha_{c',0}$. The last inequality follows from the definitions of the constants $C_D^+$, $\gamma_2$, $Q$, and $\eta$, and from $\Gamma \geq 5$ (cf.\ Eq.~\eqref{eq:constants}). This again contradicts the assumption that there is no paid facility $f$ for which \eqref{eq:lem:paidfacility} holds, and it thus concludes Case 2 and therefore the proof of the lemma.
\end{proof}

\begin{lemma}\label{lem:constant-rad}
    Let $c \in \C$ be a client and $f \in \F$ a facility. If client $c$ $\kappa$-approximately contributes to facility $f$ w.r.t the dual values $\alpha_{c,1}$, then
    \begin{align*}
        \alpha_{c,1} \leq \rho\cdot r_f^2,\quad \text{where }\rho:=\frac{C_A\cdot C_D^+\cdot Q}{(C_D^-)^2} = 128\Gamma^{12}.
    \end{align*}
\end{lemma}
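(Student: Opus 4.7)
The plan is to reduce the lemma to a short chain of inequalities built from Steps~II and~III of \Cref{alg:fl-overview}. By construction $\alpha_{c,1}\le C_A\cdot\alpha_{c,0}\le (C_A/C_D^-)\cdot\alpha_c^*$, and by the very definition $\alpha_c^*=\min_{f'\in\F}\max\{r_{f'}^2,\cost(c,f')\}$ we have $\alpha_c^*\le\max\{r_f^2,\cost(c,f)\}$ for the specific facility $f$ in the statement. Together with the hypothesis $\cost(c,f)<\kappa\alpha_{c,1}$, I would split the argument on whether this max equals $r_f^2$ or $\cost(c,f)$, and in the latter case on whether $c$ is problematic.

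The easy subcase is $\cost(c,f)\le r_f^2$: then $\alpha_c^*\le r_f^2$, so $\alpha_{c,1}\le (C_A/C_D^-)\,r_f^2\le\rho\,r_f^2$ by the numerical choice of constants in Eq.~\eqref{eq:constants}. So the non-trivial subcase is $\cost(c,f)>r_f^2$. If in addition $c\in\C'$, then $\alpha_{c,1}=\alpha_{c,0}$, and the chain $C_D^-\alpha_{c,1}\le\alpha_c^*\le\cost(c,f)<\kappa\alpha_{c,1}$ would force $C_D^-<\kappa$, contradicting $C_D^-=2\Gamma^2>\Gamma^2=\kappa$. So this sub-subcase cannot occur, and the only remaining case is $c\notin\C'$ together with $\cost(c,f)>r_f^2$, where $\alpha_{c,1}=C_A\alpha_{c,0}$.

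For that remaining case I would argue by contradiction: assume $\alpha_{c,1}>\rho\,r_f^2$. Unwinding the definitions gives $\alpha_c^*>(C_D^+Q/C_D^-)\,r_f^2$, while the hypothesis $\cost(c,f)<\kappa\alpha_{c,1}$ combined with $\alpha_{c,1}\le(C_A/C_D^-)\alpha_c^*$ gives an upper bound on $\dist(c,f)$ of the form $c_1\sqrt{\alpha_c^*}$ for an explicit constant $c_1$. By the maximality defining $r_f$, the sum $\sum_{c'}[r_f^2-\cost(c',f)]^+$ equals $\lambda>0$, so there exists at least one $c'$ with $\dist(c',f)<r_f$. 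The triangle inequality together with the bounds just derived yields $\dist(c,c')<\dist(c,f)+r_f<\gamma_1\sqrt{\alpha_c^*}$, placing $c'$ in the ``non-problematic ball'' of $c$. Property~\eqref{eq:problematicclients1}, used contrapositively since $c\notin\C'$, yields $\alpha_{c',0}>\alpha_{c,0}/Q$; and since $\dist(c',f)<r_f$ implies $\alpha_{c'}^*\le r_f^2$, combining with $\alpha_{c'}^*\ge C_D^-\alpha_{c',0}$ produces $r_f^2>C_D^-\alpha_{c,0}/Q$. This finally gives $\alpha_{c,1}\le C_A\alpha_{c,0}<(C_AQ/C_D^-)\,r_f^2\le\rho\,r_f^2$, contradicting the assumption.

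The main obstacle I anticipate is bookkeeping the constants in Eq.~\eqref{eq:constants}. The crucial arithmetic verifications are $\kappa<C_D^-$ (to rule out the problematic sub-subcase), $\dist(c,f)+r_f<\gamma_1\sqrt{\alpha_c^*}$ (which is precisely what motivates the choice $\gamma_1=4\Gamma^4$ relative to $\kappa$, $C_A$, $C_D^-$, and $\rho$), and $C_AQ/C_D^-\le\rho$ for the final contradiction. All three follow from the closed-form values in Eq.~\eqref{eq:constants} as long as $\Gamma\ge 5$, after which the proof is a routine application of the triangle inequality combined with the defining properties of $\alpha_c^*$ and $r_f$.
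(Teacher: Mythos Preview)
Your proposal is correct and follows essentially the same argument as the paper, only with a different top-level case split: you branch first on whether $\cost(c,f)\le r_f^2$ and then on $c\in\C'$ versus $c\notin\C'$, whereas the paper branches first on $c\in\C'$ versus $c\notin\C'$. The substance of both proofs is identical---pick a witness client $c'\in B_\C(f,r_f)$, use the triangle inequality to place $c'$ inside $B_\C(c,\gamma_1\sqrt{\alpha_c^*})$, and invoke the contrapositive of \eqref{eq:problematicclients1}; your final contradiction ($\alpha_{c,1}<(C_AQ/C_D^-)r_f^2\le\rho r_f^2$) is the same chain the paper runs in reverse to conclude $\alpha_{c',0}\le\alpha_{c,0}/Q$.
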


\begin{proof}
    We again distinguish two cases: 1) when $c$ is a non-problematic client ($c \in \C \setminus \C'$) and 2) when $c$ is a problematic client ($c \in \C'$).
    
    \paragraph{Case 1 (\boldmath$c \in \C \setminus \C'$):}
    Assume, for the sake of contradiction, that $\alpha_{c,1} > \rho\cdot r_f^2$. This directly implies that
    \begin{equation}\label{eq:alphacstar_lower}
        \alpha_c^*\ \geq\ C_D^-\cdot \alpha_{c,0}\ \stackrel{(c\in\C)}{=}\ \frac{C_D^-}{C_A}\cdot \alpha_{c,1}\ >\ \frac{C_D^-}{C_A}\cdot \rho\cdot r_f^2.
    \end{equation}
    By the definition of $\alpha_c^*$, we know that $\alpha_c^*\leq \max\big\{r_f^2, \cost(c,f)\big\}$. This implies that
    \begin{equation}\label{eq:cost_c_f_lower}
    \cost(c,f) \geq \alpha_c^* > \frac{C_D^-}{C_A}\cdot \rho \cdot r_f^2 \geq 9\cdot r_f^2.
    \end{equation}
    The last inequality follows because $\rho\geq 9C_A/C_D^-$.
    Since $B_{\C}(f, r_f)$ is non-empty by the definition of $r_f$, there exists a client $c' \in B_{\C}(f, r_f)$. Using this, we can derive the following bound:
    \begin{eqnarray}
        \dist(c,c') & \leq &
        \dist(c,f) + \dist(c',f)\nonumber\\
        & \leq & \dist(c,f) + r_f\nonumber\\
        & \stackrel{(\text{Eq.}\,\eqref{eq:cost_c_f_lower})}{\leq} &  
        \dist(c,f) + \frac{\dist(c,f)}{3}\nonumber \\
        & \leq & \frac{4}{3} \cdot \sqrt{\kappa \cdot \frac{C_A}{C_D^-} \cdot \alpha_c^*}\nonumber\\
        & \le & \gamma_1 \cdot \sqrt{\alpha_c^*}.\label{eq:cp_in_problematicradius}
    \end{eqnarray}
    The second-to-last inequality follows because $c$ $\kappa$-approximately contributes to $f$, and thus $\kappa\cdot\alpha_{c,1} \geq \cost(c,f)$, and from the fact that for $c \in \C \setminus \C'$, $\alpha_{c,1} = C_A \cdot \alpha_{c,0} \leq \frac{C_A}{C_D^-} \cdot \alpha_c^*$. The last inequality holds because $\gamma_1 \geq \sqrt{2 \cdot \kappa \cdot C_A / C_D^-}$.

    Note that because $c' \in B_{\C}(f, r_f)$, we have $\max\set{r_f^2, \cost(c',f)}=r_f^2$ and therefore $\alpha_{c'}^*\leq r_f^2$. This implies
    \[
    \alpha_{c',0} \leq \frac{\alpha_{c'}^*}{C_D^-} \leq \frac{r_f^2}{C_D^-}
    \leq \frac{C_A}{\rho\cdot (C_D^-)^2}\cdot \alpha_c^*\leq
    \frac{C_A\cdot C_D^+}{\rho\cdot (C_D^-)^2}\cdot \alpha_{c,0}
    \leq \frac{\alpha_{c,0}}{Q}.
    \]
    The last inequality follows because $\rho \geq C_A\cdot C_D^+\cdot Q/(C_D^-)^2$. Thus, client $c$ is a problematic client, contradicting the assumption that $c$ is non-problematic, i.e., $c \in \C \setminus \C'$. This concludes Case 1.
    
    \paragraph{Case 2 (\boldmath$c \in \C'$):}
    In this case, we have $\alpha_{c, 1} = \alpha_{c, 0} \le \alpha_c^*/C_D^-$. By the definition of $\alpha_c^*$ (Eq.~\eqref{eq:initialdualvalues}), we know that $\alpha_c^* \le \max\{r_f^2, \cost(c,f)\}$, which implies
    \begin{align*}
        \alpha_{c, 1} &\le \frac{\max\{r_f^2, \cost(c,f)\}}{C_D^-}.
    \end{align*}
    Since client $c$ approximately contributes to facility $f$, we also have $\cost(c,f) \le \kappa \cdot \alpha_{c, 1}$. Substituting into the above inequality, we obtain
    \begin{align*}
        \cost(c,f) &\le \frac{\kappa}{C_D^-} \cdot \max\{r_f^2, \cost(c,f)\}.
    \end{align*}
    Given that $\kappa < C_D^-$, it follows that $\max\{r_f^2, \cost(c,f)\} = r_f^2$. We can therefore conclude that
    \begin{equation*}
        \alpha_{c, 1} \le \frac{r_f^2}{C_D^-} \leq \rho\cdot r_f^2,
    \end{equation*}
    which concludes Case 2 and thus the proof.
\end{proof}

\begin{lemma}\label{lemma:sameradius}
    Let $c \in \C$ be a client, and let $f, f' \in \F$ be two facilities. If the client $c$ $\kappa$-approximately contributes to both facilities $f$ and $f'$ w.r.t the dual values $\alpha_{c,1}$, then the radii of these facilities satisfy 
    \begin{equation}\label{eq:sameradius}
        \max\set{r_f, r_{f'}} \leq \zeta \cdot \min\set{r_f, r_{f'}},
        { \quad\text{where}\ \zeta=
        1+2\cdot\frac{\sqrt{C_A\cdot C_D^+\cdot Q\cdot \kappa}}{C_D^{-}}}.
    \end{equation}
    We further also have $\dist(f,f') < \frac{\zeta}{2}\cdot \min\set{r_f, r_{f'}}$.
\end{lemma}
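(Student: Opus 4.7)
The plan is to prove both claims simultaneously: first bound $\dist(f,f')$ using the triangle inequality together with \Cref{lem:constant-rad}, and then use this distance bound to derive the radii comparison via a contradiction argument that exploits the maximality in the definition of $r_f$. Without loss of generality, assume $r_f \leq r_{f'}$.

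For the distance bound, I use that $c$ $\kappa$-approximately contributes to both $f$ and $f'$, so $\dist(c,f)^2,\dist(c,f')^2 \leq \kappa\alpha_{c,1}$. Applying \Cref{lem:constant-rad} with the smaller radius $r_f$ yields $\alpha_{c,1} \leq \rho r_f^2$, and the triangle inequality then gives
\[
\dist(f,f') \;\leq\; \dist(c,f) + \dist(c,f') \;\leq\; 2\sqrt{\kappa\rho}\cdot r_f \;=\; (\zeta-1)\cdot r_f,
\]
which matches the bound on $\dist(f,f')$ up to the explicit leading constant.

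For the radii comparison, I argue by contradiction: suppose $r_{f'} > \zeta r_f$, and set $r' := r_{f'} - \dist(f,f')$. By the distance bound just proved, $r' > \zeta r_f - (\zeta-1) r_f = r_f$, so the maximality in \eqref{eq:facilityradius} forces $\sum_{c''}[r'^2 - \cost(c'',f)]^+ > \lambda$. The key step is to transfer this sum from $f$ to $f'$: for each client $c''$, the triangle inequality $|\dist(c'',f) - \dist(c'',f')| \leq \dist(f,f') = r_{f'} - r'$ gives both $r_{f'} - \dist(c'',f') \geq r' - \dist(c'',f)$ and $r_{f'} + \dist(c'',f') \geq r' + \dist(c'',f)$. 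When $\dist(c'',f)\leq r'$, both right-hand sides are non-negative, so multiplying yields $r_{f'}^2 - \cost(c'',f') \geq r'^2 - \cost(c'',f)$; when $\dist(c'',f) > r'$, the quantity $[r'^2 - \cost(c'',f)]^+$ is zero while $[r_{f'}^2 - \cost(c'',f')]^+$ is non-negative. In either case $[r_{f'}^2 - \cost(c'',f')]^+ \geq [r'^2 - \cost(c'',f)]^+$, and summing over $c''$ gives $\sum_{c''}[r_{f'}^2 - \cost(c'',f')]^+ > \lambda$, contradicting the defining property of $r_{f'}$.

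The main obstacle is the transfer step: one must handle the truncation by $[\cdot]^+$ term-by-term without losing a factor, and the factorization $r^2 - \dist^2 = (r-\dist)(r+\dist)$ combined with the two triangle-inequality consequences is what keeps the comparison clean. Once this inequality is in place, the contradiction with the maximality of $r_{f'}$ is immediate.
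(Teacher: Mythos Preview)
Your proof is correct and, for the radii comparison, takes a genuinely different route from the paper. The distance bound is identical: both you and the paper use \Cref{lem:constant-rad} plus the triangle inequality to get $\dist(f,f')\le 2\sqrt{\kappa\rho}\,r_f=(\zeta-1)r_f$. (Neither argument actually yields the stated $\zeta/2$ constant; that appears to be a slip in the lemma statement itself.)

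For the radii bound, the paper argues \emph{directly}: it sets $r':=r_f+2\dist(f,f')$, observes $B_{\C}(f,r_f)\subseteq B_{\C}(f',r')$, and expands $(r_f+2d)^2-(\dist(\bar c,f)+d)^2$ term by term to show $\sum_{\bar c}[r'^2-\cost(\bar c,f')]^+\ge\lambda$, whence $r_{f'}\le r'$. You instead argue by \emph{contradiction}, setting $r':=r_{f'}-\dist(f,f')>r_f$ and transferring the overpayment at $f$ back to $f'$ via the clean factorization $r^2-\dist^2=(r-\dist)(r+\dist)$ together with the two one-sided triangle inequalities. Your term-by-term inequality $[r_{f'}^2-\cost(c'',f')]^+\ge[r'^2-\cost(c'',f)]^+$ is tight and avoids the explicit expansion. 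A small bonus: your approach lands exactly on $\zeta=1+2\sqrt{\kappa\rho}$, whereas the paper's choice $r'=r_f+2\dist(f,f')$ naturally gives $1+4\sqrt{\kappa\rho}$ (the final line of the paper's proof drops a factor of two when substituting). Your route is therefore both correct and slightly sharper in the constants.
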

\begin{proof}
    Without loss of generality, assume that $r_f\leq r_{f'}$. By \Cref{lem:constant-rad}, we have
    \begin{equation}\label{eq:def_of_rho}
        \alpha_{c,1} \leq \rho\cdot r_f^2,\quad\text{where }\rho = \frac{C_A\cdot C_D^+\cdot Q}{(C_D^-)^2}.     
    \end{equation}
    Since, $c$ $\kappa$-approximately contributes to $f$ and $f'$, we further have
    \begin{equation*}
        \cost(c,f) \le \kappa \cdot \alpha_{c,1}
        \quad\text{and}\quad
        \cost(c,f') \le \kappa \cdot \alpha_{c,1}.
    \end{equation*}
    Combining the two inequalities gives
    \[
    \dist(c,f) \le \sqrt{\kappa \cdot \rho}\cdot r_f
    \quad\text{and}\quad
    \dist(c,f') \le \sqrt{\kappa \cdot \rho}\cdot r_f
    \]
    and therefore (by the triangle inequality)
    \begin{equation}\label{eq:dist_f_fp}
        \dist(f,f') \leq 2\cdot\sqrt{\kappa\cdot\rho}\cdot r_f.
    \end{equation}
    
    Note that the ball of radius $r_f$ around facility $f$ is contained within the ball of radius $r_f + \dist(f,f')$ around facility $f'$. We define a new radius $r'$ as
    \begin{align*}
        r' := r_f + 2\dist(f, f').
    \end{align*}
    Now consider the ball of radius $r'$ around facility $f'$. We derive
    \begin{align*}
        \sum_{\bar{c} \in B_{\C}(f', r')} \!\!\!\!(r'^2 - \cost(\bar{c},f')) &\geq \sum_{\bar{c} \in B_{\C}(f, r_f)} \!\!\!\!(r'^2 - \cost(\bar{c},f'))\\
        &= \sum_{\bar{c} \in B_{\C}(f, r_f)} \!\!\!\!\left(r_f^2 + 4r_f \dist(f,f') + 4\dist^2(f, f') - \cost(\bar{c},f')\right)\\
        &\geq \sum_{\bar{c} \in B_{\C}(f, r_f)} \!\!\!\!\left(r_f^2 + 4r_f \dist(f,f') + 4\dist^2(f, f') - (\dist(\bar{c},f) + \dist(f,f'))^2\right)\\
        &\geq \sum_{\bar{c} \in B_{\C}(f, r_f)} \!\!\!\!\left(r_f^2 - \cost(\bar{c},f) + 4r_f \dist(f,f') - 2r_f \dist(f,f')\right)\\
        &= \sum_{\bar{c} \in B_{\C}(f, r_f)} \!\!\!\!\left(r_f^2 - \cost(\bar{c},f) + 2r_f \dist(f,f')\right)\\
        &\geq \sum_{\bar{c} \in B_{\C}(f, r_f)} \!\!\!\!\left(r_f^2 - \cost(\bar{c},f)\right)\\
        &= \lambda.
    \end{align*}
    The first inequality follows because $B_{\C}(f, r_f)\subseteq B_{\C}(f', r')$. The equation in the second line follows from the definition of $r'$. The second inequality (in line 3) follows from the triangle inequality. The third inequality follows because $\bar{c}\in B_{\C}(f, r_f)$ and thus $\dist(\bar{c},f)\leq r_f$. By \eqref{eq:facilityradius}, the fact that $\sum_{\bar{c} \in B_{\C}(f', r')} (r'^2 - \cost(\bar{c},f'))\geq\lambda$ holds implies that $r_{f'}\leq r'$. Inequality \eqref{eq:dist_f_fp} therefore implies that
    \begin{align*}
        r_{f'} \le r_f + 2d(f, f') \le r_f + 2\sqrt{\rho\kappa} \cdot r_f = \left(1 + 2\sqrt{\rho\kappa}\right)\cdot r_f.
    \end{align*}
    Because we assumed that $r_f \le r_{f'}$, together with \eqref{eq:def_of_rho} this directly implies that statement of the lemma.
\end{proof}

Before analyzing the approximation guarantee of the exact algorithm as stated in \Cref{alg:fl-overview}, we first analyze an alternative algorithm that is closer to the sequential primal-dual algorithm, but which cannot be easily implemented in the MPC model. In the alternative algorithm, only Step VII of \Cref{alg:fl-overview} is modified, where we decide which facilities to open. We now first describe this alternative Step VII.

\paragraph{Alternative Step VII: Opening Facilities.} Let $f \in \S_0$ be a cluster center. We define $\C_f$ to be the set of clients $c$ that $\kappa$-approximately contribute to $f$ w.r.t the dual values $\alpha_{c,1}$. Note that if there exists a client that $\kappa$-approximately contributes to two facilities $f_1, f_2 \in \S$, then there is an edge between $f_1$ and $f_2$ in the dependency graph $H$ and thus also in the extended graph $H'$. Therefore, the clients in $\C_f$ can only $\kappa$-approximately contribute to $f$ and its direct neighbors in $H$ (and possibly to facilities that are not $\kappa$-approximately paid and may not be in $\S$).

The algorithm now computes a dual solution $(\alpha_c, \beta_{cf})$ to the LP \eqref{eq:fl-dual} as follows. For all clients $c$ that are not in the set $\C_f$ for any $f \in \S_0$ (i.e., clients that do not $\kappa$-approximately contribute to any cluster center), we set $\alpha_c := \alpha_{c,0}$. Furthermore, for each $f \in \S_0$, let $N_H^+(f)$ denote the set consisting of $f$ and all its neighbors in $H$.

For each $f \in \S_0$, we determine a value $\xi_f \in \mathbb{R}$, defined as the minimum $\xi \in \mathbb{R}$ for which there exists a facility $f' \in N_H^+(f)$ such that
\[
\sum_{c \in \C_f} \big[\alpha_{c,0} + \xi \cdot (\kappa \cdot \alpha_{c,1} - \alpha_{c,0}) - \cost(c,f')\big]^+ + \sum_{c \in \C \setminus \C_f} [\alpha_{c,0} - \cost(c,f')]^+ = \lambda.
\]

For each $c \in \C_f$, we then set $\alpha_c := \alpha_{c,0} + \xi_f \cdot (\kappa \cdot \alpha_{c,1} - \alpha_{c,0})$.
Further, for all $c$ and $f$, we define $\beta_{cf} := [\alpha_c - \cost(c,f)]^+$. For each $f \in \S_0$, we open exactly one facility $f' \in N_H^+(f)$ that satisfies
\[
\sum_{c \in \C_f} \big[\alpha_c - \cost(c,f')\big]^+ = \sum_{c \in \C_f} \beta_{cf} = \lambda.
\]

We denote the set of open facilities by $\F'$. By construction, we have $|\F'| = |\S_0|$. The following lemma shows that the computed solution is dual feasible and satisfies all properties required for the analysis of the approximation factor.

\begin{lemma}\label{lemma:dualfeasible}
    The dual solution computed by the above algorithm is feasible for the LP \eqref{eq:fl-dual}. Moreover, the algorithm guarantees that for every client $c \in \C$:
    \begin{itemize}
        \item[(I)] We have $\alpha_{c,0} \leq \alpha_c \leq \kappa \cdot \alpha_{c,1}$, and
        \item[(II)] There is at most one open facility $f \in \F'$ for which $\beta_{cf} = [\alpha_c - \cost(c,f)]^+ > 0$.
    \end{itemize}
\end{lemma}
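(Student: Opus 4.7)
The plan is to prove the three claims in the order (I), dual feasibility, then (II), since (I) is used in the later steps. A preliminary fact used repeatedly is that no facility is paid with respect to $\alpha_{c,0}$: if $\alpha_{c,0}>\cost(c,f)$ then $\alpha_c^*\geq C_D^-\alpha_{c,0}>\cost(c,f)$, so the definition of $\alpha_c^*$ as a minimum over facilities forces $\alpha_c^*\leq r_f^2$, hence $\alpha_{c,0}\leq r_f^2/C_D^-$; summing yields $\sum_c[\alpha_{c,0}-\cost(c,f)]^+\leq\sum_c[r_f^2-\cost(c,f)]^+\leq\lambda$ by \eqref{eq:facilityradius}, with strict inequality because $C_D^->1$.

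For property (I), the case $c\notin\bigcup_{g\in\S_0}\C_g$ is trivial since $\alpha_c=\alpha_{c,0}$. For $c\in\C_f$ it suffices to show $\xi_f\in[0,1]$. The defining sum is continuous and non-decreasing in $\xi$ (since $\kappa\alpha_{c,1}\geq\alpha_{c,0}$), and at $\xi=0$ it is $<\lambda$ by the preliminary fact, giving $\xi_f\geq 0$. For $\xi_f\leq 1$, I would use that $f\in N_H^+(f)$ and that $f\in\S_0\subseteq\S$ is $\kappa$-approximately paid; since every client that $\kappa$-approximately contributes to $f$ w.r.t.\ $\alpha_{c,1}$ lies in $\C_f$ by the very definition of $\C_f$, at $\xi=1$ the sum for $f$ becomes $\sum_{c\in\C}[\kappa\alpha_{c,1}-\cost(c,f)]^+\geq\lambda$, so by continuity there is some $\xi\leq 1$ where this sum equals $\lambda$.

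For dual feasibility, the constraint $\alpha_c-\beta_{cf}\leq\cost(c,f)$ is immediate from $\beta_{cf}=[\alpha_c-\cost(c,f)]^+$. For $\sum_c\beta_{cf''}\leq\lambda$ I would split on whether $f''\in\S$. If $f''\notin\S$, then $f''$ is not $\kappa$-approximately paid and (I) gives $\sum_c\beta_{cf''}\leq\sum_c[\kappa\alpha_{c,1}-\cost(c,f'')]^+<\lambda$. If $f''\in\S$, the key structural observation is that a client $c\in\C_g$ with $g\in\S_0$, $g\neq f$ cannot contribute to $f''$ when $f''\in N_H^+(f)$: otherwise (I) would give $\kappa\alpha_{c,1}>\cost(c,f'')$, so $c$ $\kappa$-approximately contributes to both $g$ and $f''$, forcing $\{g,f''\}\in E_H$ (or $g=f''$) and thus $d_H(f,g)\leq 2$, contradicting the $4$-separation of $\S_0$. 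The same reasoning gives $\alpha_{c,0}\leq\cost(c,f'')$ for such $c$, so these clients contribute $0$ to both the actual sum and the defining equation. Hence, when $f''\in N_H^+(f)$ for the (unique, by separation) $f\in\S_0$, the sum $\sum_c\beta_{cf''}$ equals exactly the left-hand side of the defining equation for $\xi_f$ with $f'$ replaced by $f''$, and minimality of $\xi_f$ yields $\leq\lambda$. When $f''$ lies in no $N_H^+(f)$, an analogous argument rules out any $\C_g$-contribution, so $\sum_c\beta_{cf''}\leq\sum_c[\alpha_{c,0}-\cost(c,f'')]^+\leq\lambda$ by the preliminary fact.

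For property (II), suppose $\beta_{cf'}>0$ and $\beta_{cf''}>0$ for distinct $f',f''\in\F'$. Then $\alpha_c>\cost(c,f')$ and $\alpha_c>\cost(c,f'')$, and (I) upgrades these to $c$ $\kappa$-approximately contributing to both; since $f',f''\in\S$, this gives $\{f',f''\}\in E_H$. Writing $f,g\in\S_0$ for the cluster centers of $f'$ and $f''$ (which must be distinct, since each cluster opens exactly one facility and $|\F'|=|\S_0|$), we obtain $d_H(f,g)\leq d_H(f,f')+1+d_H(f'',g)\leq 3$, contradicting the $4$-separation of $\S_0$ in $H$. The main obstacle is the structural argument in the feasibility step: carefully leveraging the $4$-separation in $H$ to rule out cross-cluster contributions so that each $\sum_c\beta_{cf''}$ reduces to a single instance of the defining equation for some $\xi_f$; once that is in place, the remaining verifications are routine.
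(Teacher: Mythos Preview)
Your overall strategy and the arguments for (I) and for (II) match the paper's closely: you first establish that no facility is paid under the values $\alpha_{c,0}$, combine this with the $\kappa$-approximate paidness of each $f\in\S_0$ to pin $\xi_f\in[0,1]$, and then exploit the $4$-separation of $\S_0$ in $H$ to rule out cross-cluster contributions. These parts are correct and essentially identical to the paper's treatment.

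There is, however, a concrete gap in your feasibility argument. You assert that $f''\notin\S$ implies $f''$ is not $\kappa$-approximately paid, and from this conclude $\sum_c[\kappa\alpha_{c,1}-\cost(c,f'')]^+<\lambda$. Step~IV does not give you this implication: condition~(a) says $\S$ contains all \emph{paid} facilities, and condition~(b) says every facility \emph{in} $\S$ is $\kappa$-approximately paid; in other words $\{\text{paid}\}\subseteq\S\subseteq\{\kappa\text{-approx.\ paid}\}$. So $f''\notin\S$ only tells you that $f''$ is not paid w.r.t.\ $\alpha_{c,1}$, not that it fails to be $\kappa$-approximately paid, and the displayed inequality you rely on can fail. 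The paper does not make this case split at all: it argues directly that, for every $f\in\S_0$ and every client $c\in\C_f$, the facilities to which $c$ can $\kappa$-approximately contribute all lie in $N_H^+(f)$, so raising such a client's dual value cannot affect the constraint $\sum_c\beta_{cf'}\leq\lambda$ for any $f'$ outside $N_H^+(f)$; every remaining facility is then handled by the preliminary $\alpha_{c,0}$-infeasibility fact rather than by an appeal to non-membership in $\S$.
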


\begin{proof}
    We first show that for every client $c \in \C$, we have $\alpha_{c,0} \leq \alpha_c \leq \kappa \cdot \alpha_{c,1}$. To show that $\alpha_c \geq \alpha_{c,0}$, we need to show that $\xi_f \geq 0$ for every $f \in \S_0$. This follows from the fact that if all clients choose $\alpha_{c,0}$ as their dual value, then no facility is paid. That is, we need to show that $\sum_{c \in \C} [\alpha_{c,0} - \cost(c,f)]^+ < \lambda$ for every $f \in \F$. If for all $c \in \C$, $\alpha_{c,0} \leq \cost(c,f)$, we have $\sum_{c \in \C} [\alpha_{c,0} - \cost(c,f)]^+ = 0$, which is clearly less than $\lambda$. Let us therefore consider a facility $f \in \F$ for which there exists at least one client $c$ with $\alpha_{c,0} > \cost(c,f)$:
    \begin{eqnarray*}
        \sum_{c \in \C} \big[\alpha_{c,0} - \cost(c,f)\big]^+
        & < &
        \sum_{c \in \C} \big[\alpha_c^* - \cost(c,f)\big]^+\\
        & \stackrel{\text{Eq.}\,\eqref{eq:initialdualvalues}}{\leq} &
        \sum_{c \in \C} \left[\max\set{r_f^2, \cost(c,f)} - \cost(c,f)\right]^+\\
        & = & \sum_{c \in \C} \big[r_f^2 - \cost(c,f)\big]^+\\
        & \stackrel{\text{Eq.}\,\eqref{eq:facilityradius}}{=} & \lambda.
    \end{eqnarray*}
    The first inequality follows because for all clients $c$, $\alpha_{c,0} < \alpha_c^*$, and because there exists a client $c$ for which $\alpha_{c,0} > \cost(c,f)$. In order to have a paid facility within $N_H^+(f)$ for some $f \in \S_0$, one therefore needs to choose $\xi_f > 0$.

    To show that $\xi_f \leq 1$, observe that every facility $f \in \S$—and therefore also every facility $f \in \S_0$—is $\kappa$-approximately paid w.r.t the dual values $\alpha_{c,1}$. This directly implies that if all clients $c \in \C_f$ choose $\kappa \cdot \alpha_{c,1}$ as their dual value, then $f$ is paid. The value of $\xi_f$ can therefore not exceed $1$.

    It remains to show that the computed solution is feasible for the dual LP \eqref{eq:fl-dual} and that Property (II) holds. To prove dual feasibility, first note that we clearly have $\alpha_c \geq 0$ and $\beta_{cf} \geq 0$ for all $c \in \C$ and $f \in \F$. The inequalities $\alpha_c - \beta_{cf} \leq \cost(c,f)$ also directly hold for all $c \in \C$ and $f \in \F$ because we set the $\beta_{cf}$-variables to $\beta_{cf} = [\alpha_c - \cost(c,f)]^+$. It therefore remains to show that for all $f \in \F$, we have $\sum_{c \in \C} \beta_{cf} \leq \lambda$. If we set all dual client variables to $\alpha_{c,0}$, this directly follows from the above calculations. Note that we set $\alpha_c > \alpha_{c,0}$ only for clients $c \in \C_f$ for some $f \in \S_0$. Because for each $f \in \S_0$ and each client $c \in \C_f$, the set $N_H^+(f)$ contains all the facilities for which $\kappa \cdot \alpha_{c,1} \geq \cost(c,f)$, increasing the values of the clients in $\C_f$ can only affect the value of $\sum_{c \in \C} \beta_{cf'}$ for facilities $f' \in N_H^+(f)$. Further, because facilities in $\S_0$ are at distance at least $4$ in $H$, the sets $N_H^+(f)$ for different $f \in \S_0$ are disjoint. For a specific $f \in \S_0$, the choice of the value $\xi_f$ guarantees that no facility in $N_H^+(f)$ becomes overpaid, and thus the computed solution is feasible for \eqref{eq:fl-dual}.

    Let us now prove that Property (II) holds, i.e., that for each client $c \in \C$, there is at most one open facility $f$ for which $\alpha_c > \cost(c,f)$. Note that we can only have $\alpha_c > \cost(c,f)$ if $\kappa \cdot \alpha_{c,1} > \cost(c,f)$, and thus only if $c$ $\kappa$-approximately contributes to $f$ w.r.t the dual values $\alpha_{c,1}$. Therefore, if $\alpha_c > \cost(c,f_1) > 0$ and $\alpha_c > \cost(c,f_2) > 0$ for two facilities $f_1$ and $f_2$, then $f_1$ and $f_2$ must be neighbors in $H$. Because the cluster centers $\S_0$ are at distance at least $4$ from each other, and all the open facilities are either cluster centers or direct neighbors of cluster centers, the open facilities of two different clusters are at least two hops apart in $H$. It is therefore not possible that a client contributes toward the opening cost of two open facilities (i.e., Property (II) holds).
\end{proof}

We are now ready to prove that \Cref{alg:fl-overview}, with the alternative Step VII, computes a constant-factor approximate solution to the facility location problem that satisfies the LMP property.

\begin{lemma}\label{lemma:approx_alternativealg}
    \Cref{alg:fl-overview}, with the alternative Step VII, opens a set of facilities $\F'$ and connects each client $c \in \C$ to an open facility $f'_c \in \F'$ such that
    \[
    \sum_{c \in \C} \cost(c, f'_c)\ \leq\ \Lambda \cdot \left(
    \mathsf{OPT} - |\F'| \cdot \lambda \right)
    \]
    for some constant $\Lambda \geq 1$, where $\mathsf{OPT}$ denotes the objective value of an optimal solution to the given facility location instance.
\end{lemma}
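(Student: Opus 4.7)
The plan is to follow the classical Jain--Vazirani primal-dual accounting~\cite{JV01}, using the geometric control provided by \Cref{lem:constant-rad,lemma:sameradius} and the clustering guarantees of Step VI. First I would invoke \Cref{lemma:dualfeasible}: the pair $(\alpha_c,\beta_{cf})$ is dual feasible for \eqref{eq:fl-dual}, hence $\sum_{c\in\C}\alpha_c \leq \mathsf{OPT}$. Second, by the construction of the alternative Step VII, each opened facility $f'\in\F'$ satisfies $\sum_c \beta_{cf'}=\lambda$ exactly; combined with Property (II) of \Cref{lemma:dualfeasible} (each client contributes to at most one open facility), this yields $\sum_{c\in\C_1}\beta_{c f'_c}=|\F'|\cdot\lambda$, where $\C_1$ denotes the directly connected clients (those with $\alpha_c>\cost(c,f'_c)$) and $f'_c$ is the open facility in $c$'s cluster.

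Next I would partition the remaining clients into $\C_2^+$ (indirectly connected, with $f_c$ within $O(1)$ hops of some $f_0\in\S_0$ in $H'$, i.e., those in $\C^+$) and $\C_2^-$ (indirectly connected, with $f_c$ only within $o(\log\log n)$ hops of $\S_0$). For $c\in\C_1$ we have the exact identity $\cost(c,f'_c)=\alpha_c-\beta_{c f'_c}$. For $c\in\C_2^+$, I would chain the relaxed triangle inequality \eqref{eq:triangleineq} along the three-edge path $c \to f_c \to f_0 \to f'_c$: the first hop contributes $O(\alpha_{c,0})=O(\alpha_c)$ by the choice of $f_c$ in Step VI; each subsequent hop in $H'$ contributes $O(r_{f_c}^2)=O(\alpha_c)$, because Step V bounds each $H'$-edge distance by $C_{H,2}\cdot\min\{r_f,r_{f'}\}$ and the radii along an $O(1)$-hop path stay within a constant factor (by composing the $C_{H,1}$-factor bound constantly many times). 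Applying \Cref{lem:constant-rad,lemma:sameradius} to the relation between $\alpha_c$ and the radii of the adjacent facilities then gives a constant $\Lambda_1$ with $\cost(c,f'_c)\leq \Lambda_1\cdot\alpha_c$. For $c\in\C_2^-$, the same telescoping over an $o(\log\log n)$-hop path in $H'$ yields the weaker bound $\cost(c,f'_c)\leq (\log n)^{o(1)}\cdot\alpha_c$.

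With these per-client bounds, the standard JV accounting $\Lambda'\alpha_c \geq \cost(c,f'_c)+\Lambda'\beta_{c f'_c}\cdot\mathbf{1}[c\in\C_1]$ for every $c\in\C_1\cup\C_2^+$ holds for any $\Lambda'\geq\max\{1,\Lambda_1\}$. Summing gives $\sum_{\C_1\cup\C_2^+}\cost(c,f'_c)+\Lambda'|\F'|\lambda \leq \Lambda'\sum_{\C_1\cup\C_2^+}\alpha_c \leq \Lambda'\cdot\mathsf{OPT}$. The contribution of $\C_2^-$ is controlled by the Step VI guarantee $(A-A^+)/A\leq 1/\log n$: it amounts to at most $(\log n)^{o(1)}\cdot(A-A^+)\leq (\log n)^{o(1)}\cdot A/\log n = o(\mathsf{OPT})$. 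For sufficiently large $n$, this lower-order additive term can be absorbed by slightly enlarging the constant to a final $\Lambda$ and using $\mathsf{OPT}\geq\sum_c\alpha_c\geq A$, yielding the desired LMP bound $\sum_{c\in\C}\cost(c,f'_c)\leq \Lambda(\mathsf{OPT}-|\F'|\lambda)$.

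The main obstacle I expect is the indirect-cost bound for $c\in\C_2^+$: one must rigorously track both the radii and the Euclidean distances along the three-edge chain $c\to f_c\to f_0\to f'_c$, using the $H'$-edge conditions of Step V together with \Cref{lem:constant-rad,lemma:sameradius}, so that both the individual hop lengths and the squared-distance accumulation under \eqref{eq:triangleineq} yield only $O(\alpha_c)$ and not a blown-up quantity. A secondary technical point is the clean absorption of the $\C_2^-$ contribution into the LMP form, which exploits precisely the tight guarantee $(A-A^+)/A\leq 1/\log n$ paired with the $o(\log\log n)$-hop bound of Step VI to ensure that the extra term is genuinely $o(\mathsf{OPT})$ rather than a non-negligible additive loss.
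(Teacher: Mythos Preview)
Your proposal is essentially the paper's proof: the same three-way split of clients into those directly paying an open facility, those whose $f_c$ is within $O(1)$ $H'$-hops of a center, and those only within $o(\log\log n)$ hops; the same exact identity $\sum_{\C_1}\cost=\sum_{\C_1}\alpha_c-|\F'|\lambda$; the same per-hop control via the $C_{H,1},C_{H,2}$ edge bounds of Step V (together with \Cref{lem:constant-rad}) to get $\cost(c,f'_c)=2^{O(d_c)}\alpha_c$; and the same use of the Step VI weight guarantee to tame the far-client contribution. Your ``three-edge path'' wording for $\C_2^+$ should read ``$O(1)$-edge path'' (as you yourself note two sentences later), and your final absorption of the $\C_2^-$ term into the LMP form is at exactly the same level of detail as the paper's combining step.
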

\begin{proof}
    We partition the clients $\C$ into three sets: $\C_O$, $\C_N$, and $\C_F$. The clients $c \in \C_O$ are those for which there exists an open facility $f \in \F'$ such that $\alpha_c > \cost(c,f)$. The clients in $\C_N$ are those not in $\C_O$ and are labeled $\C^+$ in Step VI of \Cref{alg:fl-overview}. That is, $\C_N$ consists of clients $c$ for which the facility $f_c$ (as determined in Step VI of the algorithm) is within constant distance in $H'$ from its cluster center. For concreteness, assume that \Cref{alg:fl-overview} guarantees that for clients $c \in \C_N$, the facility $f_c$ is within at most $d_0 = O(1)$ hops in $H'$ from its cluster center. Finally, the clients in $\C_F$ are those that are far from their cluster center. According to \Cref{alg:fl-overview}, for clients $c \in \C_F$, the facility $f_c$ is guaranteed to be within $o(\log\log n)$ hops from its cluster center. We analyze the connection costs of the three sets of clients separately.
   
    Let us first focus on the clients in $\C_O$, which contribute to the opening cost of some open facility. By \Cref{lemma:dualfeasible}, for every $c \in \C_O$, there is exactly one open facility $f \in \F'$ for which $\alpha_c > \cost(c, f)$. We have:
    \begin{eqnarray*}
        \sum_{c \in \C_O} \cost(c, f_c') & = &
        \sum_{c \in \C_O} \big(\alpha_c - (\alpha_c - \cost(c, f_c'))\big)\\
        & = & 
        \sum_{c \in \C_O} \big(\alpha_c - [\alpha_c - \cost(c, f_c')]^+\big)\\
        & = & \sum_{c \in \C_O} \alpha_c 
        - \sum_{f \in \F'} \sum_{c \in \C} [\alpha_c - \cost(c, f)]^+\\
        & = & \sum_{c \in \C_O} \alpha_c - |\F'| \cdot \lambda.
    \end{eqnarray*}
    
    For clients $c \in \C \setminus \C_O$, the connection cost depends in particular on how far the facility $f_c \in \S$ is from its cluster center. Assume that $f_c$ is $d_c$ hops away (in $H'$) from the cluster center. If the cluster center is $d_c$ hops away from $f_c$, then the opened facility $f_c'$ is at most $d_c + 1$ hops away from $f_c$.
    
    Following the definition of the dependency graph $H'$, there exists an edge between two facilities $f$ and $f'$ if and only if 
    \[
    \dist(f, f') \leq O(1)\cdot \max\{r_f, r_{f'}\}.
    \] 
    Consequently, along any path in $H'$, facility radii can change (increase or decrease) by at most a constant factor per hop. Therefore, we obtain:
    \[
    \dist(f_c, f_c') \leq \sum_{i = 1}^{d_c} O(r_{f_i}) \leq r_{f_c}\cdot O(1)^{d_c} = 2^{O(d_c)}\cdot r_{f_c}.
    \]
    Since client $c$ approximately contributes to facility $f_c$, we have $\dist(c, f_c) = O(r_{f_c}) = O(\sqrt{\alpha_c})$ by \Cref{lem:constant-rad}. Combining these results, we derive:
    \[
    \dist(c, f_c') \leq \dist(c, f_c) + \dist(f_c, f_c') \leq O(\sqrt{\alpha_c}) + 2^{O(d_c)}\cdot \sqrt{\alpha_c} = 2^{O(d_c)}\cdot \sqrt{\alpha_c}.
    \]
    Squaring both sides yields
    \[
    \cost(c, f_c') = \dist^2(c, f_c') \leq 2^{O(d_c)}\cdot \alpha_c.
    \]
    For each client $c\in \C_N$, we thus have $\cost(c,f_c') = O(1)\cdot \alpha_c$, and therefore,
    \[
        \sum_{c\in \C_N} \cost(c,f'_c) = O(1)\cdot \sum_{c\in \C_N} \alpha_c.
    \]
    For clients $c\in \C_F$, we have
    \[
        \sum_{c\in \C_F} \cost(c,f'_c) = 2^{o(\log\log n)}\cdot \sum_{c\in \C_F} \alpha_c = o(\log n)\cdot\sum_{c\in \C_F}\alpha_c.
    \] 
    Since \Cref{alg:fl-overview} guarantees that $\sum_{c\in \C_F}\alpha_{c,0} \leq \frac{1}{\log n}\cdot\sum_{c\in \C}\alpha_{c,0}$, it follows from the definition of $\alpha_c$ that:
    \[
        \sum_{c\in \C_F} \cost(c,f'_c) = O(1)\cdot \sum_{c\in \C}\alpha_c.
    \]
    Combining the bounds obtained for $\sum_{c\in \C_O} \cost(c,f'_c)$, $\sum_{c\in \C_N} \cost(c,f'_c)$, and $\sum_{c\in \C_F} \cost(c,f'_c)$, we conclude that the lemma holds for a sufficiently large constant $\Lambda \geq 1$.
\end{proof}

We can now prove that the original high-level algorithm described in \Cref{alg:fl-overview} (i.e., with Step VII as stated in the pseudocode) also computes a constant-approximate solution to the facility location problem that satisfies the LMP property.

\begin{theorem}\label{thm:approx_fl_highlevel}
    \Cref{alg:fl-overview} opens a set of facilities $\F'$ and connects each client $c \in \C$ to an open facility $f'_c \in \F'$ such that
    \[
        \sum_{c \in \C} \cost(c, f'_c)\ \leq\ \Lambda \cdot \left(
        \mathsf{OPT} - |\F'|\cdot\lambda\right),
    \]
    for some constant $\Lambda \geq 1$, where $\mathsf{OPT}$ denotes the objective value of an optimal solution to the given facility location instance.
\end{theorem}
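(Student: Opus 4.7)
The plan is to reduce \Cref{thm:approx_fl_highlevel} to \Cref{lemma:approx_alternativealg}. Both versions of the algorithm share Steps I--VI verbatim, so they produce the same $\S_0$, the same facility-to-cluster assignment in $H'$, and the same client-to-cluster mapping via $f_c$. They differ only in which single facility is opened in each cluster: the alternative opens a specific $\hat{f}(f_0)\in N_H^+(f_0)$ for each $f_0\in \S_0$, while Algorithm~1 opens a facility $f^{\mathrm{orig}}(f_0)$ that approximately (within a constant factor) minimizes the sum of squared distances to the clients in the cluster.

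First, I would argue that $\hat{f}(f_0)$ is a legitimate candidate for Algorithm~1's approximate minimization in the cluster of $f_0$. Since $\hat{f}(f_0)\in N_H^+(f_0)$ is within one hop of $f_0$ in $H\subseteq H'$, while any other center of $\S_0$ is at $H$-distance at least $4$ from $f_0$ and therefore at $H$-distance at least $3$ from $\hat{f}(f_0)$, the facility $\hat{f}(f_0)$ is naturally associated with $f_0$'s cluster. Consequently, cluster by cluster,
\[
\sum_{c\in\C(f_0)}\cost\!\paren{c,f^{\mathrm{orig}}(f_0)}\ \leq\ C\cdot \sum_{c\in\C(f_0)}\cost\!\paren{c,\hat{f}(f_0)}
\]
for some constant $C\geq 1$, where $\C(f_0)$ denotes the set of clients assigned to the cluster of $f_0$ by Step~VI. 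Connecting each client $c$ to the facility opened in its cluster under both algorithms---call this $f'_c$ for Algorithm~1 and $\tilde{f}_c$ for the alternative---and summing over $f_0\in\S_0$ yields
\[
\sum_c\cost(c,f'_c)\ \leq\ C\cdot \sum_c\cost(c,\tilde{f}_c).
\]
Applying \Cref{lemma:approx_alternativealg} bounds the right-hand side by $\Lambda'\cdot(\mathsf{OPT}-|\F'|\cdot\lambda)$, so setting $\Lambda:=C\cdot\Lambda'$ concludes the proof.

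The main technical obstacle is a potential discrepancy between the connection convention used in \Cref{lemma:approx_alternativealg} and the uniform ``connect $c$ to the facility opened in $c$'s cluster'' rule employed above. In the proof of \Cref{lemma:approx_alternativealg}, clients $c\in\C_O$ are implicitly connected to the unique open facility they contribute to (this is needed for the LMP accounting producing $\sum_{c\in\C_O}\alpha_c-|\F'|\cdot\lambda$), which need not coincide with the facility opened in their $H'$-cluster. To reconcile this, I would show that for every $c\in\C_O$ the two conventions agree up to a constant factor in cost: by \Cref{lem:constant-rad,lemma:sameradius} together with the relaxed triangle inequality~\eqref{eq:triangleineq}, the facility $f_c$ that determines $c$'s cluster and the facility that $c$ contributes to lie within a constant number of hops of one another in $H'$, so $\cost(c,\tilde{f}_c)=O(\alpha_c)$ under the cluster-based rule. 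The resulting constant-factor loss is absorbed into $\Lambda'$, and the theorem follows.
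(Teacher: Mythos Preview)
Your proposal takes essentially the same route as the paper: both versions share Steps~I--VI, so $|\F'|=|\F''|$ and the clustering is identical; one then compares costs cluster by cluster and invokes \Cref{lemma:approx_alternativealg}. The paper's write-up is slightly more economical: instead of arguing that the alternative's opened facility $\hat f(f_0)$ is a legitimate candidate for Step~VII's approximate minimization, it introduces the true cluster-wise minimizer $f^*_{f_0}$ and chains
\[
\sum_{c\in\C^{(f_0)}}\cost(c,f'_{f_0})\ \leq\ \psi\cdot\!\!\sum_{c\in\C^{(f_0)}}\cost(c,f^*_{f_0})\ \leq\ \psi\cdot\!\!\sum_{c\in\C^{(f_0)}}\cost(c,f''_{f_0}),
\]
where the first inequality is exactly Step~VII's guarantee and the second is the optimality of $f^*_{f_0}$ among facilities of the cluster. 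This sidesteps the need to locate $\hat f(f_0)$ within $f_0$'s $H'$-cluster via the distance argument you give.

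Regarding your second paragraph: the paper does not engage with the possible discrepancy between the $\C_O$ connection rule used inside the proof of \Cref{lemma:approx_alternativealg} and the uniform ``connect to the cluster's facility'' rule. It simply asserts that in both algorithms each client is connected to the facility opened in its own cluster, and applies \Cref{lemma:approx_alternativealg} directly. Your observation that this deserves justification is well taken, and your sketch for resolving it (showing $f_c$ and the facility $c$ contributes to lie within $O(1)$ hops in $H'$, so both conventions give cost $O(\alpha_c)$) is along the right lines; but this is a level of care the paper's own proof does not pursue.
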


\begin{proof}
Let $\F'$ denote the set of facilities opened by \Cref{alg:fl-overview}. Further, let $\F''$ denote the set of facilities opened by the variant of the algorithm that uses the alternative Step VII. Both algorithms compute exactly the same clustering, and each opens exactly one facility per cluster. Hence, it clearly holds that $|\F'| = |\F''|$. Additionally, in both algorithms, each client is connected to the facility opened in its own cluster.

For each cluster center $f \in \S_0$, let $\C^{(f)}$ be the set of clients assigned to the cluster of $f$. For each $f \in \S_0$, let $f'_f \in \F'$ be the facility opened by \Cref{alg:fl-overview} in the cluster of $f$. Similarly, let $f''_f \in \F''$ be the facility opened by the variant of \Cref{alg:fl-overview} (with the alternative Step VII) in the cluster of $f$. Moreover, let $f^*_f$ be the facility in the cluster of $f$ minimizing the total cost $\sum_{c \in \C^{(f)}} \cost(c, f^*_f)$. Step VII of \Cref{alg:fl-overview} guarantees that for some constant $\psi \geq 1$:
\[
\sum_{c \in \C^{(f)}} \cost(c, f'_f) \leq \psi \cdot \sum_{c \in \C^{(f)}} \cost(c, f^*_f).
\]

By the definition of $f^*_f$, it also clearly holds that:
\[
\sum_{c \in \C^{(f)}} \cost(c, f^*_f) \leq \sum_{c \in \C^{(f)}} \cost(c, f''_f).
\]

Since every client belongs to exactly one cluster, combining these inequalities gives us:
\[
\sum_{c \in \C} \cost(c, f'_c) \leq \psi \cdot \sum_{c \in \C} \cost(c, f''_c).
\]

Using $|\F'| = |\F''|$ and applying the result from \Cref{lemma:approx_alternativealg}, we obtain:
\[
\sum_{c \in \C} \cost(c, f'_c) \leq \psi \cdot \Lambda'' \cdot (\mathsf{OPT} - |\F'| \cdot \lambda),
\]
where $\Lambda''$ is the approximation ratio provided by \Cref{lemma:approx_alternativealg} for the algorithm variant using the alternative Step VII. Thus, the claim of the theorem follows directly from \Cref{lemma:approx_alternativealg}.
\end{proof}

\subsection{High-Level {\it k}-Means to Facility Location Reduction Algorithm}\label{sec:kmeans-highlevel-alg}

    We next show how to efficiently compute a constant approximation for the $k$-means problem by utilizing the existence of a constant factor LMP approximation for the facility location problem. Our approach builds on the randomized $k$-median algorithm presented by \cite{JV01}. We first describe the algorithm and then analyze its approximation guarantees in \Cref{sec:kmeans-highlevel-analysis}. The algorithm consists of the following three steps.

    \begin{enumerate}[(I)]
        \item Given a $k$-means instance with point set $P$, $|P|=n$ and an opening cost $\lambda \geq 1$, one can define a facility location instance by setting $\F = \C = P$. In the first step, the algorithm finds two opening costs $\lambda_1$ and $\lambda_2$ and two corresponding feasible primal and dual facility location solutions. For each solution corresponding to opening cost $\lambda_i$ (for $i \in \set{1,2}$), the set of opened facilities is $\F_i$, each client $c \in \C$ is assigned to a facility $f_{c}^{(i)} \in \F_i$, the dual client value of $c$ is $\alpha_{c}^{(i)}$, and the dual values $\beta_{cf}^{(i)}$ are equal to $\beta_{cf}^{(i)}=\big[\alpha_{c}^{(i)}-\cost(c,f)\big]^+$. Both solutions are constant approximations with the LMP property, i.e., for $i \in \set{1,2}$, we have
        \begin{equation}\label{eq:kmeans_LMP}
            \sum_{c\in\C}\cost\big(c,f_c^{(i)}\big) \leq \Lambda \cdot
            \left(\sum_{c\in \C} \alpha_c^{(i)} - \lambda_i \cdot |\F_i|\right)
        \end{equation}
        for some constant $\Lambda \geq 1$. We further require $k_1 := |\F_1| \leq k$, $k_2 := |\F_2| \geq k$, and $\lambda_2 \leq \lambda_1 \leq 2\lambda_2$. If $k_1 = k$ or $k_2 = k$, we output the corresponding facility location solution as the final $k$-means solution and stop.

        \item Determine a set $\F_2'$ of size $|\F_2'| = k_1$ as follows. For every $f \in \F_1$, add a facility $f_2'(f) \in \F_2$ to $\F_2'$ such that 
        \begin{equation}\label{eq:kmeans_rounding_facility}
            \dist\big(f, f_2'(f)\big) \leq \delta \cdot \min_{f' \in \F_2} \dist(f, f')
        \end{equation}
        for some constant $\delta \geq 1$. If the resulting set $\F_2'$ is of size $< k_1$, add arbitrary additional $k_1 - |\F_2'|$ facilities from $\F_2 \setminus \F_2'$ to $\F_2'$.

        \item Define $a := \frac{k_2 - k}{k_2 - k_1}$ and $b := \frac{k - k_1}{k_2 - k_1}$. The algorithm then determines a set $Z$ of $k$ centers for the $k$-means solution as follows: 
        \begin{enumerate}[(A)]
            \item With probability $a$, it adds the $k_1$ points in $\F_1$ to $Z$, otherwise (with probability $1-a=b$) it adds the $k_1$ points in $\F_2'$ to $Z$.
            \item Additionally, it selects $k - k_1$ uniformly random facilities from $\F_2 \setminus \F_2'$ and adds them to $Z$.
        \end{enumerate} For each client $c$, we determine a facility $\phi(c)\in Z$ to which $c$ is connected to as follows: 
        \begin{itemize}
            \item If $f_c^{(2)}\in \F_2'$ and the set $\F_1$ is sampled in Step (A) above, we set $\phi(c):=f_c^{(1)}$.
            \item If $f_c^{(2)}\in \F_2'$ and the set $\F_2'$ is sampled in Step (A) above, we set $\phi(c):=f_c^{(2)}$.
            \item If $f_c^{(2)}\in \F_2\setminus\F_2'$, we set $\phi(c):=f_c^{(2)}$ if $f_c^{(2)}$ is sampled as one of the additional $k-k_1$ facilities in Step (B) above. Otherwise we set $\phi(c):=f_1^{(1)}$ if $\F_1$ is sampled in Step (A) above. Otherwise, we set $\phi(c):=f_2'\big(f_1^{(1)}\big)$.
        \end{itemize}
    \end{enumerate}

\subsection{High-Level {\it k}-Means to Facility Location Reduction Analysis}\label{sec:kmeans-highlevel-analysis}
We now analyze the $k$-means algorithm described in \Cref{sec:kmeans-highlevel-alg}. We first show that the facility location that was computed for opening cost $\lambda_1$ in Step (I) of the algorithm is also a constant-factor LMP approximation for opening cost $\lambda_2$.

\begin{lemma}\label{lemma:FL_openingcostreduction}
    Consider the facility location solution for opening cost $\lambda_1$ as defined in Step (I) above. Define updated dual values as follows:
    \[
    \balpha_c^{(1)} := \frac{\lambda_2}{\lambda_1}\cdot\alpha_c^{(1)}
    \quad\text{and}\quad
    \bbeta_c^{(1)} := \big[\balpha_c^{(1)} - \cost(c,f)\big]^+.
    \]
    This dual solution is feasible for LP \eqref{eq:fl-dual} with opening cost $\lambda_2$ and we have
    \[
        \sum_{c\in\C}\cost\big(c,f_c^{(1)}\big) \leq \frac{\lambda_1}{\lambda_2}\cdot\Lambda \cdot
        \left(\sum_{c\in \C} \balpha_c^{(1)} - \lambda_2 \cdot k_1\right)
    \]
\end{lemma}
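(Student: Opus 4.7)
The claim has two parts: dual feasibility of the scaled solution for opening cost $\lambda_2$, and the approximation inequality. For feasibility, the nonnegativity constraints $\balpha_c^{(1)}\geq 0$ and $\bbeta_{cf}^{(1)}\geq 0$ hold trivially since $\lambda_2/\lambda_1>0$, and the covering constraint $\balpha_c^{(1)}-\bbeta_{cf}^{(1)}\leq \cost(c,f)$ holds by the definition $\bbeta_{cf}^{(1)}:=[\balpha_c^{(1)}-\cost(c,f)]^+$. The only nontrivial constraint is $\sum_c \bbeta_{cf}^{(1)}\leq \lambda_2$ for every facility $f\in\F$.

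To establish the nontrivial constraint, I plan to use the elementary inequality
\[
[tx-y]^+ \;\leq\; t\cdot [x-y]^+ \qquad\text{for all } t\in[0,1],\ y\geq 0,
\]
which follows because $y\geq ty$ implies $[tx-y]^+\leq [tx-ty]^+=t[x-y]^+$. Setting $t:=\lambda_2/\lambda_1\in(0,1]$, $x:=\alpha_c^{(1)}$, and $y:=\cost(c,f)$ yields $\bbeta_{cf}^{(1)}\leq t\cdot \beta_{cf}^{(1)}$. Summing over $c\in\C$ and using the feasibility of the original dual (which gives $\sum_c \beta_{cf}^{(1)}\leq \lambda_1$) produces $\sum_c \bbeta_{cf}^{(1)}\leq t\cdot \lambda_1=\lambda_2$, as required.

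For the approximation bound, I plan to substitute the definition $\balpha_c^{(1)}=(\lambda_2/\lambda_1)\alpha_c^{(1)}$ directly into the right-hand side. Then
\[
\frac{\lambda_1}{\lambda_2}\cdot\Lambda\cdot\left(\sum_{c\in\C}\balpha_c^{(1)}-\lambda_2\cdot k_1\right)
\;=\;\Lambda\cdot\left(\sum_{c\in\C}\alpha_c^{(1)}-\lambda_1\cdot k_1\right),
\]
and this latter quantity is an upper bound on $\sum_{c\in\C}\cost(c,f_c^{(1)})$ by the assumed LMP property \eqref{eq:kmeans_LMP} applied to the $\lambda_1$ solution. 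Chaining the two relations yields the claim.

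\textbf{Main obstacle.} The only step that is not a direct rewrite is verifying $\sum_c \bbeta_{cf}^{(1)}\leq \lambda_2$; the subtlety is that naively one only gets the weaker bound $\sum_c \bbeta_{cf}^{(1)}\leq \lambda_1$ from $\bbeta_{cf}^{(1)}\leq \beta_{cf}^{(1)}$. The key observation is that scaling $\alpha$ by $t\leq 1$ actually shrinks $[\alpha-\cost]^+$ by a factor of at least $t$, not merely leaves it bounded by the unscaled value, and this is what makes the opening-cost reduction proportional. Once that inequality is in hand, the rest is bookkeeping.
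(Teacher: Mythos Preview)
Your proposal is correct and follows essentially the same approach as the paper: both use the elementary inequality $[tx-y]^+\leq t[x-y]^+$ for $t=\lambda_2/\lambda_1\in(0,1]$ to obtain $\bbeta_{cf}^{(1)}\leq (\lambda_2/\lambda_1)\beta_{cf}^{(1)}$ and hence $\sum_c\bbeta_{cf}^{(1)}\leq\lambda_2$, and both derive the approximation bound by a direct algebraic rewrite linking the right-hand side back to the original LMP inequality~\eqref{eq:kmeans_LMP}.
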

\begin{proof}
    To prove the feasibility w.r.t.\ \eqref{eq:fl-dual}, we have to show that for all $f\in \F$, $\sum_{c\in \C}\bbeta_{cf}^{(1)}\leq \lambda_2$ and that for all $c\in \C$ and all $f\in \F$, $\balpha_c^{(1)}-\bbeta_{cf}^{(1)}\leq\cost(c,f)$. The second part follows directly from the definition of $\bbeta_{cf}^{(1)}$. For the first part, for all $c\in \C$ and $f\in \F$, we have
    \[
    \bbeta_{cf}^{(1)} = \big[\balpha_c^{(1)} - \cost(c,f)\big]^+
    = \bigg[\frac{\lambda_2}{\lambda_1}\cdot\alpha_c^{(1)} - \cost(c,f)\bigg]^+
    \leq \bigg[\frac{\lambda_2}{\lambda_1}\cdot\alpha_c^{(1)} - \frac{\lambda_2}{\lambda_1}\cdot\cost(c,f)\bigg]^+
    = \frac{\lambda_2}{\lambda_1}\cdot\beta_{cf}^{(1)}.
    \]
    The inequality follows because $\lambda_2\leq \lambda_1$. Feasibility of the dual solution now follows because we already know that $\sum_{c\in \C}\beta_{cf}^{(1)}\leq \lambda_1$

    Let us therefore now focus on showing that primal solution is a $\frac{\lambda_1}{\lambda_2}\cdot\Lambda$-approximate LMP solution for the facility location problem with opening cost $\lambda_2$. From \eqref{eq:kmeans_LMP}, we have
    \begin{eqnarray*}
        \sum_{c\in\C}\cost\big(c,f_c^{(1)}\big)  & \leq &
        \Lambda \cdot
        \left(\sum_{c\in \C} \alpha_c^{(1)} - \lambda_1 \cdot k_1\right)\\
        & = &
        \frac{\lambda_1}{\lambda_2}\cdot\Lambda \cdot
        \left( \frac{\lambda_2}{\lambda_1}\cdot\sum_{c\in \C} \alpha_c^{(1)}  -  \frac{\lambda_2}{\lambda_1}\cdot\lambda_1 \cdot k_1\right)\\
        & = &
        \frac{\lambda_1}{\lambda_2}\cdot\Lambda \cdot
        \left(\sum_{c\in \C} \balpha_c^{(1)} - \lambda_2 \cdot k_1\right),
    \end{eqnarray*}
    which concludes the proof.
\end{proof}

We now switch our attention to the $k$-means problem. We begin by formulating the primal integer linear program (ILP) and its corresponding dual for the $k$-means problem. For convenience, we define $k$-means as a variant of the facility location problem in which facilities and clients are given, there is a cost for opening facilities, and we are required to open at most $k$ facilities. This formulation is equivalent to the standard $k$-means problem when $\F$ and $\C$ are identical.

\paragraph{Primal {\it k}-Means Integer LP:}
\begin{align}\label{eq:primalKMeansLP}
\begin{split}
\min & \quad  \sum_{f \in \F} \sum_{c \in \C} \cost(c,f)\cdot  x_{cf} \\ 
\text{s.t.} & \quad \sum_{f \in \F} x_{cf} \ge 1, \quad \forall c \in \C, \\
&\quad x_{cf} \leq y_f, \quad \forall f \in \F, \forall c \in \C, \\
& \quad \sum_{f \in \F} y_f \le k,\\
&\quad x_{cf} \in \{0,1\}, \quad \forall f \in \F, \forall c \in \C, \\
& \quad y_f \in \{0,1\}, \quad \forall f \in \F.
\end{split}
\end{align}

\paragraph{Dual {\it k}-Means LP:}
\begin{align}\label{eq:dualKMeansLP}
\begin{split}
\max & \quad \sum_{c \in \C} \alpha_c - \lambda\cdot k\\ 
\text{s.t.} &\quad \alpha_c - \beta_{cf} \leq \cost(c,f), \quad \forall f \in \F, \forall c \in \C, \\ 
&  \quad \sum_{c \in \C} \beta_{cf} \leq \lambda, \quad \forall f \in \F, \\
&  \quad \alpha_c \geq 0, \quad \forall c \in \C, \\
& \quad \beta_{cf} \geq 0, \quad \forall f \in \F, \forall c \in \C.
\end{split}
\end{align}
Note that, unlike in the dual LP \eqref{eq:fl-dual} for facility location, in \eqref{eq:dualKMeansLP}, the variable $\lambda$ is part of the solution. 

By using the two facility location solutions with opening costs $\lambda_1$ and $\lambda_2$ that are computed in Step (I) of the algorithm of \Cref{sec:kmeans-highlevel-alg} and by using the adapted solution constructed in \Cref{lemma:FL_openingcostreduction}, we now construct two feasible solutions to \eqref{eq:primalKMeansLP} and \eqref{eq:dualKMeansLP} with input values $k_1$ and $k_2$. We will refer to the two solutions as solution $1$ and solution $2$ in the following

For the primal LP \eqref{eq:primalKMeansLP} with $k_i$ centers, where $i\in \set{1,2}$, the variables of solution $i$ are set as follows.
For every $c \in \C$ and $f \in \F$, define $x_{cf}^{(i)} := 1$ if $f = f_c^{(i)}$, and $x_{cf}^{(i)} := 0$ otherwise. For each $f \in \F$, we set $y_f^{(i)} := 1$ if $f \in \F_2$, and $y_f^{(i)} := 0$ otherwise.

For the dual LP \eqref{eq:dualKMeansLP}, for solution $1$, we use the modified dual solution that is constructed in \Cref{lemma:FL_openingcostreduction}. Hence, the variable $\lambda$ is set to $\lambda_2$ for solution $1$ and $2$. Further, for every $c\in \C$, variable $\alpha_c$ is set to $\balpha_c^{(1)}$ for solution $1$ and to $\alpha_c^{(1)}$ for solution $2$. Further, for all $c \in \C$ and $f \in \F$, we use the variables $\bbeta_{cf}^{(1)}$ for solution $1$ and $\beta_{cf}^{(2)}$ for solution $2$. Feasibility in both cases follows directly from the fact that the facility location solutions are feasible for \eqref{eq:fl-primal} and \eqref{eq:fl-dual}.

Moreover, Inequality~\eqref{eq:kmeans_LMP} directly implies that the objective value of the primal $k_2$-means solution $2$ is at most $\Lambda$ times the objective value of the corresponding dual solution. Similarly, \Cref{lemma:FL_openingcostreduction} implies that the primal $k_1$-means solution $1$ is at most $\frac{\lambda_1}{\lambda_2}\cdot\Lambda\leq2\Lambda$ times the objective value of the corresponding dual solution. Therefore, if $k_1 = k$ or $k_2 = k$, the respective facility location solution directly yields a $2\Lambda$-approximate solution for the $k$-means problem.

The next lemma shows that if neither $k_1$ nor $k_2$ equals $k$, it is possible to interpolate between the two solutions to obtain a fractional $\frac{\lambda_1}{\lambda_2}\cdot\Lambda$-approximate solution for the $k$-means problem, that is, a feasible solution to the fractional relaxation of \eqref{eq:primalKMeansLP}.

We define two real numbers $a,b\geq 0$ as in Step (III) of the algorithm of \Cref{sec:kmeans-highlevel-alg}, i.e., $a := \frac{k_2-k}{k_2-k_1}$ and $b := \frac{k-k_1}{k_2-k_1}$. Note that $a + b = 1$ and $a\cdot k_1 + b\cdot k_2 = k$. We define fractional solutions with variables $x_{cf}$ and $y_f$ for \eqref{eq:primalKMeansLP}, and with variables $\alpha_c$, $\beta_{cf}$, and $\lambda$ for \eqref{eq:dualKMeansLP} as follows.
\begin{eqnarray*}
    \forall c\in \C, \forall f\in \F 
    & : & 
    x_{cf} = a\cdot x_{cf}^{(1)} + b\cdot x_{cf}^{(2)},\quad
    y_f = a\cdot y_f^{(1)} + b\cdot y_f^{(2)}\\
    \forall c\in \C, \forall f\in \F
    & : &
    \alpha_c = a\cdot \balpha_c^{(1)} + b\cdot \alpha_c^{(2)},\quad
    \beta_{cf} = \big[\alpha_c -\cost(c,f)\big]^+
\end{eqnarray*}
The value of $\lambda$ is set to $\lambda_2$. The following lemma shows that this convex combination of the two solutions provides a $\frac{\lambda_1}{\lambda_2}\cdot\Lambda$-approximate solution for the fractional LP relaxation of \eqref{eq:primalKMeansLP}.

\begin{lemma}\label{lemma:fractionalkmeanssolution}
    The above fractional solutions are feasible for the fractional LP relaxation of \eqref{eq:primalKMeansLP} and for the LP \eqref{eq:dualKMeansLP}. Moreover, we have
    \begin{equation}\label{eq:kmeans_fractional_lemma}
        \sum_{c\in\C, f\in \F} \!\!\!\!\!\cost(c,f)\!\cdot\! x_{cf} =
        \sum_{c\in \C}\!\left(a\!\cdot\! \cost\big(c,f_c^{(1)}\big) + b\!\cdot\! \cost\big(c,f_c^{(2)}\big)\right) \leq
        \frac{\lambda_1}{\lambda_2}\cdot\Lambda\cdot\!\left(\sum_{c\in\C}\alpha_c - \lambda_1\!\cdot\! k\right)\!.
    \end{equation}
    Note that this implies that the primal solution is optimal up to a factor  at most $\frac{\lambda_1}{\lambda_2}\cdot\Lambda$.
\end{lemma}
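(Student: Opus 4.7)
The plan is to verify the three assertions of the lemma in turn: (i) feasibility of $(x_{cf}, y_f)$ for the LP relaxation of \eqref{eq:primalKMeansLP}, (ii) feasibility of $(\alpha_c, \beta_{cf}, \lambda_2)$ for \eqref{eq:dualKMeansLP}, and (iii) the numerical bound on the primal cost. All three are essentially bookkeeping arguments over convex combinations of two facility-location solutions; the technical device that makes everything line up is the rescaling from \Cref{lemma:FL_openingcostreduction}, which has already turned solution $1$ into a dual-feasible solution for opening cost $\lambda_2$.

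For primal feasibility, each integral solution $(x^{(i)}, y^{(i)})$ satisfies $\sum_f x_{cf}^{(i)}=1$ (each client is assigned to exactly one facility), $x_{cf}^{(i)}\le y_f^{(i)}$, and $\sum_f y_f^{(i)}=|\F_i|=k_i$. These linear inequalities are preserved under convex combinations, and the identity $ak_1+bk_2=k$ (immediate from the definitions of $a,b$) gives the budget constraint $\sum_f y_f=k$. For dual feasibility, non-negativity and $\alpha_c-\beta_{cf}\le\cost(c,f)$ are immediate from the definition $\beta_{cf}:=[\alpha_c-\cost(c,f)]^+$. The only nontrivial constraint is $\sum_c\beta_{cf}\le\lambda_2$ for every $f\in\F$. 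Writing $\alpha_c-\cost(c,f)=a(\balpha_c^{(1)}-\cost(c,f))+b(\alpha_c^{(2)}-\cost(c,f))$ and applying the elementary inequality $[x+y]^+\le[x]^++[y]^+$ yields $\beta_{cf}\le a\bbeta_{cf}^{(1)}+b\beta_{cf}^{(2)}$. Summing and invoking \Cref{lemma:FL_openingcostreduction} (so that $\sum_c\bbeta_{cf}^{(1)}\le\lambda_2$) together with $\sum_c\beta_{cf}^{(2)}\le\lambda_2$ gives $\sum_c\beta_{cf}\le a\lambda_2+b\lambda_2=\lambda_2$.

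For the cost bound, the first equality in \eqref{eq:kmeans_fractional_lemma} is immediate from $x_{cf}^{(i)}=\mathbf{1}[f=f_c^{(i)}]$. For the inequality, I combine two LMP estimates: \Cref{lemma:FL_openingcostreduction} applied to solution $1$ gives
\[
\sum_c\cost\big(c,f_c^{(1)}\big)\le\tfrac{\lambda_1}{\lambda_2}\Lambda\Big(\sum_c\balpha_c^{(1)}-\lambda_2 k_1\Big),
\]
while \eqref{eq:kmeans_LMP} applied to solution $2$ gives
\[
\sum_c\cost\big(c,f_c^{(2)}\big)\le\Lambda\Big(\sum_c\alpha_c^{(2)}-\lambda_2 k_2\Big).
\]
The surplus $\sum_c\alpha_c^{(2)}-\lambda_2 k_2$ is non-negative (since $\Lambda\ge 1$ and costs are non-negative), so multiplying the second inequality by $\lambda_1/\lambda_2\ge 1$ only weakens it. Taking the convex combination with weights $a,b$, pulling out the factor $\tfrac{\lambda_1}{\lambda_2}\Lambda$, and using $\alpha_c=a\balpha_c^{(1)}+b\alpha_c^{(2)}$ together with $ak_1+bk_2=k$ collapses the bracket to $\sum_c\alpha_c-\lambda_2 k$, matching the dual objective of the constructed solution (with $\lambda=\lambda_2$) and yielding the claimed approximation.

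The main obstacle is purely conceptual, not technical: the two FL solutions live at different opening costs, and one needs a careful way to compare them in the common $k$-means dual, which is exactly the role of \Cref{lemma:FL_openingcostreduction}. Once that rescaling is in place, the proof reduces to applying $[x+y]^+\le[x]^++[y]^+$ for dual feasibility and combining the two LMP inequalities via the convex weights $a,b$; both are routine.
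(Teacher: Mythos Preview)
Your proposal is correct and follows essentially the same approach as the paper: convex-combination arguments for primal feasibility, the subadditivity $[x+y]^+\le[x]^++[y]^+$ together with \Cref{lemma:FL_openingcostreduction} for dual feasibility, and a convex combination of the two LMP inequalities for the cost bound. You are in fact slightly more careful than the paper, which simply asserts that \eqref{eq:kmeans_fractional_lemma} ``is a convex combination'' of the two LMP inequalities, whereas you explicitly justify first inflating the $i=2$ inequality by the factor $\lambda_1/\lambda_2\ge 1$ (valid since its right-hand side is nonnegative); note that both your argument and the paper's yield $\sum_c\alpha_c-\lambda_2 k$ in the bracket, which is the dual objective and exactly what is used downstream, so the ``$\lambda_1\cdot k$'' printed in \eqref{eq:kmeans_fractional_lemma} appears to be a typo for ``$\lambda_2\cdot k$''.
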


\begin{proof}
    We first prove the feasibility of the solution of \eqref{eq:primalKMeansLP}. The inequalities of the form $\sum_{f\in \F}x_{cf}\geq 1$ and $x_{cf}\leq y_f$ directly follow because the variables $x_{cf}$ and $y_f$ result from a convex combination of two solutions that satisfy these inequalities. The inequality $\sum_{f\in \F}y_f\leq k$ follows because we have $\sum_{f\in \F} y_f^{(i)}\leq k_i$ for $i\in \set{1,2}$ and because $y_f=a\cdot y_f^{(1)}+b\cdot y_f^{(2)}$ and $k=a\cdot k_1 + b\cdot k_2$.

    Let us also prove the feasibility of the solution of \eqref{eq:dualKMeansLP}. The inequalities $\alpha_c-\beta_{cf}\leq \cost(c,f)$ follow directly from the choice of $\beta_{cf}$. The inequality $\sum_{c\in\C}\beta_{cf}\leq \lambda=\lambda_2$ follows because
    \begin{eqnarray*}
        \sum_{c\in\C} \beta_{cf} & = &
        \sum_{c\in\C}\big[\alpha_c-\cost(c,f)\big]^+\\
        & = &
        \sum_{c\in\C}\big[a\cdot \balpha_c^{(1)} + b\cdot \alpha_c^{(2)}-\cost(c,f)\big]^+\\
        & \stackrel{a+b=1}{=} &
        \sum_{c\in\C}\big[a\cdot \balpha_c^{(1)} - a\cdot\cost(c,f) + b\cdot \alpha_c^{(2)}-b\cdot\cost(c,f)\big]^+\\
        & \leq &
        \sum_{c\in\C}\left(\big[a\cdot \balpha_c^{(1)} - a\cdot\cost(c,f)\big]^+ + \big[b\cdot \alpha_c^{(2)}-b\cdot\cost(c,f)\big]^+\right)\\
        & = &
        \sum_{c\in\C}\left(a\cdot\big[\balpha_c^{(1)} - \cost(c,f)\big]^+ + b\cdot\big[\alpha_c^{(2)}-\cost(c,f)\big]^+\right)\\
        & \leq & \lambda_2
    \end{eqnarray*}
    The first inequality holds because for all $A,B\in\R$, we have $[A+B]^+\leq[A]^++[B]^+$. The last inequality follows from $a+b=1$ and from $\sum_{c\in\C}\bbeta_{cf}^{(1)}\leq\lambda_2$ and $\sum_{c\in\C}\beta_{cf}^{(2)}\leq\lambda_2$.

    It remains to prove \eqref{eq:kmeans_fractional_lemma}. This however follows because $a+b=1$, $k = a \cdot k_1 + b \cdot k_2$ and the inequality of \eqref{eq:kmeans_fractional_lemma} therefore just is a convex combination of \eqref{eq:kmeans_LMP} (for $i=2$) and of the inequality proven in \Cref{lemma:FL_openingcostreduction}. This concludes the proof.
\end{proof}

\begin{theorem}\label{thm:kmeansapprox}
    The $k$-means algorithm described in \Cref{sec:kmeans-highlevel-alg} returns a $(2+4\delta^2)\cdot\Lambda$-approximate solution in expectation for any given $k$-means instance.
\end{theorem}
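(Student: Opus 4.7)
The plan is to prove \Cref{thm:kmeansapprox} by first disposing of the two boundary cases $k_1 = k$ and $k_2 = k$, and then performing a per-client analysis of the expected connection cost in the general case. If $k_1 = k$ or $k_2 = k$, the algorithm returns the corresponding facility location solution directly; the claimed bound then follows immediately from the LMP inequality \eqref{eq:kmeans_LMP} (respectively \Cref{lemma:FL_openingcostreduction}) combined with $\lambda_1 \leq 2\lambda_2$ and the fact, noted just after \Cref{lemma:FL_openingcostreduction}, that any feasible dual of \eqref{eq:dualKMeansLP} has objective value at most $\mathsf{OPT}$.

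For the main case, I would compute, for each client $c$, the probability distribution of the chosen facility $\phi(c)$, distinguishing whether $f_c^{(2)} \in \F_2'$ or $f_c^{(2)} \in \F_2 \setminus \F_2'$. In the first sub-case, $\phi(c) = f_c^{(1)}$ with probability $a$ and $\phi(c) = f_c^{(2)}$ with probability $b$, so
\[
    \E[\cost(c, \phi(c))] = a \cdot \cost(c, f_c^{(1)}) + b \cdot \cost(c, f_c^{(2)}).
\]
In the second sub-case, the events \emph{``$f_c^{(2)}$ is chosen among the $k - k_1$ additional sampled facilities''} (probability $b$) and \emph{``$\F_1$ is chosen in Step (A)''} (probability $a$) are independent, so
\[
    \E[\cost(c, \phi(c))] = b \cdot \cost(c, f_c^{(2)}) + a^2 \cdot \cost(c, f_c^{(1)}) + a b \cdot \cost(c, f_2'(f_c^{(1)})).
\]

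The technical heart of the argument is to bound $\cost(c, f_2'(f_c^{(1)}))$. By the triangle inequality and \eqref{eq:kmeans_rounding_facility}, $\dist(c, f_2'(f_c^{(1)})) \leq \dist(c, f_c^{(1)}) + \delta \dist(f_c^{(1)}, f_c^{(2)})$. Squaring and applying $(x+y)^2 \leq 2x^2 + 2y^2$ twice (once to the inequality above, once to $\dist(f_c^{(1)}, f_c^{(2)}) \leq \dist(c, f_c^{(1)}) + \dist(c, f_c^{(2)})$) yields
\[
    \cost(c, f_2'(f_c^{(1)})) \leq (2 + 4\delta^2)\cdot \cost(c, f_c^{(1)}) + 4\delta^2 \cdot \cost(c, f_c^{(2)}).
\]
Substituting this into the expression for $\E[\cost(c, \phi(c))]$ in the second sub-case and using $a + b = 1$ with $a, b \in [0,1]$, I would check that the coefficient of $\cost(c, f_c^{(1)})$ divided by $a$ is $a + (2+4\delta^2) b \leq 2 + 4\delta^2$, and that of $\cost(c, f_c^{(2)})$ divided by $b$ is $1 + 4\delta^2 a \leq 1 + 4\delta^2 \leq 2 + 4\delta^2$. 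Combined with the trivial bound in the first sub-case, this gives, for every client $c$,
\[
    \E[\cost(c, \phi(c))] \leq (2 + 4\delta^2)\big(a \cdot \cost(c, f_c^{(1)}) + b \cdot \cost(c, f_c^{(2)})\big).
\]

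Summing over $c \in \C$ and applying \Cref{lemma:fractionalkmeanssolution} (which bounds the convex combination $a\sum_c \cost(c,f_c^{(1)}) + b\sum_c \cost(c,f_c^{(2)})$ by $\Lambda$ times a feasible dual objective of \eqref{eq:dualKMeansLP}, which is in turn at most $\mathsf{OPT}$) yields the claimed expected approximation ratio. The main obstacle is the arithmetic that makes the per-client bound collapse exactly to the factor $(2+4\delta^2)$ rather than a larger constant one would obtain from a more naive application of the relaxed triangle inequality; this relies on recognizing that the two coefficients are separately maximized at endpoints of $[0,1]$ and checking these endpoint values, as well as on carefully exploiting the independence of steps (A) and (B) of the rounding procedure so that the probability mass on the ``bad'' event $\phi(c) = f_2'(f_c^{(1)})$ is only $ab$ rather than $b$.
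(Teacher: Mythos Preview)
Your proposal is correct and follows essentially the same route as the paper: a per-client case split on whether $f_c^{(2)}\in\F_2'$, the computation of the three probabilities $b,\,a^2,\,ab$ in the second case using the independence of Steps~(A) and~(B), a bound on the rerouting cost $\cost\big(c,f_2'(f_c^{(1)})\big)$ via the triangle inequality together with \eqref{eq:kmeans_rounding_facility}, and finally an appeal to \Cref{lemma:fractionalkmeanssolution} and LP duality.

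One small slip to flag: \Cref{lemma:fractionalkmeanssolution} bounds the convex combination by $\tfrac{\lambda_1}{\lambda_2}\cdot\Lambda$ times a feasible dual objective, not by $\Lambda$ times, so after summing you inherit an extra factor $\lambda_1/\lambda_2\le 2$; your argument as written therefore yields $2(2+4\delta^2)\Lambda$ rather than $(2+4\delta^2)\Lambda$. The paper's own proof has the analogous issue: it bounds the rerouting cost symmetrically by $2(\delta+1)^2\big(\cost(c,f_c^{(1)})+\cost(c,f_c^{(2)})\big)$, arrives at a per-client factor $1+2(\delta+1)^2$, and after multiplying by $\lambda_1/\lambda_2\le 2$ states the final bound as $(2+4(\delta+1)^2)\Lambda$, which likewise does not literally match the constant in the theorem statement. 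Your asymmetric application of $(x+y)^2\le 2x^2+2y^2$ (keeping the coefficients $(2+4\delta^2)$ and $4\delta^2$ separate) is in fact slightly sharper than the paper's symmetric version, so modulo this shared bookkeeping discrepancy your argument is at least as tight.
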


\begin{proof}
    Consider the connection cost of some client $c\in\C$. In the algorithm of \Cref{sec:kmeans-highlevel-alg}, $c$ is connected either to $f_c^{(1)}$, to $f_c^{(2)}$, or to $f_2'(f_c^{(1)})$. Let us distinguish two cases.

    \paragraph{Case 1: \boldmath$f_c^{(2)}\in \F_2'$:} In this case, $c$ is connected to $f_c^{(1)}$ (i.e., $\phi(c)=f_c^{(1)}$) if and only if the set $\F_1$ is sampled in Step (A) of Step (III) of the algorithm and it is connected to $f_c^{(2)}$ otherwise. The set $\F_1$ is sampled with probability exactly $a$. Client $c$ is therefore connected to $f_c^{(1)}$ with probability $a$ and to $f_c^{(2)}$ with probability $b=1-a$. The expected connection cost of $c$ is therefore
    \begin{equation*}
    \mathbb{E}[\cost(c, \phi(c))] = a \cdot \cost\big(c,f_c^{(1)}\big) + b \cdot \cost\big(c,f_c^{(2)}\big).
    \end{equation*}
    
    \paragraph{Case 2: \boldmath$f_c^{(2)}\in \F_2\setminus\F_2'$:}
    In this case, we have $\phi(c)=f_c^{(2)}$ if and only if $f_c^{(2)}$ is sampled as one of the $k-k_1$ random additional centers in $\F_2\setminus\F_2'$ in Step (B). The probability for this is $(k-k_1)/|\F_2\setminus\F_2'|=(k-k_1)/(k_2-k_1)=b$. We have $\phi(c)=f_c^{(1)}$ if set $\F_1$ is sampled in Step (A) and $f_c^{(2)}$ is not sampled in Step (B). This happens with probability $a(1-b)=a^2$. In the remaining cases, i.e., with probability $1-a^2-b=ab$, $c$ is connected to $f_2'(f_c^{(1)})$. In this case, the connection cost of $c$ can be bounded as follows. By the definition of $f_2'(f_c^{(c)})$, i.e., by Eq.~\eqref{eq:kmeans_rounding_facility}, we have
    \[  
        \dist(f_c^{(1)}, f_2'(f_c^{(1)})) \le \delta \cdot \dist(f_c^{(1)}, f_c^{(2)}),
    \]
    where $\delta\geq 1$ is some constant that is given by Step (II) of the algorithm of \Cref{sec:kmeans-highlevel-alg}. We now have
    \begin{eqnarray*}
        \dist\big(c, f'_2(f_c^{(1)})\big) 
        & \leq &
        \dist\big(c, f_c^{(1)}\big) + \dist\big(f_c^{(1)}, f_2'(f_c^{(1)})\big)\\
        & \leq & \cost(c, f_c^{(1)}) + \delta\cdot \cost(f_c^{(1)}, f_c^{(2)})\\
        &\leq & \cost(c, f_c^{(1)}) + \delta\cdot\left(
        \cost(f_c^{(1)}, c) + \cost(c, f_c^{(2)})
        \right)\\
        & \leq & (\delta+1)\cdot \left(\dist(c, f_c^{(1)}) + \dist(c, f_c^{(2)})\right).
    \end{eqnarray*}
    By squaring on both sides, we obtain
    \[
    \cost\big(c, f'_2(f_c^{(1)})\big) \leq
    2(\delta+1)^2\cdot \left(\cost(c, f_c^{(1)}) + \cost(c, f_c^{(2)})\right).
    \]
    The expected connection cost in Case 2 can therefore be upper bounded as
    \begin{eqnarray*}
        \mathbb{E}[\cost(c, \phi(c))] & = &
        a^2\cdot \cost\big(c, f_c^{(1)}\big) + 
        b\cdot \cost\big(c, f_c^{(2)}\big) + 
        ab\cdot \cost\big(c, f_2'(f_c^{(1)})\big)\\
        & \leq &
        a^2\cdot \cost\big(c, f_c^{(1)}\big) + 
        b\cdot \cost\big(c, f_c^{(2)}\big) + 
        2(\delta+1)^2\cdot \left(\cost(c, f_c^{(1)}) + \cost(c, f_c^{(2)})\right)\\
        & = &
        \left[a + 2(\delta+1)^2\cdot b\right]
        \cdot a\cdot \cost\big(c, f_c^{(1)}\big) +
        \left[1 + 2(\delta+1)^2\cdot a\right]
        \cdot b\cdot \cost\big(c, f_c^{(2)}\big)\\
        & \leq &
        \left[1 + 2(\delta+1)^2\right]\cdot \left(
        a \cdot \cost\big(c,f_c^{(1)}\big) + b \cdot \cost\big(c,f_c^{(2)}\big)
        \right).
    \end{eqnarray*}
    The last inequality follows from $a,b\leq 1$.
    The upper bound on the expected connection cost in Case 2 clearly also holds in Case 1. Together with \Cref{lemma:fractionalkmeanssolution}, this implies that the expected connection cost is within a factor at most $(1+2(\delta+1)^2)\cdot\frac{\lambda_1}{\lambda_2}\cdot\Lambda \leq (2+4(\delta+1)^2)\cdot\Lambda$ of the objective value of some feasible dual solution to the $k$-means LP. This concludes the proof.
\end{proof}

\section{Implementation in the MPC Model}
\label{sec:mpc}

In this section, we discuss how to implement our high level facility location algorithm of \Cref{alg:fl-overview} and the randomized rounding algorithm to obtain an approximate $k$-means solution in a fully scalable way in the MPC model. As the technically demanding part is the implementation of the facility location algorithm, we first concentrate on the implementation of \Cref{alg:fl-overview} and only discuss the $k$-means algorithm at the very end.

As discussed in \Cref{sec:intro}, by using locality-sensitive hashing techniques, one can approximate the distances in $\mathbb{R}^d$ by a sparse graph with $n^{1+\eps}$ edges for some constant $\eps>0$. In the following, in \Cref{sec:mpc_graphapprox}, we first provide the details of this approximation. In \Cref{sec:mpc_graphalg}, we then describe the necessary MPC graph algorithms to implement our facility location and $k$-means algorithm in the MPC model. Finally, in \Cref{sec:MPC_impl}, we then show how to put the pieces together and thus how to use the graph approximation of $\mathbb{R}^d$ described in \Cref{sec:mpc_graphapprox} together with the algorithm of \Cref{sec:mpc_graphalg} to implement \Cref{alg:fl-overview}.

\subsection{Approximating the Geometric Space by a Graph}
\label{sec:mpc_graphapprox} 

In the following, we consider a set of points $P\subset \mathbb{R}^d$ of size $|P|=n$ equipped with the $\ell_2$-metric. For $x,y\in P$, we use $\dist(x,y)$ to denote the $\ell_2$-distance between $x$ and $y$. We assume that for any $x,y\in P$, if $x\neq y$, then $\dist(x,y)\geq 1$ and $\dist(x,y)\leq \poly(n)$. Our goal is to obtain a constant approximation of the metric space $(P,\dist)$ by (weighted) shortest path metric of a relatively sparse graph. This can be achieved by using locality-sensitive hashing.

\begin{definition}[Locality-Sensitive Hashing (LSH)]\label{def:lsh}
Consider a family $\H$ of hash functions mapping $\mathbb{R}^d$ into some universe $\U$. For $D, \Gamma>1$ and $p_1,p_2\in [0,1]$, $\mathcal{H}$ is $(D,\Gamma\cdot D,p_1,p_2)$-sensitive, if $\forall x,y\in \mathbb{R}^d$, it satisfies:
\begin{enumerate}
    \item  If $\dist(x,y)\leq D$, $\Pr_{h\in\mathcal{H}}[h(x)=h(y)]\geq p_1$. 
    \item If $\dist(x,y)\geq \Gamma\cdot D,\Pr_{h\in\mathcal{H}}[h(x)=h(y)]\leq p_2$.
\end{enumerate}
\end{definition}

\noindent We next describe locality-sensitive hashing that can be used to compute a graph approximation of $(P,\ell_2)$. We first give an algorithm that only takes care of point pairs $(x,y)\in P^2$ for which $\dist(x,y)\approx D$.

\paragraph{LSH-Based Spanner Algorithm for a given 
distance (a.k.a. scale) \boldmath$D$}
\begin{enumerate}
    \item \textbf{Input}: $n$ points $P \subset \mathbb{R}^d$, $\Gamma > 1$, $D > 0$.
    \item Choose a $(D, \Gamma\cdot D, p_1, p_2)$-sensitive family $\H$ and draw independent $h_1, \ldots , h_t$ from $H$ for $t = 5 \ln (n) / p_1$.
    \item Create a graph $G_D = (P, E_D)$ where each point $x \in P$ denotes a vertex in the graph.
    \item For each hash function $h_i$, $i \in [t]$ and each $u\in \U$ for which there is some $x\in P$ for which $h_i(x)=u$, let $x_u\in P$ be an arbitrary point for which $h_i(x_u)=p$ and add an edge $\set{x_u,y}$ to $E_D$ between $x$ and every other $y\in P$ for which $h_i(y)=h_i(x_u)=u$.
    \item Return $G_D$
\end{enumerate}

\noindent We have the following lemma.
\begin{lemma}\label{lem:graph-construct}
    Consider a set $P\subset\mathbb{R}^d$ of $n$ points and $\Gamma>1,D>0$.
Let $G_D=(P,E_D)$ be the output of the above algorithm. Then the number of edges is $|E_D|\leq \frac{5\ln n}{p_1}$ and 
with probability at least $1-\max\set{1/n^3,5p_2p_1^{-1}n^2\ln n}$, the following holds:
(I) $\forall x,y\in P$ with $\dist(x,y)\leq D$, either there is an edge between $x$ and $y$ in $G$ or $x$ and $y$ have a common neighbor in $G$.
(II) If there is an edge between $x,y\in P$ in $G$, $\dist(x,y)< \Gamma\cdot D$.
\end{lemma}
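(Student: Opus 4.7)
My plan is to prove the three claims (edge count, property (I), property (II)) separately and then combine the failure probabilities by a union bound. The argument is an entirely standard LSH analysis, so no step should be seriously delicate.

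First I would bound $|E_D|$. For each hash function $h_i$, the algorithm picks a representative $x_u \in h_i^{-1}(u) \cap P$ for every occupied bucket $u$ and connects it to every other point in the same bucket; hence hash function $h_i$ contributes at most $n-1$ edges (summing $|h_i^{-1}(u) \cap P| - 1$ over buckets telescopes). Summing over the $t = 5\ln(n)/p_1$ hash functions gives $|E_D| \le t(n-1) \le 5n\ln(n)/p_1$. (I suspect the factor of $n$ was dropped in the statement by a typo; the proof gives the $n$-factor version.)

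Next I would prove Property (I). Fix any pair $x,y \in P$ with $\dist(x,y) \le D$. By the $(D,\Gamma D,p_1,p_2)$-sensitivity of $\H$ and independence of the $h_i$,
\[
\Pr\!\left[\,\forall i\in[t]:\, h_i(x) \ne h_i(y)\,\right] \le (1-p_1)^t \le e^{-p_1 t} = e^{-5\ln n} = n^{-5}.
\]
A union bound over the at most $\binom{n}{2} < n^2/2$ close pairs shows that with probability at least $1 - 1/(2n^3)$, every such pair has some hash function $h_i$ with $h_i(x) = h_i(y) = u$. In that bucket the representative $x_u$ is adjacent to both $x$ and $y$ (or $x_u \in \{x,y\}$, in which case $x$ and $y$ are directly adjacent); either way they are at hop distance at most $2$ through a common neighbor, as required.

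For Property (II), fix a pair $x,y$ with $\dist(x,y) \ge \Gamma D$. By sensitivity, $\Pr[h_i(x) = h_i(y)] \le p_2$ for each fixed $i$, so by a union bound over the $t$ hash functions the probability that some $h_i$ collides on this pair is at most $tp_2 = 5p_2\ln(n)/p_1$. An edge $\{x,y\}$ (or $\{x_u,y\}$ with $y = x$, symmetric) can only be inserted when such a collision occurs, so applying a union bound over all $\binom{n}{2} < n^2/2$ far pairs bounds the probability that any edge between a far pair is inserted by $5p_2 n^2 \ln(n)/(2p_1)$. Finally, a last union bound combines the two bad events; absorbing the constants $1/2$ into the $\max$ (which loses at most a factor of $2$, easily swallowed by adjusting the leading constant in $t$ or re-stating as a $\max$) yields the claimed success probability $1 - \max\{1/n^3, 5p_2 n^2 \ln(n)/p_1\}$. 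The only mild subtlety is that ``having an edge'' in the algorithm is slightly stronger than ``colliding under some $h_i$'' (only the representative's edges are inserted), but this only makes the edge event rarer, so the bound remains valid.
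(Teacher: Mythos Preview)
Your proof is correct and essentially identical to the paper's: you bound $|E_D|$ by $(n-1)$ edges per hash function times $t=5\ln(n)/p_1$ hash functions, handle (I) via $(1-p_1)^t\le e^{-p_1 t}=n^{-5}$ plus a union bound over $\binom{n}{2}$ pairs, and handle (II) via a union bound over the $t$ hash functions and all far pairs. You also correctly flag the missing factor of $n$ in the stated edge bound, which the paper's own proof implicitly confirms.
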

\begin{proof}
    The bound on $|E_D|$ follows because the number of edges added per hash function $h_i$ is at most $n-1$ and there are $5\ln(n)/p_1$ hash functions $h_i$.

    For any $x,y\in P$, there is a path of length $\leq 2$ in the graph $G_D$ if there exists an $i\in \set{1,\dots,t}$ for which $h_i(x)=h_i(y)$. By the assumption that $\H$ is a $(D, \Gamma\cdot D, p_1, p_2)$-sensitive family of hash functions, for every $x,y\in P$ with $\dist(x,y)\leq D$ and every $i$, we have $\Pr(h_i(x)=h_i(y))\geq p_1$. The probability that $x$ and $y$ are not connected by a path of length $\leq 2$ is therefore at most
    \[
    1- (1-p_1)^t \geq 1 - e^{-p_1\cdot t} =
    1 - e^{-p_1\cdot \frac{5\ln n}{p_1}} = \frac{1}{n^5}.
    \]
    By a union bound over the at most $n^2/2$ pairs of nodes $x,y$ with $\dist(x,y)\leq D$, the probability that each such pair is connected by a path of length at most $2$ and thus (I) holds is therefore at least $1-\frac{1}{2n^3}$.

    In order for $x,y\in P$ to be connected by an edge, we must have $h_i(x)=h_i(y)$ for some $i\in \set{1,\dots,t}$. If $\dist(x,y)\geq \Gamma\cdot D$, the probability that this happens for a specific $i$ is at most $p_2$. Taking a union bound over all $t=5\ln(n)/p_1$ hash functions $h_i$ and all at most $n^2/2$ pairs of nodes $x,y$, we can upper bound the probability that (II) does not hold as $p_2\cdot\frac{5\ln n}{p_1}\cdot \frac{n^2}{2}$.
\end{proof}

\begin{lemma} \label{lem:opt_lsh} \cite{andoni2006near,Cohen-AddadMZ22}
For any ``scale'' $D>0$, dimension $d>0$, $c>0$ and $s\in[1,\log n]$, there is a $(D,\Gamma\cdot D,1/n^{s/\Gamma^2+o(1)},1/n^s)$-sensitive family $\mathcal{H}$ of hashing functions for $\R^d$.
In addition, for any set of $n$ points $P\subset \mathbb{R}^d$ and any $h\in\mathcal{H}$, there is a fully scalable MPC algorithm with $O(1)$ rounds and $n^{1+o(1)}d$ total space that computes $h(x)$ for every $x\in P$.
The space to store each $h(x)$ is $O(\log^3 n)$.
\end{lemma}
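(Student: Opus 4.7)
The plan is to invoke the Andoni--Indyk LSH construction from \cite{andoni2006near} (as adapted to the MPC model in \cite{Cohen-AddadMZ22}) and verify both the sensitivity parameters and the MPC implementation bounds. The hash family is built in two stages. First, one applies a data-oblivious random Gaussian projection $\Pi \colon \R^d \to \R^{t}$ for $t = \Theta(\log n)$, which with high probability preserves all pairwise distances in $P$ up to a $(1\pm o(1))$ factor by Johnson--Lindenstrauss. Second, in the reduced space, one performs the standard Andoni--Indyk ball carving: tile $\R^t$ with randomly shifted copies of a grid of balls of radius $\Theta(\sqrt{t})$ (scaled so that the target collision distance is $D$), and define $h(x)$ to be the identifier of the first ball (in a random ordering) that contains $\Pi(x)$.

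The first thing I would do is recall the two collision probability bounds from \cite{andoni2006near}: for the ball-carving hash in $\R^t$, if $\|x-y\| \le D$ then $\Pr[h(x) = h(y)] \ge p_1$ and if $\|x-y\| \ge \Gamma D$ then $\Pr[h(x) = h(y)] \le p_2$, with the ratio $\rho = \log(1/p_1)/\log(1/p_2)$ achieving $\rho = 1/\Gamma^2 + o(1)$ in the regime of interest. Choosing the ball radius so that $p_2 = 1/n^s$ forces $p_1 = 1/n^{s\rho} = 1/n^{s/\Gamma^2 + o(1)}$, which is exactly the sensitivity we need. The $(1\pm o(1))$ distortion from the random projection is absorbed into the $o(1)$ term in the exponent, so the composed hash family on $\R^d$ is $(D, \Gamma D, 1/n^{s/\Gamma^2 + o(1)}, 1/n^s)$-sensitive.

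For the MPC implementation, the idea is that the hash function $h$ is described by a short random seed: the Gaussian projection matrix $\Pi$ can be generated from $O(\polylog n)$ random bits via pseudorandom generators suitable for Johnson--Lindenstrauss, and the ball-carving grid is specified by a random shift vector in $\R^t$ plus a random permutation of the (small number of) candidate balls that need to be checked for any single point. This entire description fits in $O(\log^3 n)$ bits, which also bounds the storage of any individual $h(x)$ (it is just the identifier of one ball in the tiling). The seed is broadcast to all machines in a single round using standard MPC broadcast on $n^{1+o(1)}d$ total space. After that, each machine holding a point $x \in P$ computes $\Pi(x) \in \R^t$ locally in $O(td) = O(d \log n)$ time and then identifies its ball in $O(2^{O(t)/\Gamma^2})$ local time; tuning $t$ carefully keeps all of this within the $n^{1+o(1)}d$ total-space budget and finishes in $O(1)$ rounds.

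The main obstacle I expect is matching the exponent $1/\Gamma^2 + o(1)$ exactly rather than a weaker $O(1/\Gamma^2)$. This requires the more delicate version of the ball-carving analysis from \cite{andoni2006near}, where $t$ must be chosen as a specific slowly growing function of $\Gamma$ to simultaneously make the projection distortion negligible and bring the exponent of $\Gamma$ in $\rho$ down to $1/\Gamma^2$. A secondary subtlety is ensuring that the MPC computation of $h(x)$ does not blow up when $\Gamma$ is a large constant: the number of balls to check per point is exponential in $t$, so one must choose the smallest $t$ that still yields the claimed sensitivity, and confirm that the resulting per-point work and the $O(\log^3 n)$-bit hash identifier are compatible. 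Once these calibrations are done, the lemma follows directly from \cite{andoni2006near,Cohen-AddadMZ22}.
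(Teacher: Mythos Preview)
The paper does not prove this lemma at all: it is stated with the citations \cite{andoni2006near,Cohen-AddadMZ22} and used as a black box, with no accompanying proof or even a sketch. Your proposal is a reasonable high-level recollection of the Andoni--Indyk ball-carving construction and how it was ported to MPC, but none of this appears in the paper and none of it is required---the authors simply import the result. So there is nothing to compare your argument against; if anything, your write-up goes well beyond what the paper does for this statement.
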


\noindent We now have everything that we need to describe the graph approximation of $(\mathbb{R}^d, \dist)$ that we use.

\begin{lemma}\label{lemma:graphapprox}
    Let $P$ be a set of $|P|=n$ points in $\mathbb{R}^d$ and let $\eps>0$ be an arbitrary constant. There is a weighted graph $G=(P, E)$ with node set $P$ and $|E|\leq n^{1+\eps}$ edges and positive edge weights $w(e)\ge 1$ such that for some constant $\Gamma$, $G$ has the following properties
    \begin{enumerate}
        \item[(I)] For any two nodes $x,y\in P$, either $G$ contains an edge $e=\set{x,y}$ of length $w(e)\leq \dist(x,y)/2$ or there exists a point $z\in P$ such that $e_1=\set{x,z}\in E$, $e_2=\set{z,y}\in E$, and $w(e_1)+w(e_2)\leq \dist(x,y)$.
        \item[(II)] For any two nodes $x,y\in P$, $\dist(x,y)\leq \Gamma\cdot d_G(x,y)$, where $d_G(x,y)$ denotes the weighted shortest path distance in $G$.
    \end{enumerate}
    Further, there is a randomized MPC algorithm to compute $G$ in $O(1)$ rounds and with a global memory of size $O(n^{1+\eps})$.
\end{lemma}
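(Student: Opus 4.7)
The plan is to build $G$ as a union of $O(\log n)$ LSH-based spanners from \Cref{lem:graph-construct}, one per geometric distance scale, with each edge weighted by a constant fraction of its scale's nominal distance. The two required properties will then follow directly from \Cref{lem:graph-construct}(I) and (II) applied at the appropriate scale, and the MPC implementation will follow from running the individual scales in parallel using \Cref{lem:opt_lsh}.

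\textbf{Construction.} Fix constants $\beta>1$ and $K\geq 2\beta$. Define scales $D_i:=\beta^i$ for $i=0,1,\dots,I$ covering the range $[1,n^{O(1)}]$, so that $I=O(\log n)$. For each $D_i$, I would invoke the single-scale spanner algorithm preceding \Cref{lem:graph-construct} using a $(D_i,\Gamma'D_i,p_1,p_2)$-sensitive LSH family from \Cref{lem:opt_lsh} with sensitivity parameter $s=3$, yielding $p_2=1/n^3$ and $p_1\geq 1/n^{3/\Gamma'^2+o(1)}$ for a constant $\Gamma'>1$ to be fixed below. \Cref{lem:graph-construct} then succeeds with polynomially small failure probability at each scale, and a union bound over the $O(\log n)$ scales preserves high-probability correctness. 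The output graph $G=(P,E)$ is the union $E=\bigcup_i E_{D_i}$, and each edge $e$ added at scale $D_i$ is assigned weight $w(e):=\max\{1,D_i/K\}$.

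\textbf{Verification of properties.} By \Cref{lem:graph-construct}(II), every edge added at scale $D_i$ has Euclidean length strictly less than $\Gamma'D_i\leq K\Gamma'\cdot w(e)$, so summing along any $x$-$y$ path in $G$ and applying the triangle inequality gives $\dist(x,y)\leq K\Gamma'\cdot d_G(x,y)$; this is property (II) with $\Gamma:=K\Gamma'$. For property (I), given $x,y\in P$ with $d:=\dist(x,y)$, let $i^*$ be the smallest index with $D_{i^*}\geq d$, so $D_{i^*}\leq \beta d$. By \Cref{lem:graph-construct}(I), either $\{x,y\}\in E_{D_{i^*}}$, in which case $w(\{x,y\})\leq D_{i^*}/K\leq \beta d/K\leq d/2$; or there exists $z\in P$ with $\{x,z\},\{z,y\}\in E_{D_{i^*}}$ and combined weight at most $2D_{i^*}/K\leq d$. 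The weight-$1$ floor is only active when $D_{i^*}<K$, i.e., for very small $d$; I would handle this by a preliminary $\Theta(1)$-factor rescaling of the input so that the minimum interpoint distance is at least $K$, which changes the $k$-means objective only by a fixed constant factor and preserves all subsequent approximation guarantees.

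\textbf{Size and MPC implementation.} At scale $D_i$, the spanner algorithm adds at most $n\cdot t=O(n\log n/p_1)$ edges, so over all scales the total is $|E|=O(\log^2 n)\cdot n^{1+3/\Gamma'^2+o(1)}$. Choosing $\Gamma'$ large enough that $3/\Gamma'^2+o(1)\leq \eps/2$ then yields $|E|\leq n^{1+\eps}$ for sufficiently large $n$. By \Cref{lem:opt_lsh}, evaluating each chosen hash function across $P$ takes $O(1)$ MPC rounds and $n^{1+o(1)}\cdot d$ total space; executing all $O(\log n)$ scales in parallel therefore stays within $O(1)$ rounds and $O(n^{1+\eps})$ global memory, since we may assume $d=O(\log n)$ after the Johnson-Lindenstrauss reduction mentioned in \Cref{sec:prelim}. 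I expect the main obstacle to be the joint calibration of the constants $\beta$, $K$, and $\Gamma'$: $K\geq 2\beta$ is required for property (I)'s two slack factors, $\Gamma=K\Gamma'$ must remain an absolute constant for property (II), and $\Gamma'$ must be large enough (relative to $\eps$ and $s$) to make the edge count $n^{1+\eps}$; once these trade-offs are resolved, everything else is bookkeeping on top of \Cref{lem:graph-construct} and \Cref{lem:opt_lsh}.
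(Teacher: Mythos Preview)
Your proposal is correct and follows essentially the same approach as the paper: take a union of the single-scale LSH spanners from \Cref{lem:graph-construct} over $O(\log n)$ geometric scales, weight each edge by a constant fraction of its scale, and read off Properties~(I) and~(II) from parts~(I) and~(II) of \Cref{lem:graph-construct}; the paper uses $\beta=2$, $K=4$, and a different choice of~$s$, but the structure is identical. If anything, you are more careful than the paper on two points: your derivation of Property~(II) via summing the per-edge bound along a path is explicit (the paper's one-line justification is rather terse), and you actually notice the tension between the requirement $w(e)\geq 1$ and Property~(I) at the smallest scales, which the paper's proof glosses over.
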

\begin{proof}
	We normalize the distances such that $\dist(x, y) \in [1, O(n^6)]$ for all $x, y \in P$ (see \Cref{sec:prelim}). Let $D_i = 2^i$ for $i \in [0, O(\log n)]$ and define $G_i$ to be the graph constructed using \Cref{lem:graph-construct} with parameter $D = D_i$.
	
	In $G_i$, for every pair $x, y \in P$ with $\dist(x,y) \le D_i$, either $\{x, y\} \in E_i$ or there exists $z \in P$ such that $\{x,z\}, \{z,y\} \in E_i$. For all edges $\{x, y\} \in E_i$, assign weight $w(x, y) = D_i/4$. We take the union of all $G_i$ over all $i$ to form $G$ and for each edge, retain the minimum weight among those assigned in different $G_i$.
	
	Suppose $\dist(x,y) \in (D_{i-1}, D_i]$. Then $w(x,y) = D_i/4 < \dist(x,y)$, and for pairs connected through an intermediate point $z$, we have $w(x,z) + w(z,y) \le D_i/2 < \dist(x,y)$. This establishes Property (I).
	
	For Property (II), note that any path of edges in $G$ consists of edges of length at most $D_i$, so the total length over any shortest path is at most $\Gamma \cdot \dist(x,y)$ for some constant $\Gamma$.
	
	To analyze the complexity, observe that for each scale $D_i = 2^i$, we construct a graph $G_i$ using \Cref{lem:graph-construct}. Each such construction requires a family of $(D_i, \Gamma \cdot D_i, p_1, p_2)$-sensitive hash functions, which, by \Cref{lem:opt_lsh}, can be computed for all $x \in P$ in $O(1)$ MPC rounds using $n^{1+o(1)}d$ total space. Since we have $O(\log n)$ different scales $D_i$, we perform $O(\log n)$ such graph constructions in parallel, so the total space for all hash functions is $\tilde{O}(n^{1+o(1)})$. Each $G_i$ contains at most $p_1^{-1} \cdot 5 \log n$ edges, and taking the union across all $i$ yields at most $p_1^{-1} \cdot 5 \log^2 n$ total edges.
	
	By choosing parameters so that $p_1 = 1/n^{s/\Gamma^2+o(1)}$, we ensure that the total number of edges in $G$ is at most $n^{s/\Gamma^2 + o(1)} \log^2 n$. Choosing $\Gamma = O(1/\sqrt{\epsilon})$ implies:
	\[
	|E| \le \log^2 n \cdot n^{s \cdot \epsilon + o(1)}.
	\]
	To ensure $|E| \le n^{1 + \epsilon}$, we select
	\[
	s = \frac{1 + \epsilon}{\epsilon} - o(1).
	\]
	Since $\log^2 n = n^{o(1)}$, it follows that the number of edges is bounded by $n^{1+\epsilon}$ as required.
	
	Finally, since each edge weight is derived from a constant number of scales and updating the minimum weights can be done in parallel, the final graph $G$ can be computed in $O(1)$ MPC rounds using total space $\tilde{O}(n^{1+\epsilon})$.
\end{proof}

\subsection{MPC Graph Algorithms}
\label{sec:mpc_graphalg}

Since we can approximate the underlying Euclidean metric by a graph, some of our algorithms become standard MPC graph algorithms. We therefore next discuss some MPC algorithms that directly work on a given undirected graph $G=(V,E)$. When designing those MPC algorithms, we assume that each machine has a local memory of $n^\delta$ words, where $n=|V|$ is the number of nodes and $\delta>0$ is an arbitrary constant (and where each word typically consists of $\Theta(\log n)$ bits). Note that in the context of graph algorithms, this setting is known as the sublinear-memory MPC regime. If not specified explicitly, we will in the following assume that an MPC algorithm is allowed to use a total memory of $O(m+n)$ bits, where $m=|E|$ is the number of edges of $G$. That is, we allow a total space that is by a $\poly(\log n)$ factor larger than what is needed to store $G$.

We assume that $G$ is given as a list of nodes and edges. To be able to distinguish different nodes, each node has a unique $O(\log n)$-bit name, which we will refer to as the ID of the node. Since sorting can be done in constant time and with linear global memory in the MPC model~\cite{goodrich2011sorting}, we can assume that the edges of $G$ are sorted by one of their nodes. If for each edge $e=\set{u,v}\in E$, we store a copy of $e$ for $u$ and a copy of $e$ for $v$, we can assume that the edges of each node are stored consecutively. If the degree of a node is at most $n^\delta$, all its edges can be stored on a single machine. Otherwise, the edges of a node will be stored on consecutive machines. In this way, operations such as computing a sum or a minimum overall neighbors for each node are straightforward to implement in $O(1)$ time. For our algorithm, we will however also need to implement such basic operations and run some algorithms on the graph $G^t$ for some $t=O(1)$, where $G^t$ is the graph on nodes $V$ with an edge between any two nodes that are in hop distance $\leq t$ in $G$. Note that already for $t=2$, $G^t$ can be a much denser graph than $G$ and we can therefore not store $G^t$ explicitly without increasing the global memory too substantially. We therefore need to implement algorithms on $G^t$ as algorithms that operate on $G$. The next two lemmas provide some basic operations on $G^t$ that can be implemented efficiently by using standard techniques. In the following, we use $N_t(v):=\set{u \in V : d_G(u,v)\leq t}$ to denote the $t$-hop neighborhood of $v$ in $G$ (for convenience, including the node $v$ itself).

For convenience, we introduce the following notation to denote the $\ell$ smallest elements of a set $A\subseteq \mathbb{X}$ of some totally ordered domain $\mathbb{X}$.

\begin{equation}\label{eq:ksmallest}
    \minl(A) := A' \subseteq A \text{ such that }
    |A'| = \min\set{\ell, |A|} \land \forall (a,b)\in A' \times (A\setminus A') : a<b.
\end{equation}

\begin{lemma}\label{lemma:basicMins}
    Let $G=(V,E)$ be an $n$-node graph and let $\ell\geq 1$ be an integer parameter such that $\ell\leq n^{\delta/2}$ (where $n^\delta$ is the memory of a single machine). Assume that as input, each node $u\in V$ has a set $A_u\subseteq \mathbb{X}$ of size $|A_u|\leq \ell$, where $\mathbb{X}$ is a totally ordered set of size $|\mathbb{X}|$ at most $\poly(n)$. Let $t\geq 1$ be an integer. As output, every node $u$ must output the set $B_u=\minl\big(\bigcup_{v\in N^t(u)} A_v\big)$. If one can use $O((m + n)\cdot \ell)$ bits of global memory, the sets $B_u$ can be computed in $O(t)$ rounds in the MPC model.
\end{lemma}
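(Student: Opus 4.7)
The plan is to compute the $B_u$ by $t$ rounds of local aggregation. Maintain at each node $u$ an intermediate set $B_u^{(i)}$ with $B_u^{(0)}:=A_u$ and the recurrence $B_u^{(i)} := \minl\big(\bigcup_{v\in N(u)\cup\{u\}} B_v^{(i-1)}\big)$, where $N(u)$ is the $1$-hop neighborhood in $G$. The key observation enabling iteration is that $\minl$ is compatible with hierarchical reduction: for any collection $\{X_j\}_j$ with $Y=\bigcup_j X_j$, we have $\minl(Y)=\minl\big(\bigcup_j \minl(X_j)\big)$, since any element discarded by an inner $\minl(X_j)$ is dominated by $\ell$ strictly smaller elements already retained inside $X_j$ and hence cannot belong to $\minl(Y)$. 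Applying this observation inductively on $i$ gives $B_u^{(i)}=\minl\big(\bigcup_{v\in N^i(u)} A_v\big)$, so after $t$ iterations $B_u^{(t)}=B_u$ as required.

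It then remains to implement one step of the recurrence in $O(1)$ MPC rounds within the stated memory budget. Each node $v$ sends its current list $B_v^{(i-1)}$ (of at most $\ell$ items, each of size $O(\log n)$ bits) along each incident edge, producing $O(m\ell)$ total messages, which fits in the allowed global memory up to polylog factors. A receiving node $u$ must then compute $\minl$ over at most $(d_u+1)\cdot\ell$ values. If $d_u\le n^\delta/\ell$ these all fit on a single machine and the reduction is performed locally in one round. The main obstacle is the case $d_u>n^\delta/\ell$, where the incoming data is spread across $k=\lceil(d_u+1)\ell/n^\delta\rceil$ machines and cannot be reduced in a single step.

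To resolve the high-degree case, I would use a balanced reduction tree of branching factor $b:=\lfloor n^\delta/\ell\rfloor$: each internal machine reads up to $b$ children's lists (total at most $b\cdot \ell\le n^\delta$ elements, fitting in local memory), computes the $\ell$ smallest of their union, and forwards the result one level up; correctness of this layered combine is precisely the associativity statement proved above. Because $\ell\le n^{\delta/2}$, we have $b\ge n^{\delta/2}$, while $k\le n^{1-\delta/2}$, so the tree has depth $O(\log_b k)=O(1/\delta)=O(1)$. Assigning ranges of machines to nodes and setting up all these reduction trees in parallel is a standard MPC aggregation primitive that can be arranged via $O(1)$ sorts~\cite{goodrich2011sorting}. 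Summing up, each of the $t$ propagation rounds takes $O(1)$ MPC rounds and uses $O((m+n)\ell)$ global memory (up to polylog factors), which gives the claimed $O(t)$-round algorithm. The only nontrivial point in the argument is the tree reduction for high-degree nodes; everything else is either the elementary property of $\minl$ noted above or a direct application of standard MPC primitives.
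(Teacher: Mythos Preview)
Your proposal is correct and follows essentially the same approach as the paper: reduce the $t$-hop problem to $t$ iterations of the $1$-hop problem via the associativity of $\minl$, and implement each $1$-hop step by shipping the $\ell$-element lists along edges and aggregating at each node using a constant-depth reduction tree whose fan-out is large enough because $\ell\le n^{\delta/2}$. The paper simply fixes the tree fan-out to $n^{\delta/2}$ while you use $\lfloor n^{\delta}/\ell\rfloor\ge n^{\delta/2}$, which amounts to the same bound.
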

\begin{proof}
    First, note that it is sufficient to show that the problem can be solved for $t=1$ in $O(1)$ rounds. To see this, note that for every $u\in V$ and every $t> 1$, we have
    \[
    \minl\left(
    \bigcup_{v\in N^{1}(u)}
        \minl\left(
        \bigcup_{w\in N^{t-1}(v)} A_w
        \right)
    \right)\ =\
    \minl\left(
        \bigcup_{v\in N^{t}(u)} A_v
    \right)
    \]
    We can therefore break the task of finding the $\ell$ smallest values in the $t$-hop neighborhood into $t$ times consecutively finding the $\ell$ smallest values in the $1$-hop neighborhood. For the remainder of the proof, we therefore restrict ourselves to the case $t=1$.

    Because we have $O((m + n)\cdot \ell)$ bits of global memory, we can explicitly store every edge $\set{u,v}$ together with the input sets $A_u$ and $A_v$ of its two nodes. As usual, we store each edge twice, once for each of its nodes and we sort the edges by the nodes so that for each node, all its edges are stored consecutively. We can do this so that a) if all edges $\set{u,v}$ of a node $u$ (including the sets $A_u$ and $A_v$) fit on one machine, they are all stored on one machine and b) otherwise, we have a set of machines that only store the edges of node $u$. For the nodes $u$ for which all edges fit on one machine, we can compute $\minl\big(\bigcup_{v\in N^1(u)} A_v\big)$ without communication. Consider a node $u$ for which the edges $\set{u,v}$ do not fit on one machine and let $\M_u$ be the set of machines that store $u$'s edges $\set{u,v}$ together with the sets $A_u$ and $A_v$. We connect the machines into a $n^{\delta/2}$-ary aggregation tree (of height at most $O(1/\delta)=O(1)$). We can aggregate the smallest $v$ values among the neighbors of $v$ on this aggregation tree in $O(1)$ time. To see that we have enough bandwidth, note that each machine on this tree needs to receive at most $\ell$ values from each of its $n^{\delta/2}$ children in the tree. Because $\ell\leq n^{\delta/2}$, the total number of values a machine has to receive in a single round is $n^\delta$. This concludes the proof. 
\end{proof}

\begin{lemma}\label{lemma:basicSums}
    Let $G=(V,E)$ be a graph, let $\eps\geq 1/\poly\log n$, let $t\geq 0$ be a non-negative integer, and assume that every node $u\in V$ is given an input value $x_u\geq 0$. For each node $u\in V$, we define $X_u^{(t)}:=\sum_{u\in N^t(v)} x_v$. There is an $O(t)$-round sublinear-memory MPC algorithm to compute a value $\tilde{X}_u^{(t)}$ such that $X_u^{(t)}\leq \tilde{X}_u^{(t)}\leq (1+\eps)X_u^{(t)}$. The algorithm is randomized and succeeds with high probability.
\end{lemma}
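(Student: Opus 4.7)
The plan is to combine an exponential-race sketching technique with Lemma~\ref{lemma:basicMins}, reducing the approximate sum problem to $O(\polylog n)$ parallel invocations of the minimum-aggregation primitive.

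\textbf{Step 1 (Sampling exponential variables).} Set $k := C\log n/\eps^2$ for a sufficiently large constant $C$; since $\eps \geq 1/\polylog n$, we have $k = O(\polylog n)$. Each node $v \in V$ independently draws $k$ samples $z_v^{(1)}, \dots, z_v^{(k)}$, where $z_v^{(j)} := -\ln(U_v^{(j)})/x_v$ for $U_v^{(j)}$ uniform in $[0,1]$, so that $z_v^{(j)} \sim \mathrm{Exp}(x_v)$. If $x_v = 0$, set $z_v^{(j)} := +\infty$. Each sample is discretized to $O(\log n)$ bits of precision, which affects the final answer by at most a $(1 \pm \eps)$ factor and places the values in a totally ordered domain of size $\poly(n)$ as required by Lemma~\ref{lemma:basicMins}.

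\textbf{Step 2 (Minimum aggregation via Lemma~\ref{lemma:basicMins}).} For each $j \in [k]$ in parallel, invoke Lemma~\ref{lemma:basicMins} with $\ell = 1$ and the input $A_v := \{z_v^{(j)}\}$ to compute, for every node $u \in V$, the value $M_u^{(j)} := \min_{v \in N^t(u)} z_v^{(j)}$. Each invocation uses $O(t)$ MPC rounds and $O(m+n)$ global memory bits, so running all $k$ invocations in parallel costs $O(t)$ rounds and $\tilde{O}(m+n)$ global memory.

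\textbf{Step 3 (Estimator).} By the standard minimum-of-exponentials identity, $M_u^{(j)} \sim \mathrm{Exp}\bigl(X_u^{(t)}\bigr)$ independently across $j$, so $S_u := \sum_{j=1}^k M_u^{(j)}$ is Gamma-distributed with shape $k$ and rate $X_u^{(t)}$. Standard concentration for sums of i.i.d.\ exponentials (Chernoff/Bernstein for Gamma) yields
\[
\Pr\!\left[\,\left|S_u - \tfrac{k}{X_u^{(t)}}\right| > \tfrac{\eps}{3}\cdot\tfrac{k}{X_u^{(t)}}\,\right] \leq 2\exp(-\Omega(k\eps^2)) \leq n^{-c}
\]
for a constant $c$ as large as desired, by choosing $C$ large enough. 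On this event, $\hat{X}_u := k/S_u$ satisfies $\hat{X}_u \in \bigl[X_u^{(t)}/(1+\eps/3),\ X_u^{(t)}/(1-\eps/3)\bigr]$. Finally, each machine that holds the value $\hat{X}_u$ outputs $\tilde{X}_u^{(t)} := (1+\eps/3)\cdot\hat{X}_u$. Then $\tilde{X}_u^{(t)} \geq X_u^{(t)}$ (lower bound) and $\tilde{X}_u^{(t)} \leq (1+\eps/3)/(1-\eps/3)\cdot X_u^{(t)} \leq (1+\eps)\,X_u^{(t)}$ (upper bound, for $\eps$ bounded by a small constant; for larger $\eps$ we apply the argument with $\eps' := \min\{\eps, 1/3\}$). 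A union bound over all $n$ nodes gives the high-probability guarantee. The case $X_u^{(t)} = 0$ is handled correctly since then every $M_u^{(j)} = +\infty$ and $\hat{X}_u = 0$.

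\textbf{Main obstacle.} The delicate point is turning the two-sided probabilistic concentration of the Gamma-distributed estimator into the one-sided inequality $X_u^{(t)} \leq \tilde{X}_u^{(t)} \leq (1+\eps)X_u^{(t)}$ required by the lemma. This is resolved by deliberately biasing the estimator upward by a factor $(1+\eps/3)$ and choosing $k = \Theta(\log n/\eps^2)$ large enough that the failure probability is $1/\poly(n)$, so that the one-sided bounds hold simultaneously for all nodes with high probability. The only other subtlety is verifying that the discretization of the continuous exponential samples to $O(\log n)$ bits does not blow up the relative error; this follows because the relevant minima $M_u^{(j)}$ take values in a range of width $\poly(n)$ and the relative rounding error can be absorbed into the $(1 \pm \eps/3)$ slack.
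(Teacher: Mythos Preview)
Your proposal is correct and follows essentially the same approach as the paper: the paper simply cites the technique of Mosk-Aoyama and Shah, which is precisely the exponential-race sketch you spell out (reducing a $(1+\eps)$-approximate sum to $O(\log(n)/\eps^2)$ parallel minimum computations, each handled by Lemma~\ref{lemma:basicMins} with $\ell=1$). Your write-up just makes explicit the concentration argument and the upward biasing needed for the one-sided guarantee.
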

\begin{proof}
    To approximate the sum $X_u^{(t)}$ of the values $x_v$, we can use the technique described in \cite{Mosk-AoyamaS08}, where it is shown that the computation of a $(1+\eps)$-approximation of the sum of $n$ values can be reduced to $O(\log(n)/\eps^2)$ (independent) instances of computing the minimum of $n$ values. We can apply the algorithm of \Cref{lemma:basicMins} for $\ell=1$ to determine the minimum value in the $t$-hop neighborhood $O(\log(n)/\eps^2)$ times in parallel to compute $\tilde{X}_v^{(t)}$.
\end{proof}

As a key step of our facility location algorithm, we have to find a set of center nodes in the dependency graph of paid (or approximately paid) facilities so that two nodes in the set are sufficiently separated and ideally, all nodes are within a constant distance of those center nodes. Such sets are known as ruling sets~\cite{awerbuch89}. Formally, an $(a,b)$-ruling set of a graph $G=(V,E)$ is a subset $S\subseteq V$ of the nodes so that for all $u,v\in S$, $d_G(u,v)\geq a$ and for all $u\not\in S$, there exists $v\in S$ such that $d_G(u,v)\leq b$. There is quite extensive literature on distributed and also MPC algorithms for computing ruling sets. However, unfortunately, for the sublinear-memory regime, it is not known if an $(a,b)$-ruling set for any $a\geq 2$ and $b=O(1)$ can be computed in time $\poly(\log\log n)$ in the distributed or in the sublinear memory MPC setting. In fact, it is not even known if such an algorithm exists for $b=O(\log\log n)$ if $a> 2$. As a first step, we show how to compute an $(a,b)$-ruling set for $a>1$ and $b=O(\log\log\log n)$ by using a bit of additional global memory and an algorithm of \cite{KothapalliPP20}.

\begin{lemma}\label{lemma:logloglogRS}
    Let $G=(V,E)$ be an $n$-node graph with $m=|E|$ edges and let $U\subseteq V$. For every $t\geq 2$, there exists a sublinear-memory MPC algorithm that computes an $(2, O(\log\log\log n))$-ruling set $S\subseteq U$ of $G^t[U]$ in $O\big(t\cdot\log\log\log n + \log\log n\cdot\log\log\log n \big)$ rounds. The algorithm requires $O((m+n)\cdot n^\eps)$ bits of global memory, where $\eps>0$ is a constant that can be chosen arbitrarily small.
\end{lemma}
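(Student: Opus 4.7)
The plan is to reduce to the algorithm of Kothapalli, Pai, and Pemmaraju~\cite{KothapalliPP20}, which on an input graph with $m'$ edges computes a $(2, O(\log\log\log n))$-ruling set in $O(\log\log n \cdot \log\log\log n)$ rounds using $O(m' + n^{1+\eps})$ bits of global memory. A direct application to $H := G^t[U]$ fails, as $G^t$ can contain up to $\Theta(n^2)$ edges and cannot be materialized. We instead apply KothapalliPP20 to a sequence of sparse sampled subgraphs, each phase shrinking the maximum remaining degree by a factor of $n^{\Omega(\eps)}$.

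Fix $\eps' := \eps/2$, initialize $U_1 := U$ and $S := \emptyset$, and iterate the following phase $i$. First, use \Cref{lemma:basicSums} to estimate $\Delta_i := \max_{v \in U_i}\deg_{G^t[U_i]}(v)$ up to a constant factor in $O(t)$ rounds. If $\Delta_i \leq n^{\eps'}$, then $G^t[U_i]$ has at most $n^{1+\eps'}$ edges and can be built explicitly via $t$ rounds of BFS; invoke KothapalliPP20 on it, append the result to $S$, and terminate. Otherwise, sample each node of $U_i$ independently with probability $p_i := \sqrt{n^{1+\eps'}/(|U_i|\,\Delta_i)}$, yielding $U_i' \subseteq U_i$, which satisfies $|E(G^t[U_i'])| \leq n^{1+\eps'}$ w.h.p.\ by standard concentration. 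Materialize $G^t[U_i']$ through $t$ rounds of BFS propagating sampled identifiers (using the aggregation primitive of \Cref{lemma:basicMins}), then invoke KothapalliPP20 on the result to obtain a $(2, b')$-ruling set $S_i \subseteq U_i'$ with $b' = O(\log\log\log n)$; append $S_i$ to $S$. Finally, define $U_{i+1}$ as $U_i$ with every node whose $G^t[U_i]$-hop-distance to $S_i$ is at most $b'+1$ removed. Since one $G^t$-hop equals at most $t$ hops in $G$, this costs $O(t\log\log\log n)$ rounds of BFS in $G$ propagating a single ``near $S_i$'' bit per node.

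For correctness of $S$ as a $(2, b'+1)$-ruling set of $G^t[U]$: pairs in a single $S_i$ are at $G^t$-distance $\geq 2$ by construction; for $i < j$, any $u \in S_j$ lies in $U_{i+1}$ and hence $d_{G^t[U_i]}(u, S_i) > b'+1$, forcing $d_G(u, S_i) > t$; and every $v \in U \setminus S$ was removed in some phase with $d_{G^t[U_i]}(v, S_i) \leq b'+1$, which yields $d_{G^t[U]}(v, S) \leq b'+1 = O(\log\log\log n)$ since $U_i \subseteq U$.

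To bound the phase count, fix any $v \in U_i$ with $\deg_{G^t[U_i]}(v) \geq 4\ln(n)/p_i$; a Chernoff bound shows that w.h.p.\ some $G^t[U_i]$-neighbor of $v$ is sampled into $U_i'$, and that neighbor is at $G^t[U_i']$-distance $\leq b'$ from $S_i$, so $v$ is removed together with it. Thus $\Delta_{i+1} \leq 4\ln(n)/p_i$, and substituting the formula for $p_i$ with $|U_i| \leq n$ yields the recurrence $\log\Delta_{i+1} \leq \tfrac{1}{2}\log\Delta_i - \tfrac{\eps'}{2}\log n + O(\log\log n)$, driving $\Delta_i$ below $n^{\eps'}$ within $O(\log(1/\eps'))=O(1)$ phases. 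Each phase costs $O(t\log\log\log n + \log\log n \cdot \log\log\log n)$ rounds, matching the claim. The main technical obstacle is the materialization of $G^t[U_i']$: although the output has at most $n^{1+\eps'}$ edges, intermediate BFS states could a priori explode. The resolution is that only pairs lying inside the sampled set $U_i'$ need to be recorded, so the total communication is controlled by the output sparsity and can be load-balanced across the aggregation trees provided by \Cref{lemma:basicMins}.
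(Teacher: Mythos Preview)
Your high-level framework matches the paper's exactly: sample a sparse subset of $U_i$, explicitly build the induced $G^t$-graph on the sample, run the Kothapalli--Pai--Pemmaraju algorithm on it, peel off the covered vertices, and iterate for $O(1/\eps)$ phases. The correctness argument you give (independence across phases, coverage within $b'+1$ hops) is fine.

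The gap is in the materialization of $G^t[U_i']$. Your sampling probability $p_i=\sqrt{n^{1+\eps'}/(|U_i|\,\Delta_i)}$ is chosen so that the \emph{total} edge count of $G^t[U_i']$ is at most $n^{1+\eps'}$, but it does not bound the \emph{per-vertex} number of sampled $t$-hop neighbours. A vertex $v\in U_i'$ can have up to $p_i\Delta_i=\sqrt{n^{1+\eps'}\Delta_i/|U_i|}$ sampled neighbours, which in the first phase (where $\Delta_1$ and $|U_1|$ may both be $\Theta(n)$) is $\Theta(n^{(1+\eps')/2})$. The primitive you invoke, \Cref{lemma:basicMins}, requires $\ell\le n^{\delta/2}$ so that the $\ell$-element lists fit along the aggregation tree, and it uses $O((m+n)\ell)$ global memory; with $\ell=n^{(1+\eps')/2}$ this violates the per-machine constraint for every $\delta<1$, and on a sparse $G$ whose $t$-th power is dense it also blows the $O((m+n)n^\eps)$ global-memory budget. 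Your closing remark that ``total communication is controlled by the output sparsity'' does not help: the machinery of \Cref{lemma:basicMins} is per-node, and every intermediate node must hold and forward $\ell$ identifiers each round regardless of how sparse the final output is.

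The paper sidesteps this by choosing the sampling probability differently. It maintains the invariant that $G^t[U_{p-1}]$ has maximum degree at most $n^{1-(p-1)\eps}$ and marks each remaining node with probability $\Theta(\log n)/n^{1-\eps p}$; consequently every vertex of $U_{p-1}$ has only $O(n^\eps\log n)$ marked $t$-hop neighbours w.h.p., so \Cref{lemma:basicMins} applies with $\ell=O(n^\eps\log n)\le n^{\delta/2}$ and global memory $O((m+n)n^\eps\log n)$. The maximum degree then drops by a factor $n^{\Theta(\eps)}$ per phase, still $O(1/\eps)$ phases in total. Replacing your $p_i$ by this probability repairs the argument with essentially no other changes.
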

\begin{proof}
    We first describe the high-level idea of the algorithm. Note that a set $S\subseteq U$ is a \\ $(2, O(\log\log\log n))$-ruling set of $G^t[U]$ if and only if $S$ is an independent set of $G^t$ and for every node $u\in U\setminus S$, there exists a node $v\in S$ at distance $O(\log\log\log n)$ in $G^t$. The algorithm consists of $O(1/\eps)$ phases. At the beginning $S=\emptyset$ and the set of nodes that still needs to be covered is $U_0=U$. After phase $p$, we have already selected a set $S_p\subseteq U$ to be in $S$ and the set of uncovered nodes in $U$ is $U_p$ (i.e., $U_p$ are the nodes $u\in U$ for which there is no node $v\in S_p$ that is within distance $O(\log\log\log n)$ in $G^t$). The goal of each phase is to reduce the maximum degree of the remaining subgraph of $G^t$ by a factor $O(n^\eps)$. More precisely, we will compute $S_p\supseteq S_{p-1}$ such that $G^t[U_p]$ has maximum degree at most $n^{1-\eps p}$. The number of phases is therefore at most $O(1/\eps)=O(1)$. In the following, we describe how a single phase $p$ is implemented. At the same time, we also show by induction that for every $p\geq 0$, the maximum degree of $G^t[U_p]\leq n^{1-\eps p}$. This condition is clearly true for $p=0$.

    We assume that at the beginning of phase $p\geq 1$, we know the set $U_{p-1}$ of uncovered nodes. By using the induction hypothesis, we further assume that the maximum degree of $G^t[U_{p-1}]$ is at most $n^{1-(p-1)\eps}$. In phase $p$, we first independently mark every node in $U_{p-1}$ with probability $c\cdot \log(n) / n^{1-\eps p}$ for a sufficiently large constant $c>0$. Let $M_p$ be the set of those marked nodes in phase $p$. Note that in the graph $G^t[U_{p-1}]$, every node of degree at least $n^{1-\eps p}$ with high probability has at least one neighbor in $M_p$. Assume that we can compute a $(2,b)$-ruling set $S_p^+$ of $G^t[M_p]$ for some $b=O(\log\log\log n)$. By setting $S_p = S_{p-1} \cup S_p^+$ and including in $U_p$ all nodes from $U_{p-1}$ that are at a $G^t$-distance greater than $b+1 = O(\log\log\log n)$ from $S_p^+$, the maximum degree of $G^t[U_p]$ is reduced to at most $n^{1-\varepsilon p}$, as required. Once, $S_p^+$ is known, we can use \Cref{lemma:basicMins} with $\ell=1$ to compute $U_p$ in $O(t\log\log\log n)$ rounds. It remains to show that we can compute a $(2,O(\log\log\log n)$-ruling set of $G^t[M_p]$.

    We first show that the graph $G^t[M_p]$ can be explicitly constructed. Because the maximum degree of $G^t[U_{p-1}]$ is at most $n^{1-\eps(p-1)}$ and every node in $U_{p-1}$ is added to $M_p$ with probability $c\cdot \log(n) / n^{1-\eps p}$, w.h.p., for each node in $U_{p-1}$ there are at most $c'\cdot n^\eps\log n$ marked nodes in the $t$-hop neighborhood in $G$, where $c'>0$ is a sufficiently large constant. We choose $\eps$ such that $c'\cdot n^\eps\log n\leq n^{\delta/2}$. Then, by using \Cref{lemma:basicMins} with $\ell=O(n^\eps\log n)$, every node in $U_{p - 1}$ can obtain the IDs of all the marked nodes in its $t$-hop neighborhood in $G$ in time $O(t)$. We can therefore explicitly construct the graph $G^t[M_p]$. We can now use the algorithm of Kothapalli, Pai, and Pemmaraju~\cite{KothapalliPP20} to compute a $(2, O(\log\log\log n))$-ruling set of $G^t[M_p]$ in $O(\log\log n \cdot\log\log\log n)$ rounds. The global memory needed by this algorithm is $O(m' + n^{1+\eps})$, where $m'$ is the number of edges of $G^t[M_p]$.
\end{proof}

We are not aware of a sublinear-memory $\poly(\log\log(n))$-time MPC algorithm to compute a ruling set of $G^t$ that is better than what \Cref{lemma:logloglogRS}. Unfortunately, this does not suffice to obtain a constant approximation in our facility location algorithm (and thus in our $k$-means algorithm). For this, we would need a $(2,O(1))$-ruling set of $G^t$ for some sufficiently large constant $t$. Since such an algorithm is not known, we have to instead settle for an algorithm that computes centers that only cover most of the nodes within constant distance.

To this end, we can employ a variant of Luby's classic parallel MIS algorithm~\cite{alon86,luby86}. In Theorem 2 of \cite{BalliuGKO23}, it is shown that if we do one step of Luby's algorithm and if, after adding the nodes of the step to the independent set, instead of only the direct neighbors, we remove the $2$-hop neighborhoods of those nodes, then a constant fraction of the graph gets removed with constant probability.

\paragraph{Adapted Luby Step} Consider the following process on an undirected graph $H=(V_H, E_H)$. Assume that every node $v\in V_H$ has an estimate $\hat{d}(v)$ of its degree in $H$ such that $\deg_H(v)\leq \hat{d}(v)\leq 2\deg_H(v)$. Each node $v\in V_H$ is added to the set $S_H$ independently with probability $c\log n/\hat{d}(v)$ for a sufficiently large constant $c>0$.

\begin{lemma}\label{lemma:lubystep}
    When applying the above step on a graph $H$, with high probability, a constant fraction of the nodes in $V_H$ are within distance at most $2$ of $S_H$. Furthermore, for every node $u$, there are at most $O(\log n)$ neighbors $v \in S_H$ with $\hat{d}(v) \geq \hat{d}(u)$.
\end{lemma}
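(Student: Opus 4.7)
The plan is to prove the two properties separately, both by combining a direct probabilistic calculation with a union bound over $V_H$.

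For property (2), I would use a standard concentration argument. Fix a node $u \in V_H$ and let $N^{\geq}_u := \{v \in N_H(u) : \hat{d}(v) \geq \hat{d}(u)\}$. Each such $v$ independently enters $S_H$ with probability at most $c\log n/\hat{d}(u)$, so the expected size of $N^{\geq}_u \cap S_H$ is at most
\[
|N^{\geq}_u| \cdot \frac{c\log n}{\hat{d}(u)} \;\leq\; \deg_H(u) \cdot \frac{c\log n}{\hat{d}(u)} \;\leq\; c\log n,
\]
where I use $\deg_H(u) \leq \hat{d}(u)$. A Chernoff bound then yields $|N^{\geq}_u \cap S_H| = O(\log n)$ with probability $1 - 1/\poly(n)$, and a union bound over all $u \in V_H$ completes this part.

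For property (1), the strategy is to follow the analysis of Theorem 2 in \cite{BalliuGKO23} while exploiting the fact that our marking probability $c\log n/\hat{d}(v)$ is larger than theirs by a factor of $\log n$. For each node $u$, define
\[
\sigma^{(2)}(u) := \sum_{v \in N^{2}_H(u) \cup \{u\}} \frac{1}{\hat{d}(v)}.
\]
Using $1 - x \leq e^{-x}$ and independence of the marking events, one directly gets
\[
\Pr[u \text{ is not within distance } 2 \text{ of } S_H] \;\leq\; \exp\!\left(-c\log n \cdot \sigma^{(2)}(u)\right).
\]
Thus, if I can exhibit a set $V' \subseteq V_H$ with $|V'| = \Omega(|V_H|)$ such that every $u \in V'$ satisfies $\sigma^{(2)}(u) \geq c^*$ for some constant $c^* > 0$, then choosing $c$ large enough makes each such $u$ covered with probability at least $1 - 1/n^{10}$, and a union bound over $V'$ yields the claim.

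The main obstacle is exhibiting such a witness set $V'$, and this is precisely what the BalliuGKO23 analysis provides. The idea is to orient every edge $\{u,v\}$ toward its higher-$\hat{d}$ endpoint (breaking ties by ID), then classify $u$ as \emph{heavy} if at least half its incident edges are outgoing and \emph{light} otherwise. A light $u$ already has $\Omega(\deg_H(u))$ neighbors $v$ with $\hat{d}(v) \leq \hat{d}(u) \leq 2\deg_H(u)$, so $\sigma^{(2)}(u) = \Omega(1)$ directly. For a heavy $u$, I would pick an outgoing neighbor $v^*$ and argue by a two-hop version of the same dichotomy that either $v^*$ is light (again giving $\Omega(1)$ mass around $u$ at distance $\leq 2$) or else $v^*$ has still higher-degree outgoing neighbors, and a charging argument shows that the heavy chain must terminate after a constant number of hops, contributing $\Omega(|V_H|)$ witness nodes in total when summed using $\hat{d}(v) \leq 2\deg_H(v)$. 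Invoking this structural claim from \cite{BalliuGKO23} yields the desired set $V'$, and hence property (1) holds with high probability.
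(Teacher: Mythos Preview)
Your argument for Property~(2) is correct and identical to the paper's. For Property~(1), your high-level plan---establish deterministically that an $\Omega(1)$-fraction of nodes satisfy $\sigma^{(2)}(u)\geq c^*$, then use the $c\log n$ amplification in the marking probability to cover each such node with probability $1-1/\poly(n)$, and finish by a union bound---is exactly the paper's strategy as well; the paper just proves the structural claim directly via a charging argument rather than citing~\cite{BalliuGKO23}.

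One caveat: your sketch of how that structural claim is obtained is not right. The light-node case is fine, and a heavy node with a light neighbor inherits $\sigma^{(2)}=\Omega(1)$ through that neighbor, but the assertion that ``the heavy chain must terminate after a constant number of hops'' does not hold---heavy chains can be arbitrarily long, and there exist nodes with $\sigma^{(2)}(u)=o(1)$, so one cannot hope to cover every heavy node this way. The actual mechanism (and the paper's) is a direct double-counting: each bad node $v$ distributes one unit of charge uniformly among its neighbors; whenever $w$ has a bad neighbor $v$, the containment $N_H(w)\subseteq N_2^+(v)$ forces $\sum_{v'\in N_H(w)}1/\hat d(v')<c^*$, so $w$ receives at most $2c^*$ total charge; comparing total charge sent and received yields $|B|\leq 2c^*\,|V_H|$. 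Since you ultimately invoke~\cite{BalliuGKO23} as a black box, this does not affect the correctness of your proof, only the accuracy of your informal explanation of it.
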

\begin{proof}
    We first prove that, with high probability, a constant fraction of the nodes in $V_H$ are within distance at most $2$ of $S_H$. To this end, we classify nodes as either \emph{good} or \emph{bad}. A node $v \in V_H$ is considered \emph{good} if the total marking probability over its inclusive 2-hop neighborhood, defined as $N_2^+(v) := \{ u \in V_H : d_H(u, v) \leq 2 \}$, satisfies $\sum_{u \in N_2^+(v)} p_u \geq 1/2$. Otherwise, $v$ is deemed \emph{bad}.

    Let $B \subseteq V_H$ denote the set of bad nodes. For each $v \in B$, we have by definition that $\sum_{u \in N_2^+(v)} p_u < 1/2$. To bound $|B|$, we distribute the \emph{badness} of each $v \in B$ to its immediate neighbors $u \in N_H(v)$ by assigning a charge of $1 / \hat{d}(v)$ to each such neighbor. Since $\sum_{u \in N_H(v)} 1 / \hat{d}(v) = 1$, each bad node contributes a total charge of 1.

    Consider now a node $u \in V_H$ that may receive charge from multiple bad neighbors. The total charge received by $u$ is bounded by the marking probabilities in its neighborhood, which by assumption is less than $1/2$. Hence, no node receives more than $1/2$ units of total charge.

    Since the total charge distributed by the bad nodes is equal to $|B|$, and no node can receive more than $1/2$ units of charge, it follows that $|B| \leq |V_H|/2$. Consequently, at least half of the nodes are good, which means that a constant fraction of nodes are within distance at most $2$ of $S_H$, as required.

    We now turn to the second part of the lemma and show that, with high probability, for any node $u \in V_H$, the number of neighbors $v \in S_H$ with $\hat{d}(v) \geq \hat{d}(u)$ is $O(\log n)$. For each neighbor $v \in N_H(u)$, the probability that $v \in S_H$ is $p_v = \frac{c \log n}{\hat{d}(v)}$. Since we are only considering neighbors with $\hat{d}(v) \geq \hat{d}(u)$, we have $p_v \leq \frac{c \log n}{\hat{d}(u)}$.

    The expected number of such neighbors is then at most $\deg_H(u) \cdot \frac{c \log n}{\hat{d}(u)} \leq c \log n$, using the fact that $\hat{d}(u) \geq \deg_H(u)$. Applying a Chernoff bound, we conclude that with high probability, the number of such neighbors does not exceed $O(\log n)$.
\end{proof}

\noindent The following lemma provides our main ruling set algorithm that we use to compute the center facilities in \Cref{alg:fl-overview}.

\begin{lemma}\label{lemma:weighted2rulingset}
    Let $G=(V,E)$ be a $n$-node graph with $m=|E|$ edges and node weights $w(v)$ such that $w(v)\geq 1$ and $w(v)\leq w_{\max}=\poly(n)$. Further, assume that we are given a parameter $\eps\geq 1/\poly(\log n)$ and an integer $t\geq 1$. There exists an $O\big(t\cdot \log\log n\big)$-round sublinear-space MPC algorithm that computes a set of nodes $S$ with the following properties.
    \begin{enumerate}
        \item[(I)] The nodes in $S$ form an independent set of $G^t$
        \item[(II)] Let $U:=\set{u\in V : \exists v\in S\text{ with }d_G(u,v)\leq 5t}$, $W_V:=\sum_{v\in V} w(v)$, and $W_U:=\sum_{v\in U} w(v)$. W.h.p., we have $W_U\geq (1-\eps)\cdot W_V$.
        \item[(III)] For all $u\in V$, there exists a $v\in S$ for which $d_G(u,v)\leq O(t\log\log\log n)$.
    \end{enumerate}
    The algorithm requires global memory $O((m+n)\cdot n^\eps)$ for a constant $\eps>0$ that can be chosen arbitrarily small.
\end{lemma}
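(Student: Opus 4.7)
The plan is to build $S$ in two phases. Phase~1 performs $O(\log\log n)$ iterations of the adapted Luby step from \Cref{lemma:lubystep} applied (implicitly) to the power graph $G^t$; in each iteration, a constant fraction of the currently uncovered weight becomes covered within a constant number of $G^t$-hops of newly added independent-set nodes. Phase~2 then invokes \Cref{lemma:logloglogRS} on the residual uncovered subgraph to enforce condition~(III).

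For Phase~1, I would bucket the nodes into $O(\log w_{\max}) = O(\log n)$ geometric weight classes $V_i = \{v : w(v) \in [2^i, 2^{i+1})\}$ and process the classes in parallel. In each iteration a node $v$ first computes an approximate $G^t$-degree $\hat d(v)$ using the aggregation primitives of \Cref{lemma:basicMins} and \Cref{lemma:basicSums} in $O(t)$ MPC rounds, and then applies \Cref{lemma:lubystep} to produce a marked set $S_H$. Since $S_H$ need not be independent in $G^t$, I extract from it a maximal independent subset $S_H' \subseteq S_H$ of $G^t$: by the second assertion of \Cref{lemma:lubystep}, the conflict graph on $S_H$ inherited from $G^t$ has degree $O(\log n)$ w.h.p., so an MIS on this bounded-degree overlay can be computed in $\polyloglog n$ additional rounds. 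Every node that was $2$-hop good with respect to $S_H$ then lies within $3$ $G^t$-hops of $S_H'$, i.e., within $3t \leq 5t$ hops of $S_H'$ in $G$. Before the next iteration I remove every node at $G^t$-distance at most~$1$ from $S_H'$ from further consideration, which keeps the growing set $S_1$ of Phase-1 additions independent in $G^t$. After $O(\log\log n + \log(1/\eps))$ iterations, the surviving uncovered weight in class $V_i$ drops below $\eps \cdot W_{V_i}/\log n$, and summing over classes gives condition~(II).

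For Phase~2, I would invoke \Cref{lemma:logloglogRS} on the subgraph $G^t[U^c]$ induced by the still-uncovered nodes to obtain a $(2, O(\log\log\log n))$-ruling set $S''$ of $G^t[U^c]$. Pruning from $S''$ every node at $G^t$-distance~$1$ from $S_1$ keeps $S := S_1 \cup S''$ independent in $G^t$; any pruned node lies within $t$ hops of $S$ in $G$, and every retained node covers its $O(\log\log\log n)$-hop ball in $G^t[U^c]$, yielding $O(t \log\log\log n)$-hop coverage in $G$ as required by condition~(III). The round costs of the two phases sum to $O(t \log\log n)$, and total memory stays within $O((m+n) n^\eps)$ because every power-graph operation is implemented through \Cref{lemma:basicMins} and \Cref{lemma:basicSums} rather than materialising $G^t$.

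The main obstacle is that \Cref{lemma:lubystep} only guarantees coverage by the raw marked set $S_H$, while condition~(I) of the target lemma requires independence in $G^t$. The second assertion of \Cref{lemma:lubystep} — that every node has at most $O(\log n)$ marked $G^t$-neighbors of weakly larger $\hat d$ — is the crucial structural tool: it both bounds the degree of the conflict graph used to extract $S_H'$ and controls how coverage propagates from $S_H$ to $S_H'$. Combined with running the Luby process separately on each weight class so that weighted progress mirrors the cardinality-based progress in \Cref{lemma:lubystep}, this addresses the weighted aspect of conclusion~(II).
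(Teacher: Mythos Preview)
Your proposal has a genuine gap in Phase~1. You claim that the conflict graph $G^t[S_H]$ has maximum degree $O(\log n)$, citing the second assertion of \Cref{lemma:lubystep}. But that assertion only bounds, for each node $u$, the number of $S_H$-neighbors $v$ with $\hat d(v)\geq \hat d(u)$; it says nothing about neighbors with smaller $\hat d$. Concretely, take $G^t$ to be a $\sqrt{n}$-clique in which every clique vertex additionally has $\sqrt{n}$ private leaves. Leaves have $\hat d\approx 1$ and are marked with probability essentially~$1$; clique vertices have $\hat d\approx 2\sqrt{n}$ and $\Theta(\log n)$ of them get marked. Any marked clique vertex then has $\Theta(\sqrt{n})$ marked leaf-neighbors in $G^t$, so $G^t[S_H]$ has maximum degree $\Theta(\sqrt{n})$, and your MIS extraction step is no longer fast. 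The paper avoids this by running the Luby step on $H_\ell = G^{2t}[V_\ell]$ rather than on $G^t$: then the $G^t$-neighbors of any node form a \emph{clique} in $H_\ell$, and in a clique the out-degree bound from \Cref{lemma:lubystep} translates into a bound on the clique's size, giving $|N^t(u)\cap S_{H_\ell}|=O(\log n)$ and hence max degree $O(\log^2 n)$ for $G^t[\bigcup_\ell S_{H_\ell}]$.

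There is also a smaller structural issue. You extract an MIS in every iteration and then remove only the nodes at $G^t$-distance at most~$1$ from it. This preserves independence, but it does not by itself give the geometric decay of uncovered weight: the ``good'' nodes guaranteed by \Cref{lemma:lubystep} lie at $G^t$-distance up to~$2$ from $S_H$ (hence up to~$3$ from your $S_H'$) and may survive into the next iteration, so the constant-fraction progress is not tied to the set you actually shrink. The paper sidesteps this by accumulating the raw marked sets over all $O(\log(1/\eps))$ iterations---removing the distance-$4t$ covered nodes for progress---and only extracting a single MIS of $G^t[S_H]$ at the very end.
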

\begin{proof}
     As a first step, for every $v\in V$, we define $w'(v):=2^{\lfloor \log_2 w(v)\rfloor}$, i.e., we round each weight $w(v)$ down to the next integer power of $2$. In this way, we only have $O(\log n)$ different weights $w'(v)\in\set{2^0,2^1,\dots,2^{h}}$, where $h=\lfloor\log_2 w_{\max}\rfloor=O(\log n)$. For each of the possible rounded node weights $2^{\ell}$, let $V_{\ell}$ be the set of nodes with $w'(v)=2^{\ell}$. 
    
    We now focus on one value of $k$. We define $H_{\ell}:=G^{2t}[V_{\ell}]$ to be the subgraph of $G^{2t}$ induced by the nodes $V_{\ell}$ of rounded weight $w'(v)=2^{\ell}$. We first show that one \emph{Adapted Luby Step} as described above can be implemented on $H_{\ell}$ in $O(t)$ rounds in the MPC model. By \Cref{lemma:basicSums} each node $v\in V_{\ell}$ can compute an estimate $\hat{d}(v)$ of its degree $\deg_{H_{\ell}}(v)$ in $H_{\ell}$ such that $\deg_{H_{\ell}}(v)\leq \hat{d}(v)\leq 2\deg_{H_{\ell}}(v)$. Each node $v\in V_{\ell}$ can then be independently added to the set $S_{H_{\ell}}$ with probability $1/\hat{d}(v)$. Finally, one can use \Cref{lemma:basicMins} to determine the set of nodes $U_{\ell}\subseteq V_{\ell}$ such that for $u\in U_{\ell}$, there is a node $v\in S_{H_{\ell}}$ within distance $4t$ in $G$. Note that $U_{\ell}$ includes all nodes $u\in V_{\ell}$ that have a node $v\in S_{H_{\ell}}$ within distance $2$ in $H_{\ell}$. By \Cref{lemma:lubystep}, we therefore know that w.h.p., $U_{\ell}$ contains a constant fraction of the nodes in $V_{\ell}$. If we think of the nodes in $U_{\ell}$ as being removed from the graph, we can repeat this process on the remaining nodes. If we repeat this $c\cdot \log (1/ \eps)$ times for a sufficiently large constant $c$, the number of remaining nodes is at most $\eps/2\cdot |V_{\ell}|$. The final set $S_{H_{\ell}}$ is defined as the union of the sets that we construct in those $O(\log 1/\eps)$ iterations.

    We next move our attention to the graph $G^t$. For a node $u\in V$, let $N_{\ell}^t(u):= S_{H_{\ell}}\cap N^t(u)$ be the set of $S_{H_{\ell}}$-neighbors of $u$ in $G^t$. Note that the nodes in $N_{\ell}^t(u)$ are forming a clique in the graph $H_{\ell}=G^{2t}[V_{\ell}]$. By \Cref{lemma:lubystep}, for every node $v\in S_{H_{\ell}}$, w.h.p., there are at most $O(\log n)$ nodes $w\in S_{H_{\ell}}$ such that $v$ and $w$ are neighbors in $H_{\ell}$ and $\hat{d}(w)\geq \hat{d}(v)$. The edges of the clique induced by the nodes $N_{\ell}^t(u)$ in $H_{\ell}$ can therefore be oriented such that each node $v\in N_{\ell}^t(u)$ has at most $O(\log n)$ outneighbors $w\in N_{\ell}^t(u)$. This directly implies that $|N_{\ell}^t(u)|=O(\log n)$. For every node $u\in V$, the number of $S_{H_{\ell}}$-neighbors in $G^t$ is therefore bounded by $O(\log n)$, w.h.p.
    
    We next define $S_H := \bigcup_{{\ell}=0}^h S_{H_{\ell}}$. Note that since the sets $S_{H_{\ell}}$ can be computed in parallel (by increasing the global space by an $O(\log n)$-factor), the set $S_H$ can be computed in $O(\log\log n)$ rounds. We compute a set $S'$ that satisfies conditions (I) and (II) of the lemma by computing a maximal independent set (MIS) on the induced subgraph $G^t[S_H]$ of $G^t$. Note that when doing this, every node $u\in V$ that is within distance at most $4t$ from a node $v\in S_H$ in $G$ is within distance at most $5t$ of a node in $S'$. Because in every set $V_{\ell}$, the number of nodes that is not within distance distance $4t$ of a node in $S_{H_{\ell}}$ (and thus within distance $5t$ of a node in $S'$) is at most $\eps/2\cdot|V_{\ell}|$, overall, the set of nodes that is not within distance distance $5t$ of a node in $S'$ has rounded weight at most an $\eps/2$-fraction of the total rounded weight of all the nodes in $V$. For the original weights, we therefore have $W_U\geq (1-\eps)\cdot W_V$ as required.
    
    We need to show that given $S_H$, the set $S'$ can be computed in  $O\big(t\cdot(\log\log n \cdot\log\log\log n)\big)$ rounds. We observed above that for every $u\in V$, the number of $S_{H_{\ell}}$-neighbors in $G^t$ is at most $O(\log n)$. Therefore, for every $u\in V$, the number of $S_{H}$-neighbors in $G^t$ is at most $O({\ell}\log n)=O(\log^2 n)$. The graph $G^t[S_H]$ therefore has maximum degree $O(\log^2 n)$. By using \Cref{lemma:basicMins}, the edges of the graph $G^t[S_H]$ can be computed explicitly in $O(t)$ rounds. We can then compute $S'$ by using the MIS algorithm of \cite{GhaffariU19}. For graphs of maximum degree $\Delta$, the algorithm of \cite{GhaffariU19} has a round complexity of $O(\sqrt{\log\Delta}\cdot\log\log\Delta + \sqrt{\log\log n})$. We can therefore compute the set $S'$ from the set $S_H$ in time $O(t+\sqrt{\log\log n}\cdot\log\log\log n)$.
    
    The set $S'$ only satisfies conditions (I) and (II) of the lemma statement. To also satisfy condition (III), we have to add some nodes to $S'$ to make sure that every node that is not covered by $S'$ is within distance $O(t\log\log\log n)$ of the final set $S$. Let $U\subseteq V$ be the set of nodes that are not within distance $O(t\log\log\log n)$ of $S'$. We can apply \Cref{lemma:logloglogRS} to compute a set $S^+\subseteq U$ that guarantees this and we can then set $S:=S' \cup S^+$. The time for computing $S^+$ is $O(t\log\log\log n + \log\log n\cdot\log\log\log n)$. The overall time for computing the set $S$ is therefore also $O(t\log\log\log n + \log\log n\cdot\log\log\log n)$ and we need $O((m+n)\cdot n^\eps)$ memory for some (arbitrarily small) constant $\eps>0$ in order to compute the set $S^+$.
\end{proof}

\subsection{Implementation of Our Algorithm in the MPC Model}
\label{sec:MPC_impl}

Before discussing the implementation of our facility location and $k$-means algorithm in the MPC model, we give a lemma that provides a standard building block that we use in several of the steps.

\begin{lemma}\label{lemma:groupwiseaggregation}
    Assume that we have $n$ elements, where each element has some label and some value. There is an $O(1)$-round fully scalable MPC algorithm to group the elements by label and aggregate (e.g., compute the sum/min/max of the values) over each group. Each machine that initially stores an element of some given label in the end learns the aggregate value for that label.
\end{lemma}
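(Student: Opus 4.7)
The plan is to reduce the task to MPC sorting plus a standard tree-based aggregation within each label class. I will first sort all $n$ (label, value) pairs by label using the $O(1)$-round, linear-space MPC sorting algorithm of Goodrich~\cite{goodrich2011sorting}. After sorting, elements with the same label occupy a contiguous range of machines, so each machine can identify, using only the first and last record on itself and a constant amount of communication with its left and right neighbor, which labels begin, continue, or end on it.

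Next I would perform the per-group aggregation. For any label whose records all lie on a single machine, the aggregate is computed locally in one round. For a label whose records span multiple consecutive machines $M_1,\dots,M_s$, I would build an $n^{\sigma/2}$-ary aggregation tree on those machines, analogous to the construction used in the proof of \Cref{lemma:basicMins}. In each round every internal machine receives a partial aggregate from up to $n^{\sigma/2}$ children and combines them into a single value for its parent; since each machine has $n^\sigma$ bits of memory, receiving $n^{\sigma/2}$ partial aggregates (each of $O(\log n)$ bits) fits within the per-round bandwidth. The tree has height $O(1/\sigma) = O(1)$, so the root of each label's tree learns the global aggregate in a constant number of rounds. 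Finally, the root broadcasts the aggregate back down the same tree in $O(1)$ rounds so that every machine originally holding a record with that label learns the aggregate value.

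Correctness is immediate from the associativity of the aggregation operations considered (sum, min, max), which allows the reduction to be carried out in any tree order. The round complexity is dominated by the $O(1)$ sort plus $O(1/\sigma) = O(1)$ rounds for the up- and down-phases of the aggregation trees, so the total is $O(1)$. Global memory remains linear in the input, and each machine's memory usage stays within $O(n^\sigma)$ bits, giving full scalability. The only mildly delicate point is ensuring that the aggregation trees for different labels do not interfere, which is handled by the fact that after sorting each machine participates in the tree of at most two labels (the one ending on it and the one beginning on it), so all trees can be executed in parallel using disjoint bandwidth; this is the step I expect to require the most care to state precisely, but it introduces no new technical obstacle.
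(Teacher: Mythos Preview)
Your proposal is correct and follows essentially the same approach as the paper: sort by label using \cite{goodrich2011sorting}, then aggregate via a constant-height $n^{\Theta(\sigma)}$-ary tree and broadcast the result back down. The paper phrases the second step as a single global aggregation tree over all machines rather than one tree per label, which sidesteps your interference concern entirely, but this is a cosmetic difference and your per-label variant works for the reason you give.
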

\begin{proof}
    By using the algorithm of \cite{goodrich2011sorting}, we can first sort the elements according to their label. Using one global aggregation tree, it is then straightforward to aggregate over the elements of each group. If each machine has a local memory of $n^\sigma$ bits for some constant $\sigma$, one can build an aggregation tree of fan-out $n^{\Theta{\sigma}}$ and therefore of $O(1)$ height. The result of the aggregation can be distributed to the machines that initially hold the elements by using the same tree (communicating in the reverse order).
\end{proof}

We first focus on the implementation of \Cref{alg:fl-overview}, i.e., of the facility location algorithm. Let us therefore assume that we are given a set of facilities $\F\subset\mathbb{R}^d$, a set of clients $\C\subset\mathbb{R}^d$, and a fixed opening cost $\lambda>0$ per facility. As a first step, we take the set of points $\F\cup \C\subset \mathbb{R}^d$ and apply \Cref{lemma:graphapprox} to obtain a graph $G=(\F\cup\C, E)$ with edge weights $w(e)>0$ such that $G$ satisfies the conditions of \Cref{lemma:graphapprox}. In the following, we will use $\Gamma$ as the constant that is guaranteed by \Cref{lemma:graphapprox}.

In order to implement \Cref{alg:fl-overview} in the MPC model, we have to approximate the size of balls around points in $\mathbb{R}^d$. We first introduce some notation. As already defined in \Cref{sec:prelim}, we use $B_X(x, r):=\set{y\in X\,:\,\dist(x,y)\leq r}$
to denote the ball of radius $r$ around $x\in \R^d$, restricted to the points in $X$. We further define corresponding neighborhoods in the graph $G$ that we use to approximate $(P,\dist)$. For $x, y\in P$, we use $d_G^{(2)}(x, y)$ to denote the length of the shortest path connecting $x$ and $y$ and consisting of at most $2$ hops. Recall that by the definition of $G$ (cf.\ \Cref{lemma:graphapprox}), for any two point $x,y\in X$, we have $d_G^{(2)}(x,y)\leq \dist(x,y)$. We define 
\[
B_X^+(x,r) := \set{y \in X\,:\, d_G^{(2)}(x,y)\leq r}.
\]
Note that by \Cref{lemma:graphapprox}, for all $x$, $r$, and $X$, we immediately get
\begin{equation}\label{eq:ballcontainment}
    B_X^+\left(x, \frac{r}{\Gamma}\right) \subseteq B_X(x, r) \subseteq B_X^+(x, r).
\end{equation}

We now have everything that we need to describe the implementations of the individual steps of \Cref{alg:fl-overview}.

\begin{lemma}\label{lemma:fl-stepI}
    Step I of \Cref{alg:fl-overview} can be implemented in $O(1)$ MPC rounds with $C_R=9\Gamma$.
\end{lemma}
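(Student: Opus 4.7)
The plan is to use the spanner graph $G$ from \Cref{lemma:graphapprox} to approximate the squared Euclidean cost by $a_{cf}:=\bigl(d_G^{(2)}(c,f)\bigr)^2$, and to find $\hat r_f$ via a geometric-scale search on an approximation of $\sigma_f(r):=\sum_c[r^2-\cost(c,f)]^+$. Because $a_{cf}\leq \cost(c,f)\leq \Gamma^2 a_{cf}$ by the properties of $G$, the purely exact variant---the largest $r$ with $\sum_c[r^2-a_{cf}]^+\leq \lambda$---already satisfies $r_f/\Gamma\leq \hat r_f\leq r_f$, so the factor $9\Gamma$ in $C_R$ leaves room to absorb discretization and approximate-counting losses.

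To implement this in $O(1)$ MPC rounds, I would consider the geometric scales $r_j:=2^j$ for $j=0,1,\dots,O(\log n)$ and, for each $j$ in parallel, form the subgraph $G_j\subseteq G$ of edges of weight at most $r_j/2$; each $G_j$ is obtained in $O(1)$ rounds by filtering. In $G_j$ every topological two-hop path has total weight at most $r_j$, so
\[
B^+_\C(f,r_j/2)\ \subseteq\ N_{G_j}^2(f)\cap\C\ \subseteq\ B^+_\C(f,r_j).
\]
Applying \Cref{lemma:basicSums} on $G_j$ with $t=2$ and inputs $x_c=\mathbbm{1}[c\in\C]$ yields, for every facility $f$ simultaneously, a $(1+\eps')$-approximation $\widetilde N_f(r_j)$ of $|N_{G_j}^2(f)\cap\C|$. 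Telescoping across scales, $\sum_{c\in B^+_\C(f,r_j)}a_{cf}$ is approximated within a constant factor by $\sum_{j'\leq j}r_{j'}^2\bigl(\widetilde N_f(r_{j'})-\widetilde N_f(r_{j'-1})\bigr)$, so that $\widetilde\sigma_f(r_j):=\sum_c[r_j^2-a_{cf}]^+$ is also estimated to within a constant factor. Running all $O(\log n)$ scales and all facilities in parallel inflates the global memory only by a polylogarithmic factor, which is absorbed by slightly shrinking $\eps$.

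Finally, I would set $\hat r_f$ to the largest $r_j$ for which the computed approximation of $\widetilde\sigma_f(r_j)$ is below a carefully chosen constant multiple of $\lambda$. This threshold is selected so that, after undoing the multiplicative approximation factors, $\sum_c[r_j^2-\cost(c,f)]^+\leq \lambda$, yielding $\hat r_f\leq r_f$. Conversely, plugging $r=r_f/(C\Gamma)$ into $\widetilde\sigma_f$ and using $a_{cf}\geq \cost(c,f)/\Gamma^2$ gives $\widetilde\sigma_f(r)\leq \lambda/\Gamma^2$, and after rounding $r$ down to a power of $2$ one obtains $\hat r_f\geq r_f/(9\Gamma)$ for an appropriate choice of $C$. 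The step I expect to be most delicate is the bookkeeping of constants: the ball-approximation factor $\Gamma$, the factor $2$ from geometric discretization, the $(1+\eps')$ from \Cref{lemma:basicSums}, and the constant loss from telescoping the cost sum must all compose to at most $9\Gamma$; since $\Gamma$ can be taken arbitrarily large in \Cref{lemma:graphapprox}, this is comfortably achievable.
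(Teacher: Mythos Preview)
Your plan has a real gap at the telescoping step. You assert that once $S_j:=\sum_{c\in B^+_\C(f,r_j)}a_{cf}$ and $N_j:=|B^+_\C(f,r_j)|$ are known to within constant factors, ``$\widetilde\sigma_f(r_j):=\sum_c[r_j^2-a_{cf}]^+$ is also estimated to within a constant factor.'' But $\widetilde\sigma_f(r_j)=r_j^2N_j-S_j$ is a \emph{difference}, and multiplicative approximations of the two terms do not compose. If most clients have $a_{cf}$ just below $r_j^2$, then $r_j^2N_j$ and $S_j$ are almost equal while $\widetilde\sigma_f(r_j)$ is arbitrarily small compared to either term, so any relative error on $N_j$ or $S_j$ swamps $\widetilde\sigma_f(r_j)$. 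Hence no finite constant bounds the multiplicative error of your estimate, and enlarging $\Gamma$ cannot absorb it. A secondary issue already appears one line earlier: the increments $\widetilde N_f(r_{j'})-\widetilde N_f(r_{j'-1})$ need not be nonnegative, since $\widetilde N_f(r_{j'})$ is only a $(1+\eps')$-estimate of $|N^2_{G_{j'}}(f)\cap\C|$, which itself is only pinned between $|B^+_\C(f,r_{j'-1})|$ and $|B^+_\C(f,r_{j'})|$.

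The paper avoids this entirely by never trying to approximate $\phi$ (or your $\widetilde\sigma_f$) multiplicatively. It uses the surrogate $\psi(r):=|B^+_\C(f,r)|\cdot r^2$, a product that is trivially approximated by a single application of \Cref{lemma:basicSums} per scale. Two one-line inequalities do the work: $\phi(r)\le\psi(r)$ (drop the subtracted costs), and $\psi\bigl(r/(2\Gamma)\bigr)\le\phi(r)$ (restrict $\phi(r)$ to clients in the half-radius ball $B_\C(f,r/2)\supseteq B^+_\C(f,r/(2\Gamma))$, where each term is at least $(r/2)^2$). Thus the threshold where $\psi$ crosses $\lambda$ lies within a factor $3\Gamma$ of $r_f$, and the geometric discretization together with the $(1+\eps')$-counting contributes the remaining factor $3$, yielding $C_R=9\Gamma$. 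Replacing your telescoping by this surrogate fixes the argument and actually simplifies it.
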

\begin{proof}
    For convenience, we define a function $\phi(r)$ as
    \[
    \phi(r) := \sum_{c\in\C}\left[r^2 - \cost(c,f)\right]^+ = \sum_{c\in B_{\C}(f, r)} (r^2 - \cost(c,f)).
    \]
    Recall that for each facility $f\in \F$, $r_f$ is defined as
    $r_f := \min_{r\geq 0}\set{\sum_{c\in\C}\left[r^2 - \cost(c,f)\right]^+\geq \lambda}$ and thus $r_f$ is defined as the smallest $r$ for which $\phi(r)\geq\lambda$. Our goal is to compute an estimate $\hat{r}_f$ such that $r_f/C_R \leq \hat{r}_f\leq r_f$ (where $C_R=9\Gamma$).

    First note that for every $r\geq 0$, we have
    \begin{equation}\label{eq:phiupper}
        \phi(r) = \sum_{c\in B_{\C}(f, r)} (r^2 - \cost(c,f)) 
        \ \leq\  |B_{\C}(f,r)|\cdot r^2\ \leq\ |B_{\C}^+(f,r)|\cdot r^2.
    \end{equation}
    The last inequality follows from \Cref{eq:ballcontainment}. We define $r_f'$ to be the supremum over the values $r$ for which $|B_{\C}^+(f,r)|\cdot r^2\leq \lambda$. Note that by \Cref{eq:phiupper} this implies that every $\xi>0$, we have $\phi(r_f'-\xi)<\lambda$ and thus $r_f'\leq r_f$.

    For any $r\geq 0$, we can further lower bound $\phi(r)$ as follows
    \begin{equation}\label{eq:philower}
        \phi(r) \geq \sum_{c\in B_{\C}(f, r/2)} (r^2 - \cost(c,f))
        \ \geq\ 
        |B_{\C}(f, r/2)|\cdot\left(\frac{r}{2}\right)^2
        \ \geq\ 
        |B_{\C}^+(f, r/(2\Gamma))|\cdot \left(\frac{r}{2\Gamma}\right)^2.
    \end{equation} The second inequality follows because for all $c\in B_{\C}(f, r/2)$, we have $\cost(c,f)\leq (r/2)^2$ and because $r^2-(r/2)^2>(r/2)^2$. The last inequality follows from \Cref{eq:ballcontainment} and from the fact that $\Gamma\geq 1$. The definition of $r_f'$ together with the fact that $\phi(2r_f/3)<\lambda$ and \Cref{eq:philower} imply that $r_f'\geq r_f/(3\Gamma)$. We therefore know that $r_f/(3\Gamma)\leq r_f'\leq r_f$. The radius $r_f'$ thus satisfies the requirements for the estimate $\hat{r}_f$ (with $C_R=3\Gamma$). We can however not efficiently compute $r_f'$ exactly for every facility $f$.

    We can however compute a value $\hat{r}_f$ for which $r_f'/3\leq\hat{r}_f \leq r_f'$ as follows. For every integer $\ell\geq 0$ and each facility $i$, we define $b_{\ell}^+(f):=|B_{\C}^+(f, 2^\ell)|$ to be the size of the number of clients $c$ within $2$-hop distance at most $2^\ell$ in $G$. For a fixed $\ell$ and a constant $\eps>0$, we can use \Cref{lemma:basicSums} to compute a value $\tilde{b}_{\ell}^+(f)$ with $b_{\ell}^+(f)\leq \tilde{b}_{\ell}^+(f)\leq 3/2\cdot b_{\ell}^+(f)$ for every facility $f$ in $O(1)$ rounds in the sublinear-memory MPC model. And since we assume that the maximum distance between any two points is polynomial in $n$, by increasing the global memory by a factor $O(\log n)$, we can compute $\tilde{b}_{\ell}^+(f)$ for all $f$ and all integers ${\ell}$ in parallel in $O(1)$ sublinear-space MPC rounds. We define
    \[
    \hat{{\ell}}_f := \max_{{\ell}\in \mathbb{N}_0}\set{\tilde{b}_{\ell}^+(f)\cdot 4^{\ell} \leq \lambda}
    \quad\text{and}\quad
    \hat{r}_f := 2^{\hat{{\ell}}_f}.
    \]
    Note that $\hat{r}_f\leq r_f'$ because for any ${\ell}>\log_2(r_f')$, we have 
    \[
    \tilde{b}_{\ell}^+(f)\cdot 4^{\ell} \geq b_{\ell}^+(f)\cdot 4^{\ell} = |B_{\C}(f, 2^{\ell})|\cdot 4^{\ell} >\lambda.
    \]
    The last inequality follows because $2^{\ell}>r_f'$ and we defined $r_f'$ as the supremum over all $r$ for which $|B_{\C}(f, r)|\cdot r^2 \leq\lambda$ To see that $\hat{r}_f \geq r_f' / 3$, consider the largest $\ell$ for which $\tilde{b}_{\ell}^+(f) \cdot 4^{\ell} \leq \lambda$. Since $\tilde{b}_{\ell}^+(f) \geq b_{\ell}^+(f)$, we know that
    \[
    b_{\ell}^+(f) \cdot 4^{\ell} \leq \lambda.
    \]
    By the definition of $r_f'$, this implies $r_f' \leq 2^{\ell}$. 

    On the other hand, from the definition of $\tilde{b}_{\ell}^+(f)$, we also have:
    \[
    \tilde{b}_{\ell+1}^+(f) \cdot 4^{\ell+1} > \lambda.
    \]
    Using the fact that $\tilde{b}_{\ell+1}^+(f) \leq \frac{3}{2} b_{\ell+1}^+(f)$, we get:
    \[
    \frac{3}{2} b_{\ell+1}^+(f) \cdot 4^{\ell+1} > \lambda \quad \Rightarrow \quad b_{\ell+1}^+(f) \cdot 4^{\ell+1} > \frac{2}{3} \lambda.
    \]
    By the definition of $r_f'$, this implies that $r_f' \geq 2^{\ell+1} / 3$. Since $\hat{r}_f = 2^{\ell}$, we conclude:
    \[
    \hat{r}_f \geq \frac{r_f'}{3}.
    \]
    Thus, we obtain:
    \[
    \frac{r_f'}{3} \leq \hat{r}_f \leq r_f'.
    \]
    Combining this with our previous bound $r_f / (3\Gamma) \leq r_f' \leq r_f$, we conclude:
    \[
    \frac{r_f}{9\Gamma} \leq \hat{r}_f \leq r_f.
    \]
    Hence, $\hat{r}_f$ satisfies the required approximation guarantee with $C_R = 9\Gamma$.

    Finally, since $\tilde{b}_{\ell}^+(f)$ values can be computed in $O(1)$ rounds and the selection of $\hat{\ell}_i$ requires only a maximum search over $O(\log n)$ values, the entire procedure runs in $O(1)$ MPC rounds.
\end{proof}

Now, we move towards the second step of \Cref{alg:fl-overview} and prove that it can be implemented in $O(1)$ MPC rounds. Specifically, we show that the computation of the initial dual values $\alpha_{c,0}$ for each client $j$ can be performed efficiently in parallel.

\begin{lemma}\label{lemma:fl-stepII}
    Step II of \Cref{alg:fl-overview} can be implemented in $O(1)$ MPC rounds with constants $C_D^+ = 8\Gamma^4$ and $C_D^- = 2\Gamma^2$.
\end{lemma}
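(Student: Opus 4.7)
The plan is to compute each client's $\alpha_{c,0}$ by a parallel geometric scan over $O(\log n)$ radial scales, building on the radii $\hat r_f$ from \Cref{lemma:fl-stepI} and the LSH graph $G$ of \Cref{lemma:graphapprox}. The starting point is the identity $\alpha_c^* = (r_c^*)^2$ with $r_c^* = \min\{r\geq 0 : \exists f\in\F,\ r_f\leq r \text{ and } \dist(c,f)\leq r\}$. Since every nonzero distance lies in $[1,\poly(n)]$, only $O(\log n)$ candidate radii of the form $r = 2^i$ need to be considered.

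For each scale $i$, let $G_i\subseteq G$ be the subgraph of edges of weight at most $2^i$, so that $d_G^{(2)}(c,f)\leq 2^i$ iff $f$ lies within $2$ hops of $c$ in $G_i$. I would apply \Cref{lemma:basicMins} with parameters $t=2$ and $\ell=1$ on $G_i$, with each facility $f$ contributing the value $\hat r_f$ and each client contributing $+\infty$; after $O(1)$ rounds, every client $c$ knows the minimum $\hat r_f$ in its 2-hop $G_i$-neighborhood and can decide whether some facility satisfies both $\hat r_f\leq 2^i$ and $d_G^{(2)}(c,f)\leq 2^i$. Running all $O(\log n)$ scales in parallel inflates global memory by an $O(\log n)$ factor, still within the $O(n^{1+\eps})$ budget of \Cref{lemma:graphapprox}. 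Denoting by $i^*(c)$ the smallest successful scale, I then output $\alpha_{c,0} := 2^{2i^*(c)}/K$ for a constant $K$ tuned below.

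For the approximation bound, the lower direction $\alpha_c^* = \Omega(2^{2i^*(c)})$ follows because the true minimizer $f^*$ of $\alpha_c^*$ satisfies $\hat r_{f^*}\leq r_{f^*}\leq \sqrt{\alpha_c^*}$ and $d_G^{(2)}(c,f^*)\leq \dist(c,f^*)\leq \sqrt{\alpha_c^*}$, so $f^*$ itself witnesses the test at scale $\lceil\log_2\sqrt{\alpha_c^*}\rceil$. For the upper direction, the scale-$i^*(c)$ witness $\hat f$ gives $r_{\hat f}\leq C_R\cdot 2^{i^*(c)}$ via \Cref{lemma:fl-stepI} and $\dist(c,\hat f)\leq \Gamma\cdot 2^{i^*(c)}$ via \Cref{lemma:graphapprox}, whence $\alpha_c^*\leq \max\{r_{\hat f}^2,\dist^2(c,\hat f)\} = O(\Gamma^2)\cdot 2^{2i^*(c)}$. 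Choosing $K$ within the range dictated by $C_D^-=2\Gamma^2$ and $C_D^+=8\Gamma^4$ then places $\alpha_{c,0}$ inside $[\alpha_c^*/C_D^+,\alpha_c^*/C_D^-]$.

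The main obstacle is calibrating the constants. A naive bookkeeping loses a factor $C_R^2 = 81\Gamma^2$ from Step~I, which does not immediately fit into the allotted window $C_D^+/C_D^- = 4\Gamma^2$. I expect closing this gap requires a finer discretization (e.g.\ scales $2^{i/s}$ for a suitable constant $s$) and/or two decoupled aggregations—one over $\hat r_f$ and one over $d_G^{(2)}(c,f)$—so that the $C_R$-error from Step~I and the $\Gamma$-error from $G$ are tracked independently rather than compounded through the outer $\max$. Once the constants are tuned, the rest of the argument consists of routine applications of \Cref{lemma:basicMins} together with the parallel-scheduling trick already used in the proof of \Cref{lemma:fl-stepI}, and the resulting round complexity is $O(1)$.
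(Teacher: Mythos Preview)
Your approach is essentially the same as the paper's: a parallel multi-scale search over $O(\log n)$ geometrically spaced radii, using \Cref{lemma:basicMins} on the LSH graph to aggregate the minimum facility radius over approximate balls, then taking the smallest scale at which both the radius and distance constraints are met. The paper writes this as $\alpha_c^+=\min_\ell\max\{\rho^+(\ell)^2,4^\ell\}$ with $\rho^+(\ell)=\min_{f\in B^+_\F(c,2^\ell)} r_f$, which is the same object you compute.

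The constant obstacle you flag is genuine, and your diagnosis is correct: any estimate built from $\hat r_f$ incurs a multiplicative $C_R^2=81\Gamma^2$ loss, which does not fit into the window $C_D^+/C_D^-=4\Gamma^2$. Your proposed remedies do not close this gap --- finer radial discretization only trims the factor~$4$ coming from the $2^i$-grid, and decoupling the two aggregations still leaves the $C_R^2$ term intact because it enters through the $\max$. The paper's proof sidesteps the issue by defining $\rho^+(\ell)$ in terms of the \emph{exact} $r_f$ rather than $\hat r_f$; with that choice the approximation ratio is exactly $4\Gamma^2$ and the stated constants $C_D^-=2\Gamma^2$, $C_D^+=8\Gamma^4$ work. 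However, Step~I only produces $\hat r_f$, not $r_f$, so as written the paper's argument has the same gap you identified. A clean fix would be either to tighten the guarantee of \Cref{lemma:fl-stepI} (a $C_R\le 2\Gamma$ estimate would suffice and is obtainable by the same technique with finer scales) or to enlarge $C_D^+$ in \eqref{eq:constants} and propagate the change through \Cref{lem:paidfacility,lem:constant-rad,lemma:sameradius}; either route is routine but neither is carried out in the paper or in your proposal.
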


\begin{proof}
    Recall that for each client $c \in \C$, the value $\alpha_c^*$ is defined as $\alpha_c^* = \min_{f \in \F} \max\{r_f^2, \cost(c,f)\}$. This represents the smallest dual value for which client $c$ can contribute to the opening of a facility. Our goal is to compute an estimate $\alpha_{c,0}$ such that
    \begin{align*}
        \frac{ \min_{f \in \F} \max\{r_f^2, \cost(c,f)\}}{C_D^+}=\frac{\alpha_c^*}{C_D^+} \leq \alpha_{c,0} \leq \frac{\alpha_c^*}{C_D^-} = \frac{ \min_{f \in \F} \max\{r_f^2, \cost(c,f)\}}{C_D^-},
    \end{align*}
    where the constants $C_D^+$ and $C_D^-$ ensure the necessary approximation guarantees.

    For convenience, for a ball of radius $2^{\ell}$ around a given client $c \in \C$, we define the function $\rho(\ell)$ as
    \begin{align*}
        \rho(\ell) = \min_{f \in B_{\F}(c, 2^\ell)} \{r_f\}.
    \end{align*}
    Here, $\ell \in [\log n]$ because, as discussed in \Cref{sec:prelim}, the distance between any two input points $x$ and $y$ satisfies $1 \leq \dist(x, y) \leq \poly(n)$, implying that the number of relevant distance scales is at most $O(\log n)$.
    
    For each client $c \in \C$, we define a value $\bar{\alpha}_c$ as
    \begin{align*}
        \bar{\alpha}_c = \min_{\ell} \max\left\{\rho(\ell)^2, 4^{\ell}\right\}.
    \end{align*}
    
    We now show that this definition ensures the bound $\alpha_c^* \leq \bar{\alpha}_c \leq 4 \alpha_c^*$. To that end, observe that for each facility $f \in \F$ at distance $d(c, f)$ from client $c$, the facility first appears in the ball $B_{\F}(c, 2^\ell)$ with $\ell = \ell_f := \lceil \log_2 d(c, f) \rceil$. Since all balls with larger radius also contain $f$, and the minimization in the definition of $\bar{\alpha}_c$ is taken over all such $\ell$, the contribution of facility $f$ is accounted for at index $\ell_f$.
    
    Thus, for every facility $f$, we have:
    \begin{align*}
        \max\{r_f^2, \cost(c,f)\} \leq \max\{r_f^2, 4^{\ell_f}\},
    \end{align*}
    and therefore:
    \begin{align*}
        \alpha_c^* = \min_{f \in \F} \max\{r_f^2, \cost(c,f)\} \leq \min_{\ell} \max\{\rho(\ell)^2, 4^\ell\} = \bar{\alpha}_c.
    \end{align*}
    
    To obtain the upper bound, consider any facility $f^*$ that achieves the minimum in the definition of $\alpha_c^*$, i.e.,
    \begin{align*}
        \alpha_c^* = \max\{r_{f^*}^2, \cost(c, f^*)\}.
    \end{align*}
    Let $\ell^* = \lceil \log_2 d(c, f^*) \rceil$. Then $f^* \in B_{\F}(c, 2^{\ell^*})$ and hence:
    \begin{align*}
        \rho(\ell^*) \leq r_{f^*}, \quad \text{so} \quad \bar{\alpha}_c \leq \max\{r_{f^*}^2, 4^{\ell^*}\}.
    \end{align*}
    Using $\cost(c, f^*) = d(c, f^*)^2 \leq 4^{\ell^*}$, it follows that:
    \begin{align*}
        \bar{\alpha}_c &\leq \max\{r_{f^*}^2, 4^{\ell^*}\} \leq 4 \cdot \max\{r_{f^*}^2, d(c, f^*)^2\} \\
        &= 4 \alpha_c^*.
    \end{align*}
    
    Combining both bounds yields:
    \begin{align}\label{eq:baralpha}
        \alpha_c^* \leq \bar{\alpha}_c \leq 4 \alpha_c^*.
    \end{align}

    We further define the function $\rho^+(\ell)$ as follows:
    \begin{align*}
        \rho^+(\ell) = \min_{f \in B_{\F}^+(c, 2^\ell)} \{r_f\}.
    \end{align*}
    Moreover, we define the value $\alpha_c^+$ as
    \begin{align*}
        \alpha_c^+ = \min_{\ell} \max\left\{\rho^+(\ell)^2, 4^{\ell}\right\}.
    \end{align*}
    
    We now show that this definition ensures the bound $\alpha_c^*/\Gamma^2 \leq \alpha_c^+ \leq 4 \alpha_c^*$. Considering the definitions of $\alpha_c^+$ and $\bar{\alpha}_c$, and noting that both are defined as a minimum over $\ell$ of the maximum between the squared radius and $4^{\ell}$, we apply the containment property from \Cref{eq:ballcontainment}. Since $B_{\F}(c, 2^\ell) \subseteq B_{\F}^+(c, 2^\ell)$, and the minimization across all $\ell$ naturally discards large-radius facilities when possible, it follows that:
    \begin{align*}
        \alpha_c^+ \leq \bar{\alpha}_c.
    \end{align*}
    
    Using the previously established bound $\bar{\alpha}_c \leq 4\alpha_c^*$ from \eqref{eq:baralpha}, we obtain:
    \begin{align*}
        \alpha_c^+ \leq 4\alpha_c^*.
    \end{align*}
    
    To establish a lower bound on $\alpha_c^+$, we expand its definition as follows:
    \begin{align*}
        \alpha_c^+ &= \min_{\ell} \max\left\{\rho^+(\ell)^2, 4^{\ell}\right\} \\
                   &= \min_{\ell} \max\left\{\min_{f \in B_{\F^+}(c, 2^\ell)} r_f^2, 4^\ell\right\} \\
                   &= \min_{f} \max\left\{r_f^2, 4^{\ell^+_f}\right\},
    \end{align*}
    where $\ell^+_f := \min\{\ell \mid f \in B_{\F}^+(c, 2^\ell)\}$.
    
    From the definition of $B^+$ and the underlying graph structure, we have the guarantee:
    \begin{align*}
        \frac{\dist(c, f)}{\Gamma} \leq 2^{\ell^+_f}.
    \end{align*}
    
    Since $\cost(c,f) = \dist(c, f)^2$, we conclude:
    \begin{align*}
        \frac{\cost(c,f)}{\Gamma^2} \leq 4^{\ell^+_f}.
    \end{align*}
    
    Now, combining the above with the definition of $\alpha_c^*$, we obtain:
    \begin{align*}
        \frac{\alpha_c^*}{\Gamma^2} \leq \alpha_c^+.
    \end{align*}
    Combining both bounds yields:
    \begin{align}
        \alpha_c^*/\Gamma^2 \leq \alpha^+_c \leq 4 \alpha_c^*.
    \end{align}
    
    Finally, we set
    \begin{align*}
        \alpha_{c,0} := \frac{\alpha_c^+}{8\Gamma^2},
    \end{align*}
    which yields the desired bounds:
    \begin{align*}
        \frac{\alpha_c^*}{8\Gamma^4} \leq \alpha_{c,0} \leq \frac{\alpha_c^*}{2\Gamma^2}.
    \end{align*}

    This concludes the proof.
    
\end{proof}

Now, we move towards the third step of \Cref{alg:fl-overview} and prove that it can be implemented in $O(1)$ MPC rounds. Specifically, we show that finding the set of problematic clients can be performed efficiently in parallel.

\begin{lemma}\label{lemma:fl-stepIII}
    Step III of \Cref{alg:fl-overview} can be implemented in $O(1)$ MPC rounds with constants $\gamma_1 = 4\Gamma^4$, $\gamma_2 = 9\Gamma^4$, and $Q = 8\Gamma^4$.
\end{lemma}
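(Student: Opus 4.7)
The plan is to reduce Step III to a distance-filtered minimum query in the two-hop neighborhood of the LSH spanner $G$ from \Cref{lemma:graphapprox}, then verify the two required implications via the ball-containment identity \eqref{eq:ballcontainment}. Concretely, every client $c$ will search for the client $c'\neq c$ with smallest $\alpha_{c',0}$ among those reachable from $c$ by an at-most-two-hop path of total edge weight at most a radius $R_c$ to be fixed, and is put in $\C'$ iff this minimum does not exceed $\alpha_{c,0}/Q$.

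The first ingredient is the choice of $R_c$. From \Cref{lemma:fl-stepII} we have $C_D^-\alpha_{c,0}\leq \alpha_c^*\leq C_D^+\alpha_{c,0}$, so $\sqrt{\alpha_c^*}\in[\sqrt{2}\,\Gamma\sqrt{\alpha_{c,0}},\,2\sqrt{2}\,\Gamma^2\sqrt{\alpha_{c,0}}]$. I will pick $R_c=c_0\sqrt{\alpha_{c,0}}$ with $c_0$ chosen large enough that $R_c\geq \gamma_1\sqrt{\alpha_c^*}$ in the worst admissible case, so that $B_{\C}(c,\gamma_1\sqrt{\alpha_c^*})\subseteq B_{\C}^+(c,R_c)$ via \eqref{eq:ballcontainment}, and small enough that $\Gamma\cdot R_c\leq \gamma_2\sqrt{\alpha_c^*}$ in the opposite worst case, so that $B_{\C}^+(c,R_c)\subseteq B_{\C}(c,\gamma_2\sqrt{\alpha_c^*})$. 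The constants $\gamma_1=4\Gamma^4$, $\gamma_2=9\Gamma^4$, $Q=8\Gamma^4$ together with $C_D^\pm$ from \eqref{eq:constants} are calibrated so that such a $c_0$ exists; this bookkeeping is the one delicate point of the argument.

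The filtered search itself is a direct adaptation of \Cref{lemma:basicMins} with $t=2$ and $\ell=1$: every client $c'$ injects the single scalar $\alpha_{c',0}$, and along the aggregation each forwarded message carries the accumulated weight of the at-most-two-hop path that delivered it; any message whose accumulated weight would exceed the receiver's $R_c$ is discarded before being used in the min. Since edge weights and therefore the distinct values of $R_c$ live on $O(\log n)$ geometric scales (distances are polynomially bounded), the whole procedure runs in $O(1)$ fully scalable MPC rounds with global memory $O((m+n)\cdot\polylog n)$; the scale-bucketing and routing are handled by \Cref{lemma:groupwiseaggregation}. Verification is then immediate: any $c'\in B_{\C}(c,\gamma_1\sqrt{\alpha_c^*})$ qualifying in \eqref{eq:problematicclients1} satisfies $d_G^{(2)}(c,c')\leq \dist(c,c')\leq \gamma_1\sqrt{\alpha_c^*}\leq R_c$ by the choice of $c_0$, so $c\in\C'$; conversely any $c'$ passing the filter has $\dist(c,c')\leq \Gamma R_c\leq \gamma_2\sqrt{\alpha_c^*}$, giving \eqref{eq:problematicclients2}. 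The main obstacle I anticipate is the numerical verification that $\gamma_2/\gamma_1$ is large enough to absorb both the $\sqrt{C_D^+/C_D^-}$ spread of $\sqrt{\alpha_c^*}$ for fixed $\alpha_{c,0}$ and the $\Gamma$-distortion of $G$; the remaining ingredients are all standard primitives already developed in \Cref{sec:mpc_graphalg}.
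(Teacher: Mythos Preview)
Your plan---compute a distance-filtered two-hop minimum in $G$ and invoke the ball containments \eqref{eq:ballcontainment}---is the natural approach, and the reduction to \Cref{lemma:basicMins} with $t=2$, $\ell=1$ and an edge-weight filter is sound for the $O(1)$-round claim. However, the numerical bookkeeping you correctly flag as the delicate point does not close with the constants in the statement. For $R_c=c_0\sqrt{\alpha_{c,0}}$ to satisfy $R_c\ge\gamma_1\sqrt{\alpha_c^*}$ for every admissible $\alpha_c^*\le C_D^+\alpha_{c,0}$ you need $c_0\ge\gamma_1\sqrt{C_D^+}$; for $\Gamma R_c\le\gamma_2\sqrt{\alpha_c^*}$ for every admissible $\alpha_c^*\ge C_D^-\alpha_{c,0}$ you need $c_0\le(\gamma_2/\Gamma)\sqrt{C_D^-}$. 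Combined, this forces $\gamma_2/\gamma_1\ge\Gamma\sqrt{C_D^+/C_D^-}=\Gamma\sqrt{4\Gamma^2}=2\Gamma^2$, whereas $\gamma_2/\gamma_1=9\Gamma^4/(4\Gamma^4)=9/4$. Since \Cref{sec:fl-analysis} assumes $\Gamma\ge5$, the required ratio is at least $50$, so no $c_0$ exists. The obstruction is intrinsic to any single-radius test: every estimate of $\alpha_c^*$ available to the algorithm (either $\alpha_{c,0}$ or the intermediate $\alpha_c^+$ from \Cref{lemma:fl-stepII}) already carries multiplicative spread $4\Gamma^2$, and this spread together with the $\Gamma$-distortion of $G$ must be absorbed by $\gamma_2/\gamma_1$.

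The paper's own argument does not resolve this tension either: it searches at the much smaller radius $\sqrt{\alpha_{c,0}}$, which visibly cannot certify \eqref{eq:problematicclients1} with $\gamma_1=4\Gamma^4$, and then pivots to a connection-cost calculation that is about the \emph{analysis} of problematic clients rather than about implementing the test. So your method is not the problem; the issue is that the pair $(\gamma_1,\gamma_2)=(4\Gamma^4,9\Gamma^4)$ as stated is incompatible with any straightforward implementation of Step~III, and you should not assert that the constants ``are calibrated so that such a $c_0$ exists'' without actually carrying out the check.
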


\begin{proof}
    Recall that a client $c \in \C$ is marked as problematic if there exists another client $c' \in B_{\C}(c, \sqrt{\alpha_c^*})$ such that
    \begin{align*}
        \alpha_{c',0} \leq \frac{\alpha_{c,0}}{Q}.
    \end{align*}
    However, since the algorithm only operates on approximations $\alpha_{c,0}$ of the true values $\alpha_c^*$, where
    \begin{align*}
        \frac{\alpha_c^*}{8\Gamma^2} \leq \alpha_{c,0} \leq \frac{\alpha_c^*}{2},
    \end{align*}
    which means we conservatively search within $B_{\C}(c, \sqrt{\alpha_{c,0}}) \in B_{\C}(c, \sqrt{\alpha_c^*/2})$.

    Moreover, the algorithm uses the approximate neighborhood $B^+_{\C}$ (computed from the power graph) instead of the exact ball $B_{\C}$. From \Cref{eq:ballcontainment}, we know that
    \begin{align*}
        B_{\C}(j, \sqrt{\alpha_c^*/2}) \subseteq B^+_{\C}(j, \sqrt{\alpha_c^*/2}),
    \end{align*}
    which ensures that any client $c'$ that could make $c$ problematic is still included in the search region.

    By \Cref{lemma:graphapprox}, the power graph guarantees that for all clients $x, y$,
    \begin{align*}
        \dist(x, y) \leq \Gamma \cdot d^{(2)}_G(x, y),
    \end{align*}
    so any client $c'$ in $B^+_{\C}(c, \sqrt{\alpha_c^*/2})$ lies within distance at most $\Gamma \cdot \sqrt{\alpha_c^*/2}$ of $c$, which is within distance at most $\gamma_1\cdot \sqrt{\alpha_j^*}$.

    To show that the connection cost from a problematic client $c$ to a facility $f'$ remains bounded after freezing $\alpha_{c}$, we assume that $c$ previously contributed to $f'$ and is now frozen due to the presence of a nearby client $c'$ such that $\alpha_{c',0} \le \alpha_{c,0}/Q$.

    Let $c'$ be the witness for $c$ being problematic, and suppose that in the final solution, $c$ connects to facility $f'$ (which $c'$ contributes to). We aim to bound $\cost(c,f')$ in terms of $\alpha_{c,0}$.
    
    Since $c'$ lies within distance at most $\sqrt{\alpha_c^*/2}$ of $c$, and we only have access to approximate distances via \Cref{lemma:graphapprox}, the actual distance is bounded by:
    \begin{align*}
        \dist(c, c') &\le \Gamma \cdot d_G^{(2)}(c, c') \le \Gamma \cdot \sqrt{\alpha_c^*/2}.
    \end{align*}
    
    Moreover, since $c'$ contributes to facility $f'$, we have:
    \begin{align*}
        \dist(c', f') \le \sqrt{\alpha_{c',1}} \le \sqrt{C_A \cdot \alpha_{c',0}} \le \sqrt{C_A \cdot \frac{\alpha_{c,0}}{Q}}.
    \end{align*}
    
    Using the triangle inequality, we can now bound $\dist(j, i')$:
    \begin{align*}
        \dist(c, f') &\le \dist(c, c') + \dist(c', f') \\
        &\le \Gamma \cdot \sqrt{\frac{\alpha_c^*}{2}} + \sqrt{C_A \cdot \frac{\alpha_{c,0}}{Q}}.
    \end{align*}
    
    Using the bound $\alpha_c^* \le 8\Gamma^2 \cdot \alpha_{c,0}$ from Step II, we obtain:
    \begin{align*}
        \dist(c, f') &\le \Gamma \cdot \sqrt{4\Gamma^2 \cdot \alpha_{c,0}} + \sqrt{\frac{C_A}{Q}} \cdot \sqrt{\alpha_{c,0}} \\
        &= 2\Gamma^2 \cdot \sqrt{\alpha_{c,0}} + \sqrt{\frac{C_A}{Q}} \cdot \sqrt{\alpha_{c,0}} \\
        &= \left(2\Gamma^2 + \sqrt{\frac{C_A}{Q}}\right) \cdot \sqrt{\alpha_{c,0}}.
    \end{align*}
    
    Squaring both sides, we get:
    \begin{align*}
        \cost(c, f') = \dist^2(c, f') \le \left(2\Gamma^2 + \sqrt{\frac{C_A}{Q}}\right)^2 \cdot \alpha_{c,0}.
    \end{align*}
    
    Thus, the connection cost of a problematic client $c$ remains bounded by a constant factor of $\alpha_{c,0}$, where:
    \[
    \gamma_2 := \left(2\Gamma^2 + \sqrt{\frac{C_A}{Q}}\right)^2.
    \]

    Given the requirement that $Q \ge C_D^+/C_D^- = 4\Gamma^2$, we choose $C_A = \Gamma^4 \cdot Q$, which implies that $C_A / Q \le \Gamma^4$. As a result, we obtain $\gamma_2 = 9\Gamma^4$.

    Finally, note that $B^+_{\C}$ corresponds to the two-hop neighborhood in the graph $G$ and can be computed in $O(1)$ MPC rounds. Therefore, the entire implementation of Step III requires only constant rounds.
\end{proof}

\begin{lemma}\label{lemma:fl-stepIV}
    Step IV of \Cref{alg:fl-overview} can be implemented in $O(1)$ MPC rounds, and it computes a set $S \subseteq \mathcal{F}$ of facilities satisfying the following properties:
    \begin{enumerate}[(a)]
        \item $S$ includes all fully paid facilities in $\mathcal{F}$.
        \item Every facility $i \in S$ is $\kappa$-approximately paid, where $\kappa = \Gamma^2$.
    \end{enumerate}
\end{lemma}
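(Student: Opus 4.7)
My plan is to build $S$ from an approximate version of the paid-ness functional computed through the LSH graph $G$ of \Cref{lemma:graphapprox}. Concretely, define
\[
\widehat{\cost}(c,f) := \bigl(d_G^{(2)}(c,f)\bigr)^2, \qquad
\tilde{\phi}(f) := \sum_{c \in \C} \bigl[\alpha_{c,1} - \widehat{\cost}(c,f)\bigr]^+,
\]
and let $S := \bigl\{f \in \F : \tilde{\phi}(f) \geq \lambda\bigr\}$. By property (I) of \Cref{lemma:graphapprox} we always have $d_G^{(2)}(c,f)\leq \dist(c,f)$, while property (II) gives $\dist(c,f)\leq \Gamma\cdot d_G^{(2)}(c,f)$, so
\[
\widehat{\cost}(c,f) \;\leq\; \cost(c,f) \;\leq\; \Gamma^{2}\cdot\widehat{\cost}(c,f).
\]
Both parts of the lemma then reduce to one-line calculations from this sandwich bound.

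For (a), the left inequality gives $[\alpha_{c,1}-\widehat{\cost}(c,f)]^+ \geq [\alpha_{c,1}-\cost(c,f)]^+$ for every client $c$, hence $\tilde{\phi}(f) \geq \sum_{c}[\alpha_{c,1}-\cost(c,f)]^+$, so any paid facility automatically satisfies $\tilde{\phi}(f)\geq\lambda$ and is included in $S$. For (b), applying the right inequality and factoring out $\Gamma^{2}$ yields
\[
\sum_{c\in\C}\bigl[\Gamma^{2}\alpha_{c,1}-\cost(c,f)\bigr]^+
\;\geq\;\sum_{c\in\C}\bigl[\Gamma^{2}\alpha_{c,1}-\Gamma^{2}\widehat{\cost}(c,f)\bigr]^+
\;=\;\Gamma^{2}\cdot\tilde{\phi}(f)\;\geq\;\Gamma^{2}\lambda\;\geq\;\lambda,
\]
which is exactly the definition of $f$ being $\kappa$-approximately paid with $\kappa=\Gamma^{2}$. (Any constant-factor slack introduced by the MPC implementation of $\tilde{\phi}$ below can be absorbed by thresholding at, say, $\lambda/2$ and slightly increasing $\kappa$, which is still at most $\Gamma^{2}$ for our choice of $\Gamma$.)

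The main obstacle is the MPC implementation: computing $\tilde{\phi}(f)$ (or a constant-factor estimate) for every facility in $O(1)$ rounds without blowing up the global memory. The naive route — materialising every pair $(c,f)$ with $d_G^{(2)}(c,f)\leq\sqrt{\alpha_{c,1}}$ — can produce $\Theta(n^{2})$ contributions in the worst case. I would instead decompose over the $O(\log n)$ distance scales used in the construction of $G$: for each scale $2^{\ell}$, restrict attention to the corresponding sub-spanner $G_{\ell}$ (which has $n^{1+\eps}$ edges and captures all pairs with Euclidean distance $\Theta(2^{\ell})$ up to a $2$-hop path), bucket clients by $\lfloor\log_{2}\sqrt{\alpha_{c,1}}\rfloor$, and use \Cref{lemma:basicSums} on $G_{\ell}$ with $t=2$ to aggregate, in parallel over all facilities, the weighted contributions $[\alpha_{c,1}-\widehat{\cost}(c,f)]^{+}$ from clients in the relevant buckets. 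Summing these $O(\log n)$ per-scale estimates (running the $O(\log n)$ scales in parallel costs only a $\polylog n$ factor in global memory, still inside the $O((m+n)\cdot n^{\eps})$ budget) yields a constant-factor approximation of $\tilde{\phi}(f)$ for every $f$ in $O(1)$ rounds, and a final thresholding pass selects $S$.
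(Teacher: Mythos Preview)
Your sandwich bound $\widehat{\cost}\le\cost\le\Gamma^{2}\widehat{\cost}$ and the two one-line derivations for (a) and (b) are correct, and in fact cleaner than the paper's own argument, which (somewhat confusingly) phrases the paid-ness test in terms of $r_f^2$ rather than the dual values $\alpha_{c,1}$ that Step~IV actually refers to. Defining $\tilde{\phi}(f)=\sum_c[\alpha_{c,1}-\widehat{\cost}(c,f)]^+$ and thresholding at $\lambda$ is the right abstraction; the paper is aiming at the same thing via $B_{\C}^+(f,\cdot)$ but your formulation makes the $\kappa=\Gamma^2$ conclusion immediate.

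The genuine gap is in the MPC implementation, and it is not fully closed by either your sketch or the paper's. \Cref{lemma:basicSums} aggregates a \emph{per-node} value $x_v$ over a $t$-hop neighbourhood; it cannot directly sum the contribution $[\alpha_{c,1}-\widehat{\cost}(c,f)]^+$, which depends on both endpoints through $\widehat{\cost}(c,f)$. Your scale decomposition is the right idea, but to make it go through you must split $\tilde{\phi}(f)$ as $\sum_{c\,:\,\widehat{\cost}(c,f)<\alpha_{c,1}}\alpha_{c,1}\;-\;\sum_{c\,:\,\widehat{\cost}(c,f)<\alpha_{c,1}}\widehat{\cost}(c,f)$ and then further bucket the second sum by distance scale so that $\widehat{\cost}$ is constant up to a factor $4$ within each bucket --- only then does each piece become a bona fide \Cref{lemma:basicSums} call (with node weights $\alpha_{c,1}$, resp.\ the scale value $4^\ell$). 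You also need to argue that the resulting difference of approximate sums still controls $\tilde{\phi}$ multiplicatively (e.g.\ by rounding $\widehat{\cost}$ down to a power of $4$ \emph{before} forming $\tilde{\phi}$, which only increases $\tilde{\phi}$ and is still dominated by the $\Gamma^2\alpha_{c,1}$ test). Finally, your parenthetical that residual slack ``is still at most $\Gamma^2$'' is backward: slack multiplies into $\kappa$. What actually saves you is that \Cref{lemma:basicSums} allows $\eps=1/\polylog n$, so the effective $\kappa$ is $(1+o(1))\Gamma^2$; state it that way.
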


\begin{proof}
    Recall that a facility $i$ is called \emph{paid} if the total contribution from clients in its neighborhood satisfies the condition:
    \begin{align*}
        \sum_{c \in B_{\C}(f, r_f)} \left(r_f^2 - \cost(c,f)\right) = \lambda,
    \end{align*}
    where $\lambda$ is the opening cost of facility $f$. 

    To determine whether a facility is paid, we need to compute this sum by considering all clients within the ball $B_{\C}(f, r_f)$. Specifically, we rewrite the summation as:
    \begin{align*}
        \sum_{c \in B_{\C}(f, r_f)} \left(r_f^2 - \cost(c,f)\right) = |B_{\C}(f, r_f)| \cdot r_f^2 - \sum_{c \in B_{\C}(f, r_f)} \cost(c,f).
    \end{align*}
    Thus, we only need to compute $|B_{\C}(f, r_f)|$, the number of clients in the ball, and the sum of distances $\cost(c,f)$ for all $c \in B_{\C}(f, r_f)$.

    However, we do not have direct access to $B_{\C}(f, r_f)$. Instead, following \Cref{eq:ballcontainment}, we approximate it using $B_{\C}^+(f, r_f)$, which can be accessed in $O(1)$ rounds. The issue with using $B_{\C}^+(f, r_f)$ is that it might include additional clients whose true Euclidean distances from $f$ fall within the range $[r_f, \Gamma \cdot r_f]$. These additional clients contribute negatively to the sum since $r_f^2 - \cost(c,f)$ becomes negative for them, potentially misclassifying a truly paid facility as unpaid.

    To mitigate this issue, when summing over all clients in $B_{\C}^+(f, r_f)$, we only consider those clients whose distances satisfy $\cost(c,f) \leq r_f^2$, ensuring that only valid contributions are included.

    Additionally, as in previous steps, we do not work directly with $r_f$, but rather with its approximation $\hat{r}_f$, which satisfies:
    \begin{align*}
        \frac{r_f}{9\Gamma} \leq \hat{r}_f \leq r_f.
    \end{align*}
    This means that we must evaluate the condition using $\hat{r}_f$ instead of $r_f$. Since we only have a constant-factor approximation of $r_f$, we must adjust the threshold $\kappa$ to be $\Gamma^2$. This ensures that facilities that are actually paid remain paid in our approximation, though some additional facilities that are not fully paid may also be included.

    Given that accessing $B_{\C}^+(f, r_f)$, computing its size, and summing contributions all take $O(1)$ MPC rounds, the overall computation remains efficient.
\end{proof}

\begin{lemma}\label{lemma:fl-stepV}
    Step $V$ of \Cref{alg:fl-overview} can be implemented in $O(1)$ MPC rounds. More specifically, there is an $O(1)$-round MPC algorithm that computes an unweighted graph $H_{base}$ with at most $n^{1+\eps}$ edges (for $\eps>0$ an arbitrarily small constant) such that the graph $H_{base}^2$ satisfies the requirements for graph $H'$ in \Cref{alg:fl-overview} with the constants $C_{H,1}$ and $C_{H,2}$ given by
    \[
    C_{H,1}=4\cdot C_R^4\cdot \zeta^2,\quad
    C_{H,2}=12\cdot\sqrt{\kappa\rho}\cdot C_R^3\cdot \zeta^2.
    \]
    where $\rho$ and $\zeta$ are the constants from \Cref{lem:constant-rad,lemma:sameradius}.
\end{lemma}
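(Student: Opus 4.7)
The plan is to build $H_{base}$ scale by scale using the single-scale LSH graph construction (Lemma~\ref{lem:graph-construct}), filtering at each scale to facilities whose approximate radius is compatible with that scale. Concretely, for each scale index $\ell \in \{0, 1, \ldots, O(\log n)\}$, I would define $\mathcal{S}_\ell \subseteq \mathcal{S}$ to consist of facilities with $\hat{r}_f \in [2^\ell, 2^{\ell+a}]$, where $a$ is a constant chosen larger than $\log_2(C_R\zeta)$ so that any two conflicting facilities (whose radii agree up to factor $\zeta$ by \Cref{lemma:sameradius}) share a common scale window. At scale $\ell$, I would instantiate the LSH of \Cref{lem:graph-construct} at distance $D_\ell = 4\sqrt{\kappa\rho}\cdot 2^{\ell+a}$; for each hash function $h_i^{(\ell)}$ and each nonempty bucket, pick an arbitrary pivot from the bucket's intersection with $\mathcal{S}_\ell$ and add to $H_{base}$ a star connecting that pivot to every other facility of $\mathcal{S}_\ell$ hashing to the same bucket. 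The union over all scales is $H_{base}$.

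The verification that $E_H \subseteq E(H_{base}^2)$ is the central correctness claim: given an edge $\{f,f'\} \in E_H$ witnessed by a client $c$ that $\kappa$-approximately contributes to both, \Cref{lem:constant-rad,lemma:sameradius} force $r_f,r_{f'}$ to lie within factor $\zeta$ and $\dist(f,f') \leq 2\sqrt{\kappa\rho}\min\{r_f,r_{f'}\}$. Picking $\ell^* = \lfloor \log_2 \min\{\hat{r}_f,\hat{r}_{f'}\}\rfloor$, both facilities lie in $\mathcal{S}_{\ell^*}$ and at Euclidean distance at most $D_{\ell^*}$. The LSH guarantee then ensures that with high probability some $h_i^{(\ell^*)}$ collides on $\{f,f'\}$, so they share a bucket and are at distance at most $2$ in $H_{base}$ through the corresponding pivot. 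For the converse side, the radius and distance conditions on edges of $H_{base}^2$ follow because any such edge is produced by a 2-hop path $f\text{--}f^*\text{--}f'$ where all three facilities lie in $\mathcal{S}_{\ell}\cup\mathcal{S}_{\ell+1}$ for some $\ell$ (since each $H_{base}$-edge is scale-local); propagating the bound $\hat{r}=\Theta(2^\ell)$ through at most two scales and using $r_f/C_R\leq\hat{r}_f\leq r_f$ yields $r_f/C_{H,1}\leq r_{f'}\leq C_{H,1}\cdot r_f$. The distance condition then follows from property~(II) of \Cref{lem:graph-construct}: each star edge has Euclidean length at most $\Gamma\cdot D_\ell$, so $\dist(f,f')\leq 2\Gamma D_\ell = O(\sqrt{\kappa\rho}\cdot 2^\ell) = O(\sqrt{\kappa\rho}\cdot\min\{r_f,r_{f'}\})$, matching $C_{H,2}$ with the stated dependence on $\Gamma$, $C_R$, $\zeta$, $\kappa$, and $\rho$.

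For the MPC implementation and edge count: at each individual scale the LSH from \Cref{lem:opt_lsh} can be evaluated in $O(1)$ rounds with $n^{1+o(1)}$ global memory, and the per-bucket star construction reduces to grouping by hash value plus picking a bucket representative, handled in $O(1)$ rounds by \Cref{lemma:groupwiseaggregation}. Running all $O(\log n)$ scales in parallel (blowing up memory by an $O(\log n)$ factor) keeps the round count at $O(1)$. Each scale contributes $O(n\log n/p_1)$ edges; summing over $O(\log n)$ scales and choosing $p_1 = n^{-\eps/2}$ in the LSH of \Cref{lem:opt_lsh} gives $|E(H_{base})|\leq n^{1+\eps}$ as required.

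The main obstacle I expect is purely a bookkeeping issue: tuning the window width $a$, the LSH scale factor inside $D_\ell$, and the inflation through the 2-hop composition so that the resulting constants $C_{H,1}$ and $C_{H,2}$ come out at most the stated $4\cdot C_R^4\cdot\zeta^2$ and $12\sqrt{\kappa\rho}\cdot C_R^3\cdot\zeta^2$, respectively. The substantive difficulty—ensuring that a conflict pair actually collides under some hash at a common scale—is resolved by the fact that \Cref{lemma:sameradius} pins down both the radius ratio and the distance between conflicting facilities, so the LSH at the appropriately chosen single scale suffices and no genuinely new geometric argument is needed.
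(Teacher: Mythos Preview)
Your proposal is correct and follows essentially the same approach as the paper: partition $\S$ into overlapping radius classes, build an LSH-based sparse graph on each class, take the union as $H_{base}$, and verify that $H_{base}^2$ contains $H$ while keeping the radius and distance conditions. The only difference is that the paper invokes the full multi-scale spanner of \Cref{lemma:graphapprox} on each radius class and then filters by edge weight, whereas you invoke the single-scale LSH of \Cref{lem:graph-construct} directly at the one distance scale $D_\ell$ relevant to that class; your variant is slightly more direct, but both yield the same guarantees. One small inaccuracy: a $2$-hop path in $H_{base}$ need not stay within $\S_\ell\cup\S_{\ell+1}$ but only within scales at most $a$ apart (since the middle vertex lies in both classes), which still gives the required constant-factor radius ratio --- this falls under the bookkeeping you already flagged.
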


\begin{proof}
    Recall that the dependency graph $H = (\S, E_H)$ on the set $\S$ is defined as follows:
    \begin{align*}
        E_H := \left\{\{f,f'\} \in \binom{\S}{2} \mid 
        \exists c \in \C \text{ such that } \kappa\alpha_{c,1} > \max\{\cost(c,f), \cost(c,f')\} \right\}.
    \end{align*}
    We cannot directly construct the graph $H$. We can however construct an an alternative graph $H' = (S, E_{H'})$, ensuring that $E_H \subseteq E_{H'}$. To understand the conditions that $E_{H'}$ must satisfy, let us first understand some basic properties of $H$. Consider some edge $\set{f, r'}$ of $H$. W.l.o.g., assume that $r_f\leq r_{r'}$. First, by \Cref{lem:constant-rad} we know that $\alpha_{c,1}\leq \rho\cdot r_f^2$, where $\rho$ is the constant specified in the statement of \Cref{lem:constant-rad}. Because we also know that $\alpha_{c,1}\geq \kappa\max\{\cost(c,f), \cost(c,f')\}$, the distance between $f$ and $f'$ can therefore be bounded as
    \begin{equation}\label{eq:dependencymindist}
    \dist(f,f') \leq \dist(f,c)+\dist(c,f')\leq 2\cdot\sqrt{\kappa\cdot\alpha_{c,1}} \leq 
    2\cdot\sqrt{\kappa\cdot\rho}\cdot r_f.
    \end{equation}
    From \Cref{lemma:sameradius}, we further know that $r_{f'}\leq \zeta\cdot r_f$ for the constant $\zeta$ that is defined in the statement of \Cref{lemma:sameradius}. The algorithm does not know the exact value of $r_f$ for a facility $f$. In Step I, it however computes an estimate $\hat{r}_f$ such that $r_f/C_R\leq \hat{r}_f\leq r_f$. For $f$ and $f'$, we therefore know that
    \begin{equation}\label{eq:dependencyradii}
        \max\set{\hat{r}_f, \hat{r}_{f'}} \leq C_R\cdot \zeta\cdot
        \min\set{\hat{r}_f, \hat{r}_{f'}}.
    \end{equation}
    If we construct the graph $H'$ such that any two facilities $f,f'\in \S$ for which \eqref{eq:dependencymindist} and \eqref{eq:dependencyradii} hold are connected by an edge in $H'$, then we definitely guarantee that $E_H\subseteq E_{H'}$ (i.e., $H$ is a subgraph of $H'$). We want to achieve this while not adding any edges $\set{f,f'}$ to $H'$ for which $\dist(f,f')=\omega(\min\{r_f,r_{f'}\})$ or for which $r_f$ and $r_{f'}$ are not within a constant factor. Moreover since $H$ and thus also $H'$ might be a dense graph, we cannot explicitly compute it. We will construct a sufficiently sparse graph $H_{base}$ such that $H_{base}^2$ satisfies the conditions of $H'$. 

    As a first step, we define $O(\log n)$ radius classes and assign the facilities in $\S$ to those classes. Each facility $f\in\S$ is assigned to $O(1)$ of the classes. For $i=0,1,\dots,O(\log n)$, we define
    \[
    \S_i := \set{f\in \S \,:\, 2^i \leq \hat{r}_f \leq 2^{i+1}\cdot C_R\cdot \zeta}.
    \]
    Note that by \eqref{eq:dependencyradii}, for any two facilities $f,f'$ that are connected by an edge in $H$, there is at least one radius class $\S_i$ such that $f$ and $f'$ are both in $\S_i$. For each set $\S_i$, we first compute a weighted graph $G_i$ by applying \Cref{lemma:graphapprox}. The lemma guarantees that $G_i$ can be computed in $O(1)$ time in the MPC model, that $G_i$ has at most $|\S_i|^{1+\eps}\leq n^{1+\eps}$ edges, and that any two facilities $f,f'\in \S_i$ are connected by a path of hop-length at most $2$ and total weight at most $\dist(f,f')$ in $G_i$. In addition, for any $f,f'$, the weighted distance in $G_i$ is at least $\dist(f,f')/\Gamma$. Base on $G_i$, we now define an unweighted graph $H_i=(\S_i,E_{H_i})$, where the edge set $E_{H_i}$ is defined as
    \[
    E_{H_i} := \set{\set{f,f'}\in \binom{\S_i}{2}\,:\,
    G_i\text{ contains an edge }\set{f,f'}\text{ of weight }\leq
    2^{i+2}\cdot\sqrt{\kappa\rho}\cdot C_R^2\cdot\zeta}.
    \]
    We next verify that if $f,f'\in \S_i$ and $\set{f,f'}\in E_H$, then $H_i$ contains a path of length at most $2$ between $f$ and $f'$. We thus have to show that $f,f'\in \S_i$ and \eqref{eq:dependencymindist} holds, then $H_i$ contains a path of length at most $2$ between $f$ and $f'$. From \eqref{eq:dependencymindist}, we have
    \[
    \dist(f,f') \leq 2\sqrt{\kappa\rho}\cdot r_f \leq
    2\sqrt{\kappa\rho}C_R\cdot \hat{r}_f \leq
    2\sqrt{\kappa\rho}\cdot C_R\cdot 2^{i+1}\cdot C_R\cdot \zeta.
    \]
    The last inequality follows from $f\in \S_i$. We therefore know that $G_i$ contains a path of hop length $2$ and total weight at most $2^{i+2}\cdot\sqrt{\kappa\rho}\cdot C_R^2\cdot\zeta$. Consequently, $H_i$ contains a path of length at most $2$ between $f$ and $f'$.

    We now define our base graph $H_{base}=(\S, E_{base})$ as $E_{base} := E_{H_0}\cup E_{H_1} \cup \dots$. From the above observation and the fact that for any edge $\set{f,f'}$ of $H$, there is some $i$ for which $f,f'\in \S_i$, $H_{base}$ definitely contains a path of length at most $2$ between $f$ and $f'$. The graph $H':=H_{base}^2$ therefore is a supergraph of $H$ as desired.

    It remains to show that for any edge $\set{f,f'}$ in $H'$, we have
    \[
    \max\set{r_f,r_{f'}}\leq C_{H,1}\cdot \min\set{r_f, r_{f'}}
    \quad\text{and}\quad
    \dist(f,f') \leq C_{H,2}\cdot\min\set{r_f, r_{f'}}.
    \]
    Note that for any edge $\set{f,f'}$ in $H_{base}$, the values of $\hat{r}_f$ and $\hat{r}_{f'}$ are within a factor at most $2^{i+1}\cdot \C_R\cdot \zeta$. The values or $r_f$ and $r_f'$ are then within a factor of at most $2\cdot C_R^2\cdot \zeta$. Consequently, for any edge $\set{f,f'}$ of $H'$, we $r_f$ and $r_f'$ are within a factor
    \[
    C_{H,1} = 4\cdot C_R^4\cdot \zeta^2.
    \]
    Let us now also upper bound the distance between any two facilities $f,f'$ that are connected by an edge in $H'$ (and thus by a path of length at most $2$ in $H_{base}$). We first consider two facilities $f,f'$ that are connected by an edge in $H_{base}$. Assume that this edge was added as part of the graph $H_i$, i.e., both $f$ and $f'$ are in $\S_i$. We know that $f$ and $f'$ are connected by an edge of weight at most $2^{i+2}\cdot\sqrt{\kappa\rho}\cdot C_R^2\cdot\zeta$ in $G_i$. From the properties of $G_i$ (cf.\ \Cref{lemma:graphapprox}), we know that
    \begin{eqnarray*}
        \dist(f,f') & \leq &
        \Gamma\cdot d_{G_i}(f,f')\\
        & \leq & \Gamma\cdot 2^{i+2}\cdot\sqrt{\kappa\rho}\cdot C_R^2\cdot\zeta\\
        & \leq & 4\cdot\sqrt{\kappa\rho}\cdot C_R^2\cdot\zeta\cdot\min\set{\hat{r}_f,\hat{r}_{f'}}\\
        & \leq & 4\cdot\sqrt{\kappa\rho}\cdot C_R^2\cdot\zeta\cdot\min\set{r_f,r_{f'}}.
    \end{eqnarray*}
    Now consider a path $f, f', f''$ of length $2$ in $H_{base}$. Assume that $\hat{r}_f\leq \hat{r}_{f''}$. The above calculation implies that
    \begin{eqnarray*}
        \dist(f,f'') & \leq &
        4\cdot\sqrt{\kappa\rho}\cdot C_R^2\cdot\zeta\cdot\min\set{\hat{r}_f,\hat{r}_{f'}} +
        4\cdot\sqrt{\kappa\rho}\cdot C_R^2\cdot\zeta\cdot\min\set{\hat{r}_{f'},\hat{r}_{f''}}\\
        & \leq &
        4\cdot\sqrt{\kappa\rho}\cdot C_R^2\cdot\zeta\cdot\hat{r}_f +
        4\cdot\sqrt{\kappa\rho}\cdot C_R^2\cdot\zeta\cdot\hat{r}_{f'}\\
        & \leq &
        4\cdot \sqrt{\kappa\rho}\cdot C_R^2\cdot\zeta\cdot 
        \big(1+2\cdot C_R\cdot\zeta\big)\cdot \hat{r}_f\\
        & \leq &
        12\cdot \sqrt{\kappa\rho}\cdot C_R^3\cdot \zeta^2\cdot \min\set{r_f,r_{f''}}.
    \end{eqnarray*}
    The last inequality in particular uses that for all $f$, $\hat{r}_f\leq r_f$ and thus $\hat{r}_f=\min\set{\hat{r}_f, \hat{r}_{f''}}\leq\min\set{r_f,r_{f''}}$.
\end{proof}

\begin{lemma}\label{lemma:fl-stepVI}
    Step VI of \Cref{alg:fl-overview} can be implemented in $\mathcal{O}(\log\log n\cdot\log\log\log n)$ MPC rounds.
\end{lemma}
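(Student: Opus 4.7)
The plan is to reduce Step VI to a single call of the weighted ruling-set subroutine from \Cref{lemma:weighted2rulingset}, applied to the sparse base graph $H_{base}$ produced in \Cref{lemma:fl-stepV}, after first assigning every client a representative facility $f_c\in\S$. To compute $f_c$, I would imitate the dyadic scale search used in the proof of \Cref{lemma:fl-stepII}: \Cref{lem:paidfacility} guarantees the existence of a paid facility (hence a member of $\S$) with $\max\set{r_f^2,\cost(c,f)}\leq\eta\cdot\alpha_{c,0}$, so for every scale $2^\ell$ I would use \Cref{lemma:basicMins} with parameter $1$ on the LSH graph $G$ to let each client learn a facility in $\S\cap B_{\F}^+(c,2^\ell)$ of minimum $\hat r_f$, and then select the scale approximately minimizing $\max\set{\hat r_f^2,4^\ell}$. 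This runs in $O(1)$ MPC rounds. I would then attach weights $w(f):=\max\bigl\{1,\sum_{c\,:\,f_c=f}\alpha_{c,0}\bigr\}$ via \Cref{lemma:groupwiseaggregation}, again in $O(1)$ rounds; these weights lie in $[1,\poly(n)]$ and thus satisfy the hypotheses of \Cref{lemma:weighted2rulingset}.

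Next, I would invoke \Cref{lemma:weighted2rulingset} on the input graph $H_{base}$ with parameters $t=6$ and $\eps=1/\log n$. The returned set $\S_0\subseteq\S$ is an independent set of $H_{base}^6$, so any two centers have $H_{base}$-distance at least $7$ and thus $H'$-distance at least $\lceil 7/2\rceil=4$; because $H\subseteq H'$, the required pairwise $H$-distance of $4$ follows. The lemma's weighted covering guarantee says, w.h.p., that the subset $U\subseteq\S$ of facilities within $H_{base}$-distance $5t=30$ of $\S_0$ has total weight at least a $(1-1/\log n)$-fraction of the whole; defining $\C^+:=\set{c\in\C\,:\,f_c\in U}$, every such client has $d_{H'}(f_c,\S_0)\leq 15=O(1)$, and the weight-preservation property immediately yields $(A-A^+)/A\leq 1/\log n$. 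The global covering guarantee places every $f\in\S$ within $H_{base}$-distance $O(\log\log\log n)$ of $\S_0$, i.e., $o(\log\log n)$ hops in $H'$. Finally, each facility is attached to a nearest center in $H'$ by one more invocation of \Cref{lemma:basicMins} along $O(\log\log\log n)$ hops of $H_{base}$, and every client $c$ is assigned to the cluster of its $f_c$. The round complexity is dominated by the ruling-set call, which by the refined analysis in the proof of \Cref{lemma:weighted2rulingset} costs $O(t\log\log\log n+\log\log n\cdot\log\log\log n)=O(\log\log n\cdot\log\log\log n)$, and the global memory stays at $O((|E(H_{base})|+n)\cdot n^{\eps'})=O(n^{1+\eps''})$ since $|E(H_{base})|\leq n^{1+\eps}$ by \Cref{lemma:fl-stepV}.

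The main obstacle will be the careful translation between distances in the explicit graph $H_{base}$ and distances in $H'=H_{base}^2$: one must choose $t$ so that an independent set of $H_{base}^t$ translates into $H'$-distance at least $4$, while simultaneously keeping the $5t$-covering radius $O(1)$ in $H'$ and the universal covering radius $o(\log\log n)$ in $H'$; the value $t=6$ satisfies $\lceil (t+1)/2\rceil=4$ and meets all three conditions. A secondary technicality is that the weight range and edge count of $H_{base}$ must fit the hypotheses of \Cref{lemma:weighted2rulingset} within the global memory budget, which is handled by the clamp $\max\set{1,\cdot}$ on weights and by the $n^{1+\eps}$-edge bound from \Cref{lemma:fl-stepV}.
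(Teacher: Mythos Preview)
Your proposal is correct and follows essentially the same route as the paper: find $f_c$ via the LSH graph $G$ in $O(1)$ rounds, aggregate client weights onto facilities with \Cref{lemma:groupwiseaggregation}, invoke \Cref{lemma:weighted2rulingset} on $H_{base}$ with the identical parameters $t=6$ and $\eps=1/\log n$ (so that independence in $H_{base}^{6}$ forces $H'$-distance $\geq\lceil 7/2\rceil=4$), and then propagate cluster labels along $O(\log\log\log n)$ hops of $H_{base}$ (you phrase this as iterated \Cref{lemma:basicMins}, the paper as parallel BFS---these are the same). Your clamp $\max\{1,\cdot\}$ on the weights is a detail the paper glosses over but that \Cref{lemma:weighted2rulingset} formally requires; to then recover $(A-A^+)/A\leq O(1)/\log n$ from the clamped-weight guarantee you should also note $W_V\leq |\S|+A=O(A)$, which holds in the $k$-means reduction because every $\alpha_c^*\geq 1$ and hence $A\geq n/C_D^+\geq |\S|/C_D^+$.
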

\begin{proof}
    We start by finding one a facility $f_c\in\S$ for which $\max\big\{r_f^2,\cost(c,f)\big\}=O(\alpha_{c,0})$. By \Cref{lem:paidfacility}, such a facility must exist. In order to find it, every client $c$ searches for the $\S$-node with minimum $\hat{r}_f$-value within an approximate $O(\alpha_{c,0})$ ball of sufficiently large radius. This can be done in $O(1)$ rounds by using the graph $G$ that is provided by \Cref{lemma:graphapprox} and by using \Cref{lemma:basicMins} to probe the $2$-hop neighborhood in $G$ restricted to edges of weight $O(\alpha_{c,0})$. In this way, in $O(1)$ time, every client can find its facility $f_c$.

    We will compute the set of cluster centers $\S_0\subseteq \S$ by using the approximate ruling set algorithm of \Cref{lemma:weighted2rulingset}. However in order to guarantee that a large fraction of the total dual client values is within constant distance of a cluster center, we first need to compute weights for the facilities $f\in \S$. The weight $w(f)$ of a facility $f\in \S$ is defined as $w(f):=\sum_{c\in \C : f_c=f}\alpha_c$, i.e., $w(f)$ is the sum of the dual values of the clients that chose $f$ as their close-by approximately paid facility. We can exactly compute those weights for all $f\in \S$ in $O(1)$ MPC rounds by using \Cref{lemma:groupwiseaggregation}.

    By using \Cref{lemma:fl-stepV}, we can assume that we have an explicitly constructed graph $H_{base}$ that is relatively sparse ($\leq n^{1+\delta}$ edges for an arbitrarily small constant $\delta>0$) and such that $H_{base}^2$ satisfies the criteria for graph $H'$ in \Cref{alg:fl-overview}. The cluster centers $\S_0$ have to form a set of facilities that are at pairwise distance at least $4$ in $H$ and such that all except a $1/\log n$-fraction of the node (facility) weights are within distance $O(1)$ in $H'$. We can enforce pairwise distance at least $4$ in $H$ by requiring pairwise distance at least $4$ in the supergraph $H'$ and thus pairwise distance at least $7$ in $H_{base}$. By using the algorithm of \Cref{lemma:weighted2rulingset} with parameters $t=6$ and $\eps=1/\log n$, we obtain a set that directly satisfies all the properties we need. We can therefore compute $\S_0$ in $O(\log\log n\cdot\log\log\log n)$ rounds in the MPC model.

    In the last part of Step VI, we have to build the clusters by assigning each facility $f\in \S$ to the cluster of the closest facility $f_0\in\S_0$. The cluster center that is at minimum distance from $f$ in $H_{base}$ is also at minimum distance from $f$ in $H'$. We can therefore assign all facilities to their cluster centers by running parallel BFS searches on $H_{base}$ from all cluster centers for at most $O(\log\log\log n)$ steps (we always only forward the first search that reaches a node). This can clearly be done in $O(\log\log\log n)$ MPC rounds.
\end{proof}

\begin{lemma}\label{lemma:fl-stepVII}
    Step VII of \Cref{alg:fl-overview} can be implemented in $\mathcal{O}(1)$ MPC rounds.
\end{lemma}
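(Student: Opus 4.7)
The plan is to exploit the Euclidean parallel-axis (Huygens--Steiner) identity
\[
\sum_{c\in\C^{(f_0)}}\|c-p\|^2 \;=\; \sum_{c\in\C^{(f_0)}}\|c-\mu^{(f_0)}\|^2 \;+\; |\C^{(f_0)}|\cdot\|p-\mu^{(f_0)}\|^2,
\]
valid for every $p\in\R^d$, where $\mu^{(f_0)}:=\frac{1}{|\C^{(f_0)}|}\sum_{c\in\C^{(f_0)}}c$ denotes the centroid of the clients in the cluster of $f_0$. Since only the second summand depends on $p$, minimizing $\sum_{c\in\C^{(f_0)}}\cost(c,f)$ over the facilities $f$ of the cluster of $f_0$ coincides with picking the cluster's facility that is closest to $\mu^{(f_0)}$. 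Opening this facility realises the guarantee invoked in the proof of \Cref{thm:approx_fl_highlevel} with the best possible constant $\psi=1$.

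The implementation is a short sequence of invocations of \Cref{lemma:groupwiseaggregation}, each costing $O(1)$ rounds, with the cluster label assigned by Step VI serving as the group key. First, I would aggregate over each cluster's clients by coordinate-wise summation and counting; a local division then produces $\mu^{(f_0)}$ on the cluster's anchor machine. Second, I would broadcast $\mu^{(f_0)}$ back to every facility carrying cluster label $f_0$ via the same grouping primitive (used as a scatter rather than a gather). Third, each facility locally computes its squared Euclidean distance to the received centroid; this is a purely local operation since both points lie in $\R^d$ with $d=O(\log n)$ and each coordinate fits in $O(\log n)$ bits, so the pair comfortably fits on one machine. Fourth, a cluster-wise minimum over these values, again through \Cref{lemma:groupwiseaggregation}, selects the opened facility in each cluster.

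I do not anticipate a substantive obstacle here, since the argument is a standard $1$-means reduction composed with an existing MPC primitive. The only details worth flagging are bookkeeping ones: representing $\mu^{(f_0)}$ as a rational with denominator $|\C^{(f_0)}|$ so that arithmetic stays exact within the $\polylog n$-bit budget per coordinate, and handling clusters that contain no clients by opening an arbitrary facility of the cluster (such clusters contribute zero to the connection cost and hence do not interfere with the analysis of \Cref{thm:approx_fl_highlevel}). Each of the four stages runs in $O(1)$ rounds and uses global memory linear in $|\F|+|\C|$ up to the per-point description overhead, so Step VII is implementable in $O(1)$ MPC rounds as claimed.
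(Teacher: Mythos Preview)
Your proposal is correct and follows essentially the same approach as the paper: compute the centroid of each cluster's clients via \Cref{lemma:groupwiseaggregation}, then select in each cluster the facility closest to that centroid via a second group-wise aggregation. Your write-up is in fact more detailed than the paper's (you spell out the Huygens--Steiner identity justifying that this yields $\psi=1$, the broadcast step, and the handling of empty clusters and exact arithmetic), but the underlying idea and the primitives used are identical.
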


\begin{proof}
    The goal of Step VII is to open one facility in each cluster that approximately minimizes the total distance to the points in that cluster. Let each point be labeled according to its assigned cluster. Our objective is to compute, for each cluster, a single facility to open that best represents the cluster in terms of proximity to its members.

    We proceed in two phases:

    \textbf{Phase 1: Computing geometric centers of clusters.}
     We apply Lemma~\ref{lemma:groupwiseaggregation} to group points by cluster label and compute the arithmetic mean of the coordinate vectors of the points in each cluster in $O(1)$ MPC rounds. The result is a mean vector $\mu_f$ for each cluster $C_f$.

    \textbf{Phase 2: Selecting the representative facility.}
    For each cluster, we now need to select a facility from that cluster which is closest (in squared Euclidean distance) to the corresponding mean vector $\mu_f$. We again use Lemma~\ref{lemma:groupwiseaggregation} to group all facilities by cluster label and compute the facility minimizing the distance to $\mu_f$ in each group in $O(1)$ MPC rounds.

    Therefore, the full procedure for Step VII completes in $\mathcal{O}(1)$ MPC rounds.
\end{proof}

\subsection{Implementation of the {\it k}-Means Reduction in the MPC Model}
In this section, we describe how to implement the 3-step $k$-means algorithm from \Cref{sec:kmeans-highlevel-alg} in the fully scalable MPC model.

\begin{lemma}\label{lemma:kmeans-stepI}
    Step (I) of the $k$-means algorithm described in \Cref{sec:kmeans-highlevel-alg} can be implemented in $O(\log \log n \cdot \log\log\log n)$ MPC rounds.
\end{lemma}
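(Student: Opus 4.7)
The plan is to run the fully scalable MPC facility location algorithm of \Cref{sec:MPC_impl} in parallel for a geometric grid of opening costs covering the entire useful range, and then pick two consecutive grid points that bracket $k$. Concretely, I fix $L = \Theta(\log n)$ large enough that at $\lambda = 2^{-L}$ the FL algorithm opens $\geq k$ facilities, while at $\lambda = 2^{L}$ it opens $\leq k$. Both endpoint conditions follow from the normalization~\eqref{eq:pointset} (squared distances in $[1, n^{O(1)}]$): for $\lambda \ll 1$, the radius $r_f$ from~\eqref{eq:facilityradius} is tiny for every $f$, so each input point ends up paying for its own facility and $k(2^{-L})$ is essentially $n \geq k$; for $\lambda$ larger than $n \cdot n^{O(1)}$, a single facility's $r_f^2$ already exceeds the squared diameter so at most one facility can be paid, giving $k(2^{L}) \leq 1 \leq k$.

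I will then launch $2L+1 = O(\log n)$ independent copies of the FL procedure in a single batch, one per $\lambda \in \set{2^{-L}, 2^{-L+1}, \ldots, 2^{L}}$, all sharing the one $\lambda$-independent sparse LSH graph of \Cref{lemma:graphapprox}, which is constructed only once. A single copy runs in $O(\log\log n \cdot \log\log\log n)$ rounds by combining \Cref{lemma:fl-stepI,lemma:fl-stepII,lemma:fl-stepIII,lemma:fl-stepIV,lemma:fl-stepV,lemma:fl-stepVI,lemma:fl-stepVII}, and uses $O(n^{1+\eps'})$ global memory for a tunable constant $\eps' > 0$. By choosing $\eps' < \eps$ so that $\log n \cdot n^{1+\eps'} = O(n^{1+\eps})$, all copies fit simultaneously within the global memory budget, and parallel execution leaves the round complexity unchanged.

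Once the batch finishes, each instance reports $k(\lambda) := |\F(\lambda)|$; every machine learns the full vector $\big(k(2^{-L}), \ldots, k(2^{L})\big)$ via \Cref{lemma:groupwiseaggregation} in $O(1)$ rounds. I then locate the smallest index $i^\star$ with $k(2^{i^\star}) \leq k$. If $k(2^{i^\star}) = k$, I output the corresponding solution directly and halt; otherwise I set $\lambda_1 := 2^{i^\star}$ and $\lambda_2 := 2^{i^\star - 1}$. By minimality of $i^\star$, this gives $k_1 \leq k < k_2$ together with $\lambda_1 = 2\lambda_2$, exactly matching the Step~(I) requirements, and the LMP bound~\eqref{eq:kmeans_LMP} for each returned solution comes for free from \Cref{thm:approx_fl_highlevel}.

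The main obstacle I expect is the endpoint argument: justifying rigorously that the chosen $L = O(\log n)$ really makes $k(2^{-L}) \geq k$ and $k(2^{L}) \leq k$. This is the only nontrivial input to an otherwise routine discrete intermediate-value argument, and it is crucial because the approximate LMP procedure need not be monotone in $\lambda$, so consecutive grid points can in principle fluctuate; the existence of \emph{some} consecutive bracketing pair is all that the selection step needs, but does require the two endpoints to sandwich $k$. The memory accounting and parallel orchestration of the $O(\log n)$ independent runs are standard, but the precise choice of $\eps'$ relative to $\eps$ (and of the constant hidden in $L = \Theta(\log n)$) deserves a careful verification.
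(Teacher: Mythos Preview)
Your proposal is correct and follows essentially the same approach as the paper: discretize the range of opening costs into $O(\log n)$ geometric values, run the facility location algorithm in parallel on all of them, and select a consecutive pair that brackets $k$. You are in fact more careful than the paper, which simply asserts that the number of opened facilities is monotone in $\lambda$; your discrete intermediate-value argument based only on the two endpoint conditions avoids relying on that unproven monotonicity.
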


\begin{proof}
    In this step, the goal is to compute two opening costs $\lambda_1$ and $\lambda_2$ such that the facility location algorithm instantiated with $\lambda_1$ opens at most $k$ facilities (i.e., $k_1 := |\F_1| \leq k$), and with $\lambda_2$ opens at least $k$ facilities (i.e., $k_2 := |\F_2| \geq k$), where $\lambda_2 \leq \lambda_1 \le 2\lambda_2$.

    Following \Cref{eq:pointset}, we know that for sufficiently large values of $\lambda = \mathrm{poly}(n)$, the facility location algorithm opens only a single facility, whereas for $\lambda = 1$, it opens all facilities. Therefore, there must exist some value of $\lambda$ in the interval $[1, \lambda_{\max}]$ at which the number of opened facilities transitions from more than $k$ to fewer than $k$.

    We discretize the interval $[1, \lambda_{\max}]$ into $O(\log n)$ geometrically spaced values of $\lambda$. Since the number of opened facilities is monotonically non-increasing in $\lambda$, we are guaranteed to find two consecutive values $\lambda_i$ and $\lambda_{i+1}$ such that one opens fewer than $k$ facilities and the other opens more than $k$. These two values therefore satisfy the condition that $\lambda_2 \leq \lambda_1\le 2\lambda_2$. If either value corresponds to a solution that opens exactly $k$ facilities, we return that facility location solution as the final $k$-means solution and terminate.

    Otherwise, we return the facility location instances corresponding to $\lambda_1$ and $\lambda_2$ as the output of Step (I). Since all facility location instances for the $O(\log n)$ candidate values of $\lambda$ are executed in parallel, the total number of MPC rounds is determined by the runtime of a single facility location computation. This results in a total round complexity of $O(\log \log n \cdot \log \log \log n)$. The global memory usage increases only by a factor of $O(\log n)$ due to the parallel execution over $O(\log n)$ instances.

\end{proof}

Now, we move towards the second step of the $k$-means algorithm explained in \Cref{sec:kmeans-highlevel-alg} and prove that it can be implemented in $O(1)$ MPC rounds. Specifically, we show that computing the set $\F'_2$ can be performed efficiently in parallel.

\begin{lemma}\label{lemma:kmeans-stepII}
    Step (II) of the $k$-means algorithm described in \Cref{sec:kmeans-highlevel-alg}—i.e., selecting a set $\F_2'$ of size $k_1$ containing a $\delta$-approximate nearest neighbor from $\F_2$ for each facility in $\F_1$—can be implemented in $\mathcal{O}(1)$ MPC rounds.
\end{lemma}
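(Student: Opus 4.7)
The plan is to reduce Step (II) to an approximate nearest-neighbor computation that can be handled by the LSH-based graph of \Cref{lemma:graphapprox} together with the local-aggregation primitives from \Cref{sec:mpc_graphalg}. First, I would apply \Cref{lemma:graphapprox} to the point set $P$ (recall $\F_1,\F_2\subseteq P$) to obtain, in $O(1)$ MPC rounds, a sparse weighted graph $G=(P,E)$ with $|E|\leq n^{1+\eps}$ such that the 2-hop weighted distance $d_G^{(2)}$ between any two points $x,y$ satisfies $d_G^{(2)}(x,y)\leq \dist(x,y)$ by Property (I) and $\dist(x,y)\leq \Gamma \cdot d_G^{(2)}(x,y)$ by Property (II) (since any shortest path is at most its 2-hop restriction).

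For each $f\in \F_1$, the algorithm selects a candidate $f_2'(f)\in \F_2$ that minimizes $d_G^{(2)}(f,f')$ over $f'\in \F_2$. To implement this, I would have each node $f'\in \F_2$ contribute its own identifier and each node outside $\F_2$ contribute $+\infty$, and then invoke \Cref{lemma:basicMins} with $\ell=1$ and $t=2$, so that in $O(1)$ rounds every $f\in \F_1$ learns the identity and approximate distance of its best 2-hop candidate in $\F_2$. Letting $f^*$ denote the true Euclidean nearest neighbor of $f$ in $\F_2$, the sandwich bounds above give
\[
\dist(f,f_2'(f))\ \leq\ \Gamma\cdot d_G^{(2)}(f,f_2'(f))\ \leq\ \Gamma\cdot d_G^{(2)}(f,f^*)\ \leq\ \Gamma\cdot \dist(f,f^*),
\]
so condition \eqref{eq:kmeans_rounding_facility} is satisfied with $\delta:=\Gamma$.

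Finally, to form the set $\F_2'$ of size exactly $k_1$, I would gather the chosen representatives and use \Cref{lemma:groupwiseaggregation} to deduplicate them, grouping by the ID of the chosen facility and retaining one representative per group. If the resulting set has size less than $k_1$, I would pad it with arbitrary facilities from $\F_2\setminus \F_2'$: sorting the facilities of $\F_2$ (using the $O(1)$-round sort of \cite{goodrich2011sorting}) and selecting the first $k_1-|\F_2'|$ unused ones via a prefix-count assignment runs in $O(1)$ MPC rounds with fully scalable memory. The only subtle point is the coordination between deduplication and padding — multiple facilities of $\F_1$ may pick the same target, so we must correctly count the number of distinct facilities selected before padding — but both operations are standard $O(1)$-round MPC primitives. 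All preceding steps also take $O(1)$ rounds, yielding the claimed $O(1)$-round complexity.
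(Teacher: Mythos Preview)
Your approach is essentially the same as the paper's: build the LSH graph of \Cref{lemma:graphapprox}, use \Cref{lemma:basicMins} to extract a $\delta$-approximate nearest $\F_2$-neighbor for each $f\in\F_1$, and then pad $\F_2'$ to size $k_1$ via sorting and a constant-depth aggregation tree. Your sandwich-bound derivation of $\delta=\Gamma$ is more explicit than the paper's, and your deduplication/padding argument matches the paper's tree-based selection.

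There is one technical slip in how you invoke \Cref{lemma:basicMins}. That lemma returns the $\ell$ smallest \emph{input values} $A_v$ over the \emph{unweighted} $t$-hop neighborhood $N^t(u)$. With your choice $A_{f'}=\{\mathrm{id}(f')\}$ for $f'\in\F_2$ and $A_v=\{+\infty\}$ otherwise, the output at $f$ is the smallest identifier among all $\F_2$-nodes reachable in $\leq 2$ hops, not the one minimizing the weighted two-hop distance $d_G^{(2)}(f,\cdot)$. Since $G$ contains edges across all $O(\log n)$ distance scales, an $\F_2$-node within two hops can be arbitrarily far in Euclidean distance, and your middle inequality $d_G^{(2)}(f,f_2'(f))\leq d_G^{(2)}(f,f^*)$ is not justified by that call. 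The standard fix (which the paper implicitly relies on and uses explicitly in the proof of \Cref{lemma:fl-stepVI}) is to run the aggregation in parallel once per weight scale $2^\ell$, restricting to edges of weight at most $2^\ell$, and take for each $f$ the smallest $\ell$ at which some $\F_2$-node appears in its restricted $2$-hop neighborhood; this yields the desired $\Gamma$-approximate nearest neighbor in $O(1)$ rounds with only an $O(\log n)$ blow-up in global memory.
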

\begin{proof}
    The objective is to compute a set $\F_2' \subseteq \F_2$ of size $k_1$ such that for every facility $f \in \F_1$, there exists a facility $f_2'(f) \in \F_2$ included in $\F_2'$ that serves as a $\delta$-approximate nearest neighbor of $f$. If the total number of selected facilities is less than $k_1$, we complete the set $\F_2'$ by arbitrarily adding facilities from $\F_2 \setminus \F_2'$ until the desired size is reached.

    To compute the approximate nearest neighbors efficiently in the MPC model, we use the graph $G$ constructed as described in \Cref{lemma:graphapprox}. This graph is a sparse spanner, constructed via locality-sensitive hashing (LSH), where an edge $(f, f')$ exists if the distance $\dist(f, f')$ is known and sufficiently small. 
    
    We then apply \Cref{lemma:basicMins} with parameter $\ell = \mathcal{O}(1)$ to each $f \in \F_1$ to compute a facility $f_2'(f) \in \F_2$ such that
    \[
    \dist(f, f_2'(f)) \leq \delta \cdot \min_{f' \in \F_2} \dist(f, f'),
    \]
    where $\delta$ is the approximation factor guaranteed by the graph construction. According to \Cref{lemma:basicMins}, this step can be executed in $\mathcal{O}(1)$ MPC rounds using $\tilde{\mathcal{O}}(n)$ global memory.
    
    Next, using a convergecast over the facilities in $\F_2$, we compute the cardinality of the set $\F_2'$. If $|\F_2'| < k_1$, we need to add $k_1 - |\F_2'|$ additional facilities from $\F_2 \setminus \F_2'$ to ensure that the final set $\F_2'$ has exactly $k_1$ facilities. Since these additional facilities can be selected arbitrarily, we use a greedy selection strategy that can be implemented efficiently in the MPC model.

    To do so, we first sort the facilities in $\F_2$ according to a fixed total order (e.g., their IDs or spatial coordinates) in $\mathcal{O}(1)$ MPC rounds using standard parallel sorting algorithms. The sorted facilities are then evenly assigned to the machines. Each machine has a local memory of $O(n^\sigma)$ for some constant $\sigma \in (0,1)$, and there are at most $n$ facilities in $\F_2$, which guarantees that the total number of machines is $O(n^{1 - \sigma})$. This induces a logical tree over the machines with fan-out $n^\sigma$ and depth $O(1/\sigma) = O(1)$.
    
    In the first phase (bottom-up), we perform a convergecast operation along this tree to compute, for each internal node, the number of unselected facilities stored in its subtree. Since the tree has constant height, this aggregation step requires only $\mathcal{O}(1)$ MPC rounds.
    
    In the second phase (top-down), we traverse the tree in a greedy fashion to select the required number of additional facilities. The root node begins with the total demand of $k_1 - |\F_2'|$ and distributes this demand among its children proportionally to the number of unselected facilities in each subtree (or arbitrarily, if selection is unconstrained). This process is recursively continued by each internal node down to the leaves. When a leaf is reached, the machine locally selects the requested number of facilities from its memory.
    
    Because the tree has constant height and each level of communication can be executed in $\mathcal{O}(1)$ MPC rounds, the entire padding step also completes in $\mathcal{O}(1)$ rounds. At the end of this process, the set $\F_2'$ contains exactly $k_1$ facilities, as required.
\end{proof}

We now consider Step (III) of the $k$-means algorithm described in \Cref{sec:kmeans-highlevel-alg}. In this step, the final set of $k$ centers is constructed by sampling from the facility sets $\F_1$, $\F_2'$, and $\F_2 \setminus \F_2'$, according to a convex combination determined by $a := \frac{k_2 - k}{k_2 - k_1}$ and $b := \frac{k - k_1}{k_2 - k_1} = 1 - a$. We show that this step can be implemented efficiently in the MPC model.

\begin{lemma}\label{lemma:kmeans-stepIII}
    Step (III) of the $k$-means algorithm described in \Cref{sec:kmeans-highlevel-alg} can be implemented in $\mathcal{O}(1)$ MPC rounds.
\end{lemma}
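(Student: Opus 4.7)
The plan is to execute each sub-task of Step (III) using only standard MPC primitives (broadcasting, sorting, aggregation, and groupwise lookups), each of which runs in $O(1)$ rounds in the fully scalable MPC model. First I would note that $k_1 = |\F_1|$ and $k_2 = |\F_2|$ are already available from Step (I), so the scalars $a = (k_2-k)/(k_2-k_1)$ and $b = (k-k_1)/(k_2-k_1)$ can be computed on a single machine and broadcast to every machine along an $O(1)$-depth aggregation tree of fan-out $n^{\sigma}$ in $O(1)$ rounds. A single coordinator machine then samples the Bernoulli$(a)$ coin $X$ used in Substep~(A) and broadcasts the outcome $X \in \{\F_1, \F_2'\}$ to all machines in one further round.

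For Substep~(B), the sampling of a uniformly random subset $R \subseteq \F_2 \setminus \F_2'$ of size exactly $k - k_1$ is the most substantive sub-step but still admits a constant-round implementation. The plan is to assign each facility $f \in \F_2 \setminus \F_2'$ an independent uniform random key $\pi(f)$ of $O(\log n)$ bits (enough to avoid ties with high probability), sort the facilities in $\F_2 \setminus \F_2'$ by these keys using the $O(1)$-round MPC sort of Goodrich et al.~\cite{goodrich2011sorting}, and then mark the first $k - k_1$ facilities in the sorted order as members of $R$. This yields a uniformly random $(k-k_1)$-subset and runs in $O(1)$ rounds with $\tilde{O}(n)$ total memory.

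To compute the final set $Z$ and the assignment $\phi(c)$, I would next ensure that each facility $f \in \F_1 \cup \F_2$ carries a Boolean flag indicating whether it belongs to $Z$, and each client $c$ learns, for its previously computed facilities $f_c^{(1)}$, $f_c^{(2)}$, and $f_2'(f_c^{(1)})$, whether they are in $Z$ and whether $f_c^{(2)} \in \F_2'$. Membership in $\F_2'$ is recorded per facility after Step~(II), and membership of $f_c^{(2)}$ in $R$ is determined during Substep~(B). Since each client stores the identifiers of $f_c^{(1)}$, $f_c^{(2)}$, and $f_2'(f_c^{(1)})$, I would use a constant number of groupwise lookups (via \Cref{lemma:groupwiseaggregation}) keyed by facility ID to propagate the two Boolean flags from each referenced facility to every client that references it. Once a client holds these flags together with the broadcast value of $X$, it determines $\phi(c)$ locally by applying the three-case rule of Step~(III), with no further communication.

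Adding up the rounds consumed by the scalar broadcasts, the random-key sort, the uniform sampling, and the constant number of groupwise lookups yields an overall round complexity of $O(1)$, while the total memory overhead remains a constant factor above what was needed in Steps~(I)--(II). The main (though mild) obstacle is only the uniform sampling of exactly $k - k_1$ facilities without replacement, which could otherwise require multiple rounds if implemented naively; routing the random keys through the Goodrich sort and then cutting at the prefix of length $k - k_1$ resolves this cleanly in $O(1)$ rounds.
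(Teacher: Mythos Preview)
Your proposal is correct and achieves the same $O(1)$-round bound, but your implementation of Substep~(B) differs from the paper's. The paper samples the $k-k_1$ facilities from $\F_2\setminus\F_2'$ by building an $O(1)$-depth aggregation tree over the machines, doing a bottom-up convergecast to learn subtree sizes, and then a top-down pass in which each internal node splits its quota among its children according to a multivariate hypergeometric draw; the leaves then sample locally. You instead assign each facility an independent random $O(\log n)$-bit key, sort by key in $O(1)$ rounds via~\cite{goodrich2011sorting}, and take the length-$(k-k_1)$ prefix. Both are valid: your route is arguably more elementary and reuses the sorting primitive directly, while the paper's tree-based scheme avoids the (harmless) tie-probability analysis and makes the per-machine work for sampling explicit. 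Your treatment of the client-side assignment is also slightly more detailed than the paper's, which simply notes that the required flags are already available from Steps~(I) and~(II); your use of \Cref{lemma:groupwiseaggregation} to push the membership bits to the referencing clients is a clean way to spell this out.
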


\begin{proof}
    The goal of this step is to sample a final set $Z$ of $k$ facilities that defines the output clustering. This is done in two stages:

    \begin{enumerate}[(A)]
        \item With probability $a := \frac{k_2 - k}{k_2 - k_1}$, the set $\F_1$ is selected (i.e., all facilities in $\F_1$ are added to $Z$). Otherwise, with probability $b := 1 - a$, the set $\F_2'$ is selected and added to $Z$.
        \item Additionally, $k - k_1$ facilities are sampled uniformly at random from $\F_2 \setminus \F_2'$ and added to $Z$.
    \end{enumerate}
    
    Both of these sampling steps can be implemented in constant MPC rounds as follows:
    
    \paragraph{Step (A): Sampling either $\F_1$ or $\F_2'$.}
    Since the sampling decision is based on a single global coin toss (deciding whether to choose $\F_1$ or $\F_2'$), we can broadcast the outcome of this coin toss to all machines in $\mathcal{O}(1)$ rounds. The selected set is then included in $Z$ by every machine that holds elements of it.

    \paragraph{Step (B): Uniform Sampling of $k - k_1$ Facilities from $\F_2 \setminus \F_2'$.}
Unlike Step (II), where additional facilities were added arbitrarily, here we must sample exactly $k - k_1$ facilities uniformly at random from $\F_2 \setminus \F_2'$.

To do this efficiently in MPC, we distribute the elements of $\F_2 \setminus \F_2'$ across machines and organize these machines into a virtual tree of height $\mathcal{O}(1)$ and fan-out $n^{\sigma}$, where $\sigma \in (0,1)$ is determined by the available local memory $O(n^\sigma)$. The sampling process proceeds in two phases:

\begin{itemize}
    \item \textbf{Bottom-Up (Convergecast Phase).}  
    Each machine computes the number of unmarked facilities it stores locally. This count is aggregated up the tree using a convergecast to compute, for each internal node, the total number of facilities in its subtree. This requires only $\mathcal{O}(1)$ rounds because the tree has constant height.

    \item \textbf{Top-Down (Sampling Phase).}  
    Once the subtree sizes are known, we begin the sampling process at the root. The root node samples how many of the $k - k_1$ points should be taken from each of its child subtrees, proportional to the size of each subtree. These target counts are passed recursively down the tree. At the leaves (i.e., the machines storing the facilities), each machine samples the required number of facilities uniformly at random from its local memory.
    \end{itemize}
    
    This two-phase sampling procedure ensures that the final $k - k_1$ facilities are drawn uniformly at random from $\F_2 \setminus \F_2'$ and completes in $\mathcal{O}(1)$ MPC rounds due to the bounded tree height and constant-round communication per level.
    
    \paragraph{Final Client Assignments.}
    Once the full set $Z$ of $k$ centers has been determined, each client $c$ must be connected to a facility $\phi(c) \in Z$ according to the following rules:
    \begin{itemize}
        \item If $f_c^{(2)} \in \F_2'$ and $\F_1$ is chosen in Step (A), then $\phi(c) := f_c^{(1)}$.
        \item If $f_c^{(2)} \in \F_2'$ and $\F_2'$ is chosen in Step (A), then $\phi(c) := f_c^{(2)}$.
        \item If $f_c^{(2)} \in \F_2 \setminus \F_2'$:
        \begin{itemize}
            \item If $f_c^{(2)}$ was among the $k - k_1$ sampled facilities in Step (B), then $\phi(c) := f_c^{(2)}$.
            \item If not, and $\F_1$ was chosen in Step (A), then $\phi(c) := f_c^{(1)}$.
            \item Otherwise, set $\phi(c) := f_2'(f_c^{(1)})$.
        \end{itemize}
    \end{itemize}
    
    All of this logic can be implemented locally using previously computed mappings (from Steps I and II) and a broadcasted sampling outcome. Therefore, this final step also requires only $\mathcal{O}(1)$ MPC rounds.
\end{proof}

\begin{lemma}\label{lem:highprob}
By repeating the randomized steps of the k-means algorithm explained in \Cref{sec:kmeans-highlevel-alg} independently $O(\log n)$ times and selecting the solution with the minimum total connection cost among these repetitions, we can ensure that with high probability, the total cost of the solution is within a constant factor $K$ of the optimal cost.
\end{lemma}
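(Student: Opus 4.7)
The plan is to boost the expected constant-factor guarantee of \Cref{thm:kmeansapprox} into a high-probability guarantee by standard probability amplification. Let $C := (2+4\delta^2)\cdot \Lambda$ denote the expected approximation factor from \Cref{thm:kmeansapprox}, and let $\mathsf{OPT}$ denote the optimal $k$-means cost of the given input. By \Cref{thm:kmeansapprox} we have $\mathbb{E}[\cost(\phi)] \leq C \cdot \mathsf{OPT}$, where $\phi$ denotes the assignment returned by a single execution of the algorithm of \Cref{sec:kmeans-highlevel-alg}. Since $\cost(\phi) \geq 0$, Markov's inequality gives
\[
\Pr\!\left[\cost(\phi) > 2C\cdot \mathsf{OPT}\right] \leq \frac{1}{2}.
\]
Hence, for a suitably large constant $c > 0$, running the algorithm $t = c\log n$ times independently in parallel produces independent solutions $\phi_1,\dots,\phi_t$, and the probability that none of them satisfies $\cost(\phi_i) \leq 2C\cdot \mathsf{OPT}$ is at most $2^{-c\log n} = n^{-c}$.

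The second ingredient is to evaluate the $t$ candidate solutions and pick the best one. We cannot afford to compute exact Euclidean costs because this would require $\Omega(nk)$ distance evaluations per candidate. Instead, for each candidate center set $Z_i \subseteq \R^d$ (with $|Z_i|=k$), and each input point $p\in P$, we use the sparse LSH-spanner $G$ from \Cref{lemma:graphapprox} together with \Cref{lemma:basicMins} to find, in $O(1)$ MPC rounds, a center $z_i(p) \in Z_i$ such that $\dist(p,z_i(p)) \leq \Gamma \cdot \dist(p, Z_i)$. Summing the squared distances $\dist^2(p,z_i(p))$ over all $p\in P$ via \Cref{lemma:groupwiseaggregation} yields an estimate $\widetilde{\cost}(Z_i)$ satisfying $\cost(Z_i) \leq \widetilde{\cost}(Z_i) \leq \Gamma^2\cdot \cost(Z_i)$. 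We then output the solution minimizing $\widetilde{\cost}(Z_i)$, say $Z^*$. Denoting by $Z_{\mathrm{best}}$ the solution with minimum true cost among the $t$ candidates, we have
\[
\cost(Z^*) \;\leq\; \widetilde{\cost}(Z^*) \;\leq\; \widetilde{\cost}(Z_{\mathrm{best}}) \;\leq\; \Gamma^2 \cdot \cost(Z_{\mathrm{best}}) \;\leq\; 2\Gamma^2 C \cdot \mathsf{OPT}
\]
with probability at least $1 - n^{-c}$. Setting $K := 2\Gamma^2 C$ yields the claim.

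The $t = O(\log n)$ parallel executions increase the global memory by only an $O(\log n)$ factor, which is absorbed into the $O(n^{1+\eps})$ bound by slightly decreasing $\eps$; they also do not increase the round complexity because they run concurrently. The main subtlety, rather than an obstacle, is that we evaluate costs only approximately through the LSH spanner, but this loses just an additional $\Gamma^2 = O(1)$ factor in the final approximation ratio, which is harmless since $K$ is allowed to be any constant.
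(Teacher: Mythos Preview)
Your proof is correct and follows essentially the same approach as the paper: Markov's inequality on the expected bound from \Cref{thm:kmeansapprox} gives a success probability of at least $1/2$ per run, $O(\log n)$ independent parallel repetitions amplify this to high probability, and the best candidate is selected via an approximate cost evaluation through the LSH spanner of \Cref{lemma:graphapprox}, losing only an additional constant factor. Your write-up is in fact slightly more explicit than the paper's about how the approximate cost is computed (nearest-center lookup via \Cref{lemma:basicMins} and aggregation via \Cref{lemma:groupwiseaggregation}) and about the $O(\log n)$ global-memory overhead, but the underlying argument is the same.
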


\begin{proof}
From \Cref{thm:kmeansapprox}, we know that the expected total connection cost $\mathbb{E}[\text{COST}]$ of a single run of k-means algorithm satisfies:
\begin{align*}
    \mathbb{E}[\text{COST}] \leq C \cdot \text{OPT},
\end{align*}
where $C$ is the approximation factor, and $\text{OPT}$ is the optimal cost of the $k$-means problem.

Our goal is to strengthen this guarantee from expectation to high probability. To achieve this, we employ probabilistic analysis using Markov's inequality.

\paragraph{Bounding the Probability for a Single Run.}
Using Markov's inequality, the probability that the total cost $\text{COST}$ exceeds $K \cdot \text{OPT}$ in a single run is:
\begin{align*}
    \Pr\left[ \text{COST} > K \cdot \text{OPT} \right] \leq \frac{\mathbb{E}[\text{COST}]}{K \cdot \text{OPT}} \leq \frac{C}{K}.
\end{align*}
By choosing $K = 2C$, we obtain:
\begin{align*}
    \Pr\left( \text{COST} > K \cdot \text{OPT} \right) \leq \frac{1}{2}.
\end{align*}

\paragraph{Repeating the Algorithm.}
We independently repeat the randomized portion of \Cref{sec:kmeans-highlevel-alg} $T = \lambda \log n$ times in parallel, where $\lambda \geq 1$ is a constant to be determined later.

\paragraph{Bounding the Probability Over Multiple Repetitions.}
Since the repetitions are independent, the probability that \emph{all} repetitions result in a total cost exceeding $K \cdot \text{OPT}$ is:
\begin{align*}
    \Pr\left( \bigcap_{t=1}^{T} \left\{ \text{COST}_t > K \cdot \text{OPT} \right\} \right) &= \left( \Pr\left( \text{COST}_t > K \cdot \text{OPT} \right) \right)^{T} \leq \left( \frac{1}{2} \right)^{T} = 2^{-T}.
\end{align*}
By setting $T = \lambda \log n$, we have:
\begin{align*}
    \Pr\left( \text{No run has } \text{COST}_t \leq K \cdot \text{OPT} \right) &\leq n^{-\lambda}.
\end{align*}

\paragraph{Selecting An Approximately Best Solution.}
By selecting the solution with the minimum total cost among the $T$ repetitions, we ensure that with high probability (at least $1 - 1/n^\lambda$), the total cost is at most $K \cdot \text{OPT}$. We cannot compute the exact cost of a given solution. By using our graph approximation of the Euclidean metric space (cf.\ \Cref{lemma:graphapprox,lemma:basicSums}), we can however compute the cost of a given solution up to a constant factor. We can therefore find an approximately best solution among the $T$ repetitions, which still leads to a constant approximation of the $k$-means problem w.h.p.

\paragraph{Conclusion.}
By choosing $\lambda \geq 2$, the failure probability becomes negligible for large $n$. Therefore, by repeating the algorithm $O(\log n)$ times and selecting the best solution, we obtain a solution whose total cost is within a constant factor $K$ of the optimal cost with high probability, completing the proof.
\end{proof}

\subsection{Proof of the Main Theorem}\label{sec:mainproof}
It remains to prove the main theorem that is stated in \Cref{sec:intro}. It follows by \Cref{thm:approx_fl_highlevel,thm:kmeansapprox} that the highlevel algorithms described in \Cref{sec:fl-highlevel,sec:kmeans-highlevel-alg} achieve a constant approximation for the facility location problem and the $k$-means problem. While \Cref{thm:kmeansapprox} only proves that the $k$-means algorithm achieves a constant approximation in expectation, by running the algorithm $O(\log n)$ in parallel (cf.\ \Cref{lem:highprob}), the constant approximation for the $k$-means problem can also be achieved w.h.p. In \Cref{lemma:fl-stepI,lemma:fl-stepII,lemma:fl-stepIII,lemma:fl-stepIV,lemma:fl-stepV,lemma:fl-stepVI,lemma:fl-stepVII}, it is shown that facility location algorithm of \Cref{sec:fl-highlevel} can be implemented in the MPC model in a fully scalable way in $O(\log\log n\cdot\log\log\log n)$ rounds. Similarly, in \Cref{lemma:kmeans-stepI,lemma:kmeans-stepII,lemma:kmeans-stepIII}, it is shown that the $k$-means algorithm of \Cref{sec:kmeans-highlevel-alg} can be implemented in $O(1)$ MPC rounds. Altogether, this proves the claim of the main theorem.\hspace*{\fill}\qed

\newpage
\bibliographystyle{alpha}
\bibliography{arxiv}

\end{document}